\newtheorem{theorem}{Theorem}
\newtheorem{lemma}[theorem]{Lemma}
\newtheorem{proposition}[theorem]{Proposition}
\newtheorem{definition}{Definition}
\newcommand{\field}[1]{\ensuremath{\mathbb{#1}}}
\newcommand{\N}{\ensuremath{\field{N}}} 
\newcommand{\R}{\ensuremath{\field{R}}} 
\newcommand{\Rp}{\ensuremath{\R_+}} 
\newcommand{\Z}{\ensuremath{\field{Z}}} 
\newcommand{\Zp}{\ensuremath{\Z_+}} 
\newcommand{\1}{\ensuremath{\mathbf{1}}} 
\newcommand{\I}[1]{\ensuremath{\mathbb{I}_{\left\{#1\right\}}}} 
\newcommand{\tends}{\ensuremath{\rightarrow}} 
\newcommand{\PR}{\ensuremath{\mathsf{P}}} 
\newcommand{\E}{\ensuremath{\mathsf{E}}} 
\newcommand{\defeq}{\ensuremath{\triangleq}}
\newcommand{\subjectto}{\ensuremath{\mathrm{subject\ to}}} 
\newcommand{\Bscr}{\ensuremath{\mathcal B}}
\newcommand{\Dscr}{\ensuremath{\mathcal D}}
\newcommand{\Escr}{\ensuremath{\mathcal E}}
\newcommand{\Fscr}{\ensuremath{\mathcal F}}
\newcommand{\Iscr}{\ensuremath{\mathcal I}}
\newcommand{\Jscr}{\ensuremath{\mathcal J}}
\newcommand{\Kscr}{\ensuremath{\mathcal K}}
\newcommand{\Lscr}{\ensuremath{\mathcal L}}
\newcommand{\Pscr}{\ensuremath{\mathcal P}}
\newcommand{\Rscr}{\ensuremath{\mathcal R}}
\newcommand{\Sscr}{\ensuremath{\mathcal S}}
\newcommand{\Vscr}{\ensuremath{\mathcal V}}
\newcommand{\Xscr}{\ensuremath{\mathcal X}}
\newcommand{\Zscr}{\ensuremath{\mathcal Z}}
\DeclareMathOperator{\diag}{diag}
\DeclareMathOperator*{\argmax}{\mathrm{argmax}}
\newcommand{\minimize}{\ensuremath{\mathop{\mathrm{minimize}}\limits}}
\newcommand{\maximize}{\ensuremath{\mathop{\mathrm{maximize}}\limits}}
\newcommand{\emailhref}[1]{\href{mailto:#1}{\tt #1}} 
\newcommand{\bD}{{\mathbf D}}
\newcommand{\bS}{{\mathbf S}}
\newcommand{\bd}{{\mathbf d}}
\newcommand{\by}{{\mathbf y}}
\newcommand{\bdx}{{\mathbf x}}
\newcommand{\dt}{\frac{d}{dt}}
\newcommand{\bzero}{{\mathbf 0}}
\newcommand{\bone}{{\mathbf 1}}
\newcommand{\zscr}{\ensuremath{\mathfrak z}}
\newcommand{\An}{{\sf A}}
\newcommand{\an}{{\sf a}}
\newcommand{\bb}{{\bf b}}
\newcommand{\Sb}{{\bf S}}
\newcommand{\beps}{\varepsilon} 
\newcommand{\FMS}{\mbox{\sf FMS}}
\newcommand{\Leff}{\mbox{\sf L}} 
\newcommand{\ALGP}{\mbox{\sf ALGP}}
\newcommand{\ALGD}{b\mbox{\sf ALGD}}
\newcommand{\FMSeq}[1]{\mbox{(\hyperref[#1]{F\ref*{#1}})}}
\newcommand{\PRIMAL}{\ensuremath{\mathop{\mathsf{PRIMAL}}}}
\newcommand{\DUAL}{\ensuremath{\mathop{\mathsf{DUAL}}}}
\newcommand{\MWUMa}{\textsf{MWUM}-$\alpha$\xspace}
\newcommand{\Sbb}{\Sb_\bb}
\newcommand{\fbb}{{\mathsf F}_\bb}
\newcommand{\fD}{\ensuremath{\mathfrak{D}}}
\newcommand{\fR}{\ensuremath{\mathfrak{R}}}
\newcommand{\CR}{{\sf CR}}
\newcommand{\CRE}{\CR^*}
\begin{document}
\begin{frontmatter}

\title{On the Flow-level Dynamics of a Packet-switched Network}
\runtitle{On the Flow-level Dynamics of a Packet-switched Network}

\begin{aug}
  \author{\fnms{Ciamac C.} \snm{Moallemi}
    \ead[label=e1]{ciamac@gsb.columbia.edu}
    \and
    \fnms{Devavrat} \snm{Shah}
    \thanksref{t2}
    \ead[label=e2]{devavrat@mit.edu}
  }
  \runauthor{C. C. Moallemi \& D. Shah}

  \affiliation{Columbia University and Massachusetts Institute of
    Technology}

  \thankstext{t2}{C. C. Moallemi is at the Graduate School of Business,
    Columbia University. D. Shah is with Laboratory for Information and
    Decision Systems, Massachusetts Institute of Technology. This work was
    supported in parts by NSF projects CNS 0546590 and TF 0728554. 
    Authors' email addresses: \emailhref{ciamac@gsb.columbia.edu},
    \emailhref{devavrat@mit.edu}. Initial version: November 10, 2009.}
\end{aug}

\maketitle

\begin{abstract}
  The packet is the fundamental unit of transportation in modern
  communication networks such as the Internet. Physical layer scheduling
  decisions are made at the level of packets, and packet-level models with
  exogenous arrival processes have long been employed to study network
  performance, as well as design scheduling policies that more efficiently
  utilize network resources.
  On the other hand, a user of the network is more concerned with
  end-to-end bandwidth, which is allocated through congestion
  control policies such as TCP. Utility-based flow-level models
  have played an important role in understanding congestion control protocols.
  In summary, these two classes of models have provided separate insights
  for flow-level and packet-level dynamics of a network.

  In this paper, we wish to study these two dynamics together. We propose a
  joint flow-level and packet-level stochastic model for the dynamics of a
  network, and an associated policy for congestion control and packet
  scheduling that is based on $\alpha$-weighted policies from the
  literature. We provide a fluid analysis for the model that establishes the
  throughput optimality of the proposed policy, thus validating prior insights
  based on separate packet-level and flow-level models. By analyzing a
  critically scaled fluid model under the proposed policy, we provide constant
  factor performance bounds on the delay performance and characterize the
  invariant states of the system.

\end{abstract}

\begin{keyword}
\kwd{Flow-level model}
\kwd{Packet-level model}
\kwd{Congestion control}
\kwd{Scheduling}
\kwd{Utility maximization}
\kwd{Back-pressure maximum weight}
\end{keyword}

\end{frontmatter}

\section{Introduction}

The optimal control of a modern, packet-switched data network can be
considered from two distinct vantage points. From the first point of view, the
atomic unit of the network is the \emph{packet}. In a packet-level model, the
limited resources of a network are allocated via the decisions on the
scheduling of packets. Scheduling policies for packet-based networks have been
studied across a long line of literature (e.g., \cite{TE92, Stolyar04,SW06}).
The insights from this literature have enabled the design of scheduling
policies that allow for the efficient utilization of the resources of a
network, in the sense of maximizing the throughput of packets across the
network, while minimizing the delay incurred by packets, or, equivalently, the
size of the buffers needed to queue packets in the network.

Packet-level models accurately describe the mechanics of a network at a
low level. However, they model the arrival of new packets to the network
exogenously. In reality, the arrival of new packets is also under the control
of the network designer, via rate allocation or congestion control decisions.
Moreover, while efficient utilization of network resources is a reasonable
objective, a network designer may also be concerned with the
satisfaction of end users of the network. Such objectives cannot directly be
addressed in a packet-level model.

Flow-level models (cf.\ \cite{KMT98, BM01}) provide a different point of view
by considering the network at a higher level of abstraction or, alternatively,
over a longer time horizon. In a flow-level model, the atomic unit of the
network is a \emph{flow}, or user, who wishes to transmit data from a source
to a destination. Resource allocation decisions are made via the allocation of
a transmission rate to each flow. Each flow generates utility as a function of
its rate allocation, and rate allocation decisions may be made so as to
maximize a global utility function. In this way, a network designer can
address end user concerns such as fairness.

Flow-level models typically make two simplifying assumptions. The first
assumption is that, as the number of flows evolves stochastically over time,
the rates allocated to flows are updated \emph{instantaneously}. The rate
allocation decision for a particular flow is made in a manner that requires
immediate knowledge of the demands of other flows for the limited transmission
resources along the flow's entire path. This assumption, referred to as
\emph{time-scale separation}, is based on the idea that flows arrive and
depart according to much slower processes than the mechanisms of the rate
control algorithm. The second assumption is that, once the rate allocation
decision is made, each flow can transmit data \emph{instantaneously} across
the network at its given rate. In reality, each flow generates discrete
packets, and these packets must travel through queues to traverse the network.
Moreover, the packet scheduling decisions within the network must be made in a
manner that is consistent with and can sustain the transmission rates
allocated to each flow, and the induced packet arrival process must not result
in the inefficient allocation of low level network resources.

In this paper, our goal is to develop a stochastic model that jointly captures
the packet-level and flow-level dynamics of a network, without any
assumption of time-scale separation. The contributions of this paper are as
follows: 

\begin{enumerate}
  \itemsep=0pt
\item We present a joint model where the dynamic evolution of flows and
  packets is simultaneous. In our model, it is possible to simultaneously seek
  efficient allocation of low level network resources (buffers) while
  maximizing the high-level metric of end-user utility.

\item For our network model, we propose packet scheduling and rate allocation
  policies where decisions are made
  via myopic algorithms that combine the distinct insights of prior packet-
  and flow-level models. Packets are scheduled according to a maximum weight
  policy. The rate allocation decisions are completely local
  and distributed. Further, in long term (i.e.,
  under fluid
  scaling), the rate control policy exhibits the behavior of a primal
  algorithm for an appropriate utility maximization problem.
  
\item We provide a fluid analysis of the joint packet- and flow-level
  model. This analysis allows us to establish stability of the joint model and
  the throughput optimality of our proposed control policy.

\item We establish, using a fluid model under critical loading,
  a performance bound on our control policy under the metric of
  minimizing the outstanding number of packets and flows in the network (or, in
  other words, minimizing delay). We demonstrate that, for a class of
  \emph{balanced} networks, our control policy performs to within a constant
  factor of any other control policy.

\item Under critical loading, we characterize the invariant manifold of the
  fluid model of our control policy, as well as establishing convergence to
  this manifold starting from any initial state. These results, along with the
  method of Bramson \cite{Bramson98}, lead to the characterization of
  multiplicative state space collapse under \emph{heavy traffic
    scaling}. Further, we establish that the invariant states of the fluid
  model are asymptotically optimal under a limiting control policy.

\end{enumerate}

In summary, our work provides a joint dynamic flow- and packet-level model
that captures the microscopic (packet) and macroscopic (fluid, flow) behavior
of large packet-based communications network faithfully. The performance
analysis of our rate control and scheduling algorithm suggests that the
separate insights obtained for dynamic flow-level models \cite{KMT98, BM01}
and for packet-level models \cite{TE92, Stolyar04, SW06} indeed continue to
hold in the combined model.

The balance of the paper is organized as follows. In
Section~\ref{sec:lit-review}, we survey the related literature on flow- and
packet-level models. In Section~\ref{sec:model}, we introduce our network
model. Our network control policy, which combines features of maximum weight
scheduling and utility-based rate allocation is described in
Section~\ref{sec:mwum}. A fluid model is derived in
Section~\ref{sec:fluid-model}. Stability (or, throughput optimality) of the
network control policy is established in Section~\ref{sec:stability}. The
critically scaled fluid model is described in
Section~\ref{sec:critical-loading}. In Section~\ref{sec:balanced}, we provide
performance guarantees for balanced networks. The invariant states of the
critically scaled fluid model are described in Section~\ref{sec:invariant}.
Finally, in Section~\ref{sec:conclusion}, we conclude.

\subsection{Literature Review}\label{sec:lit-review}

The literature on scheduling in packet-level networks begins with
Tassiulas and Ephremides \cite{TE92}, who proposed a class of
`maximum weight' (MW) or `back-pressure' policies. Such policies assign a
weight to every schedule, which is computed by summing the number of packets
queued at links that the schedule will serve. At each instant of time, the
schedule with the maximum weight will be selected. Tassiulas and Ephremides
\cite{TE92} establish that, in the context of multi-hop wireless networks, MW
is throughput optimal. That is, the stability region of MW contains the
stability region of any other scheduling algorithm. This work was subsequently
extended to a much broader class of queueing networks by others (e.g.,
\cite{MAW96,DP00,TB00,DL05}).

By allowing for a broader class of weight functions, the MW algorithm can be
generalized to the family of so-called MW-$\alpha$ scheduling algorithms.
These algorithms are parameterized by a scalar $\alpha\in (0,\infty)$.
MW-$\alpha$ can be shown to inherit the throughput optimality of MW
\cite{KM01,Shah01} for all values of $\alpha\in (0,\infty)$. However, it has
been observed experimentally that the average queue length (or, `delay') under
MW-$\alpha$ decreases as $\alpha\tends 0^+$ \cite{KM01}. Certain delay
properties of this class of algorithms have been subsequently established
under a heavy traffic scaling \cite{Stolyar04,DL05,SW06}.

Flow-level models have received significant recent attention in the
literature, beginning with the work of Kelly, Maulloo, and Tan \cite{KMT98}.
This work developed rate-control algorithms as decentralized solutions to a
deterministic utility maximization problem. This optimization problem seeks to
maximize the utility generated by a rate allocation, subject to capacity
constraints that define a set of feasible rates. This work was subsequently
generalized to settings where flows stochastically depart and arrive
\cite{MR00,dVLK01,BM01}, addressing the question of the stability of the
resulting control policies. Fluid and diffusion approximations of the
resulting systems have been subsequently developed
\cite{KW04,KKLW04,YY08}. Under these stochastic models, flows are assumed to
be allocated rate as per the optimal solution of the utility maximization
problem instantaneously. Essentially, this \emph{time-scale separation}
assumption captures the intuition that the dynamics of the arrivals and
departures of flows happens on a much slower time-scale than the dynamics of
rate control algorithm.

In reality, flow arrivals/departures and rate control happen on the
\emph{same} time-scale. Various authors have considered this issue, in the
context of understanding the stability of the stochastic flow level models
without the time-scale separation assumption
\cite{LSS08,ES06,Stolyar05,Neely05,SS08}. Lin, Schroff, and Srikant
\cite{LSS08} assume a stochastic model of flow arrivals and departures as well
as the operation of a primal-dual algorithm for rate allocation. However,
there are no packet dynamics present. Other work \cite{ES06,
  Stolyar05,Neely05} has assumed that rate control for each type of flow is a
function of a local Lagrange multiplier; and a separate Lagrange multiplier is
associated with each link in the network. These multipliers are updated using
a maximum weight-type policy. In this line of work, Lagrange multipliers are
interpreted as queue lengths, but there are no actual packet-level dynamics
present. Further, these models lack flow-level dynamics as well. Thus, while
overall this collection of work is closest to the results of this paper, it
stops short of offering a complete characterization of a joint flow- and
packet-level dynamic model.

Finally, we take note of recent work by Walton \cite{Walton08}, which presents
a simple but insightful model for joint flow- and packet-level dynamics. In
this model, each source generates packets by reacting to the acknowledgements
from its destination, and at each time instant, each source has at most one
packet in flight. Under a many-source scaling for a specific network topology,
it is shown that the network operates with rate allocation as per the
proportional fair criteria. This work provides important intuition about the
relationship between utility maximization and the rate allocation resulting
from the packet-level dynamics in a large network. However, it is far from
providing a comprehensive joint flow- and packet-level dynamic model as well
as efficient control policy.

\section{Network Model}\label{sec:model}

In this section, we introduce our network model. This model captures both the
flow-level and the packet-level aspects of a network, and will allow us to
study the interplay between the dynamics at these two levels. In a nutshell,
flows of various types arrive according to an exogenous process and seek to
transmit some amount of data through the network. As in the standard
congestion control algorithm, TCP, the flows generate packets at their ingress
to the network. The packets travel to their respective destinations along
links in the network, queueing in buffers at intermediate locations. As each
packet travels along its route, it is subject to physical layer constraints,
such as medium access constraints, switching constraints, or constraints due
to limited link capacity.  A flow departs once all of its packets have been
sent.

In this section, we focus on the mechanics of the network that are independent
of the network control policy. In Section~\ref{sec:mwum}, we will propose a
specific network control policy to be applied in the context of this model.

\subsection{Network  Structure}

Consider a network consisting of a set $\Vscr$ of destination nodes, a set
$\Lscr$ of links, and a set $\Fscr$ of flow types.  Each flow type is
identified by a fixed given route starting at the source link $s(f)\in\Lscr$
and ending at the destination node $d(f)\in\Vscr$. At a given time, multiple
flows of a given type exist in the network, each flow injects packets into the
network.

The network maintains buffers for packets that are in transit across the
network. At each link, there is a separate queue for the packets corresponding
to each possible destination. Let $\Escr = \Lscr \times \Vscr$ denote the set of
all such queues, with each $e = (\ell, v)$ being the queue at link $\ell$ for
final destination $v$. Traffic in each queue is transmitted to the next hop
along the route to the destination, and leaves the network when it reaches the
destination. We define the routing matrix $R\in\{0,1\}^{\Escr\times\Escr}$ by
setting $R_{ee'}\defeq 1$ if the next hop for queue $e$ is queue $e'$, and
$R_{ee'}\defeq 0$ otherwise.  Traffic for a flow of type $f$ enters the
network in the ingress queue $\iota(f)\defeq \big(s(f),d(f)\big)\in\Escr$.
Define the ingress matrix $\Gamma \in \{0,1\}^{\Escr\times \Fscr}$ by setting
$\Gamma_{ef} \defeq 1$ if $\iota(f)=e$, and $\Gamma_{ef} \defeq 0$ otherwise.
We will assume that the routes are acyclic. In this case, we can define the
matrix
\begin{equation}\label{eq:xi-def}
\Xi \defeq \big(I-R^\top\big)^{-1} = I + R^\top + (R^\top)^2 + \cdots.
\end{equation}
Under the acyclic routing assumption, $\Xi_{e'e} = 1$ if and only if a packet 
arriving at queue $e$ will subsequently eventually pass through queue $e'$.

\subsection{Dynamics: Flow-Level}\label{sec:dyn-flow}

In this section, we will describe in detail the stochastic model for dynamics
of flows in the network. The system evolves in continuous time, with $t \in
[0,\infty)$ denoting time, starting at $t = 0$.  For each flow type
$f\in\Fscr$, let $N_f(t)$ denote the number of flows of type $f$ active at
time $t$. Flows of type $f$ arrive according to an independent Poisson process
of rate $\nu_f$. Flows of type $f$ receive an \emph{aggregate} rate of service
$X_f(t) \in [0,C]$ at time $t$. Here, $C>0$ is the maximal rate of service
that can be provided to any flow type. The total rate of service $X_f(t)$ is
divided equally amongst the $N_f(t)$ flows.  As flows are serviced, packets
are generated. The evolution of packets and flows proceeds according to the
following:
\begin{itemize}
  \itemsep=0pt
\item Packets are generated by all the flows of type $f$, in aggregate, as a
  time varying Poisson process of rate $X_f(t)$ at time $t$. If
  $N_f(t)=0$, then we require that $X_f(t) = 0$.
\item When a packet is generated by a flow of type $f$, it joins the
  ingress queue $\iota(f)\in\Escr$.
\item When a packet is generated by a flow of type $f$, the flow departs from
  the network with a probability of $0<\mu_f< 1$, independent of everything
  else. 
\end{itemize}
Thus, each flow of type $f$ generates a number of packets that is distributed
according to an independent geometric random variable with mean\footnote{ The
  assumption that $\mu_f < 1$ is equivalent to requiring that $1/\mu_f > 1$,
  i.e., each flow is expected to generate more than one packet. This is
  reasonable since we require flows to arrive with at least one packet and for
  there to be some variability in the number of packets associated with a
  flow.}  $1/\mu_f$, and the flow departure process for flows of type $f$ is a
Poisson process of rate $\mu_f X_f(t)$ at time $t$. We can summarize the flow
count process $N_f(\cdot)$ by the transitions
\[
N_f(t) \rightarrow
\begin{cases}
N_f(t) + 1 & \text{at rate $\nu_f$,} \\
N_f(t) - 1 & \text{at rate $\mu_f X_f(t)$.}
\end{cases}
\]

Define the \emph{offered load} vector $\rho\in\Rp^\Fscr$ by $\rho_f \defeq
\nu_f/\mu_f$, for each flow type $f$. Without loss of generality, we will make
the following assumptions:\footnote{In what follows, inequalities between
  vectors are to be interpreted component-wise. The vector $\bzero$ (resp.,
  $\bone$) is the vector where every component is $0$ (resp., 1), and whose
  dimension should be inferred from the context.}
\begin{itemize}
  \itemsep=0pt
\item $\rho > \bzero$, i.e., we restrict attention to flows with a non-trivial
  loading.
\item $\rho < C\1$, i.e., we assume that the maximal service rate $C$ is
  sufficient for the load generated by any single flow type.
\item $\Xi \Gamma \rho > \bzero$, i.e., we restrict attention to queues with a
  non-trivial loadings.
\end{itemize}

Denote by $\An_f(t)$ the cumulative number of flows of type $f$ that have
arrived in the time interval $[0,t]$. Denote the cumulative number of packets
generated by flows of type $f$ in the time interval $[0,t]$ by $A_f(t)$.  We
suggest that the reader take note of difference between $\An_f(\cdot)$ and
$A_f(\cdot)$. Let $D_f(t)$ denote the cumulative number of flows of type $f$
that have departed in the time interval $[0,t]$. The evolution of the flow count
for flow type $f$ over time can be written as
\begin{equation}\label{eq:prim-f}
N_f(t) = N_f(0) + \An_f(t) - D_f(t). 
\end{equation}

\subsection{Dynamics: Packet-Level}\label{sec:dyn-packet}

As we have just described, flows generate packets which are injected into the
network. These packets must traverse the links of the network from source to
destination. In this section, we describe the dynamics of packets in the
network.

We assume that each queue in the network is capable of transmitting at most 1
data packet per unit time. However, the collection of queues that can
simultaneously transmit is restricted by a set of scheduling
constraints. These scheduling constraints are meant to capture any limitations
of the network due to scarce resources (e.g., limited wireless bandwidth,
limited link capacity, switching constraints, etc.).

Formally, the scheduling constraints are described by the set $\Sscr\subset
\{0,1\}^\Escr$. Under a permissible schedule $\pi \in \Sscr$, a packet will be
transmitted from a queue $e\in\Escr$ if and only if $\pi_e=1$. We assume that
$\bzero\in\Sscr$. We require that each queue $e$ be served by \emph{some}
schedule, i.e., there exists a $\pi\in\Sscr$ with
$\pi_e = 1$.Further, we assume that $\Sscr$ is \emph{monotone}: if $\sigma \in
\Sscr$ and $\sigma' \in \{0,1\}^\Escr$ is such that $\sigma'_e \leq \sigma_e$
for every queue $e$, then $\sigma' \in \Sscr$.  Finally, denote by
$\Pi\in\{0,1\}^{\Escr\times\Sscr}$ the matrix with columns consisting of the
elements of $\Sscr$.

We assume that the scheduling of packets happens at every integer time. At a
time $\tau\in\Zp$, let $\pi(\tau)\in\Sscr$ denote the scheduled queues
for the time interval $[\tau,\tau+1)$. For each queue $e$, denote by
$Q_e(\tau^-)$ the length of the queue $e$ immediately prior to the time $\tau$
(i.e., before scheduling happens). The queue length evolves, for times
$t\in[\tau,\tau+1)$ according to\footnote{$\I{\cdot}$ denotes the indicator
  function.}
\[
\begin{split}
Q_e(t)  & \defeq   Q_e(\tau^-) - \pi_e(\tau) \I{Q_e(\tau^-) > 0}
+ \sum_{f\in\Fscr} \Gamma_{ef} \bigl(A_f(t) 
- A_f(\tau^-)\bigr) 
+ \sum_{e'\in\Escr} R_{e'e} \pi_{e'}(\tau) \I{Q_{e'}(\tau^-) > 0}.
\end{split}
\]
Here, for each flow type $f$, $A_f(\tau^-)$ is the cumulative number of
packets generated by flows of type $f$ in the time interval $[0,\tau)$.  The
term $\pi_e(\tau) \I{Q_e(\tau) > 0}$ enforces an idling constraint, i.e., if
queue $e$ is scheduled but empty, no packet departs.  Note that, over a time
interval $[\tau,\tau+1)$, we assume the transmission of packets already
present in the network occurs instantly at time $\tau$, while the arrival of
new packets to the network occurs continuously throughout the entire time
interval.

Finally, let $S_\pi(\tau)$ denote the cumulative number of time slots during
which the schedule $\pi$ was employed up to and including time $\tau$. Let
$Z_e(\tau)$ denote the cumulative idling time for queue $e$ up to and including
time $\tau$. That is,
\[
Z_e(\tau) \defeq \sum_{s=0}^\tau \sum_{\pi \in \Sscr}  \pi_e 
\big(S_\pi(s) - S_\pi(s-1)\big) \I{Q_e(s) = 0}.
\]
Then, the overall queue length evolution can be written in vector form as
\begin{equation}\label{eq:prim-q}
\begin{split}
Q(\tau+1) 
& = Q(0) - \big(I-R^\top\big) \Pi S(\tau) 
+ \big(I-R^\top\big) Z(\tau) 
+ \Gamma A(\tau+1), 
\end{split}
\end{equation}
where we define the vectors
\begin{align*}
Q(t) 
& \defeq \big[Q_e(t)\big]_{e\in\Escr},
&
A(t) 
& \defeq \big[A_f(t)\big]_{f\in\Fscr}, \\
S(\tau) & \defeq \big[S_\pi(\tau)\big]_{\pi \in \Sscr},
& 
Z(\tau) & \defeq \big[Z_e(t)\big]_{e\in\Escr}.
\end{align*}

\section{MWUM Control Policy}\label{sec:mwum}

A network control policy is a rule that, at each point in time, provides two
types of decisions: (a) the rate of service provided to each flow, and (b)
the scheduling of packets subject to the physical constraints in the network.
In Section~\ref{sec:model}, we described the stochastic evolution of flows and
packets in the network, taking as given the network control policy. In this
section, we describe a control policy called the \emph{maximum weight utility
  maximization-$\alpha$} (\MWUMa) policy. \MWUMa takes as a parameter a scalar
$\alpha\in(0,\infty)\setminus\{1\}$.

The \MWUMa policy is myopic and based only on local information.
Specifically, a flow generates packets at rate that is based on the queue
length at its ingress, and the scheduling of packets is decided as a function
of the effected queue lengths. 

At the flow-level, rate allocation decisions are made according to a per flow
utility maximization problem. Each flow chooses a rate so as to myopically
maximize its utility as a function of rate consumption, subject to a linear
penalty (or, `price') for consuming limited network resources. As in the case
of $\alpha$-fair rate allocation, the utility function is assumed to have a
constant relative risk aversion of $\alpha$. The price charged is a function
of the number of packets queued at the ingress queue associated with the flow,
raised to the $\alpha$ power.

At the packet-level, packets are scheduled according to a maximum
weight-$\alpha$ scheduling algorithm. In particular, each queue is assigned a
weight equal to the number of queued packets to the $\alpha$ power, and a
schedule is picked which maximizes the total weight of all scheduled queues.


\subsection{Control: Rate Allocation}

The first control decision we shall consider is that of \emph{rate
  allocation}, or, the determination of the aggregate rate of service
$X_f(t)$, at time $t$, for flows of type $f$. We will assume our network is
governed by a variant of an $\alpha$-fair rate allocation policy. This is as
follows:

Assume that each flow of type $f$ is allocated a rate $Y_f(t) \geq 0$ at time
$t$ by maximizing a (per flow) utility function that depends on the allocated
rate, subject to a linear penalty (or, cost) for consuming resources from the
limited capacity of the network.  In particular, we will assume a utility
function given a rate allocation of $y\geq 0$ to an individual flow of type
$f$ of the form
\[
V_f(y) \defeq
\frac{y^{1-\alpha}}{1-\alpha},
\]
for some $\alpha\in(0,\infty)\setminus\{1\}$.  This utility function is
popularly known as $\alpha$-fair in the congestion control literature
\cite{MW00}, and has a constant relative risk aversion of $\alpha$.  The
individual flow will be assigned capacity according to
\[
Y_f(t) \in \argmax_{y \geq 0}\ V_f(y) - Q^\alpha_{\iota(f)}(t) y.
\]
Here, $Q^\alpha_{i(f)}(t)$ represents a `price' or `congestion
signal'. Intuitively, a flow reacts to the congestion in the network (or lack
thereof) through the length of the ingress or `first-hop' queue $\iota(f)$.
Then, if $N_f(t)>0$, the aggregate rate $X_f(t)$ allocated to all flows of
type $f$ at time $t$ is determined according to
\[
\begin{split}
X_f(t) & = N_f(t) Y_f(t) 
= 
\argmax_{x \geq 0}\ 
\frac{x^{1-\alpha} N_f^\alpha(t)}{1-\alpha} - 
Q^\alpha_{\iota(f)}(t)\, x.
\end{split}
\]

If $N_f(t)=0$, we require that $X_f(t)=0$. Further, we will constrain the
overall rate allocated to flows of type $f$ by the constant $C$. Thus, rate
allocation is determined by the equation
\[
X_f(t) = 
\begin{cases}
\displaystyle
\argmax_{x\in [0,C]}\ 
\frac{x^{1-\alpha} N_f^\alpha(t)}{1-\alpha} - 
Q^\alpha_{\iota(f)}(t)\, x
& \text{if $N_f(t)>0$,}
\\
0 & \text{otherwise.}
\end{cases}
\]
Given the strictly concave nature of the objective in this
optimization program, it is clear that the maximizer is unique and
$X_f(t)$ is well-defined.

Denote by $\bar{X}_f(t)$ the cumulative rate allocation to flows of type $f$
in the time interval $[0,t]$, i.e.,
\[
\bar{X}_f(t) \defeq \int_{0}^t X_f(s)\,ds. 
\]
$\bar{X}_f(\cdot)$ is Lipschitz continuous and differentiable,
since $X_f(\cdot)$ is always bounded by $C$.  

\subsection{Control: Scheduling}

The second control decision that must be specified is that of scheduling. We
will assume the following variation of the `maximum weight' or
`back-pressure' policies introduced by Tassiulas and Ephremides \cite{TE92}.

At the beginning of each discrete-time slot $\tau\in\Zp$, a schedule
$\pi(\tau)\in\Sscr$ is chosen according to the optimization problem
\begin{equation}\label{eq:pi-control}
\begin{split}
\pi(\tau)  & \in
\argmax_{\pi\in\Sscr}\ 
\sum_{e\in\Escr} \pi_e \left[
Q_e^\alpha(\tau^-) - \sum_{e'\in\Escr} R_{ee'} Q^\alpha_{e'}(\tau^-)
\right]
\\
& =
\argmax_{\pi\in\Sscr}\ 
\pi^\top (I - R) Q^\alpha(\tau^-),
\end{split}
\end{equation}
where 
$Q^\alpha(\tau^-) \defeq \big[Q^\alpha_e(\tau^-)\big]_{e\in\Escr}$
is a vector of component-wise powers of queue lengths, immediately prior to
time $\tau$. In other words, the schedule $\pi(\tau)$ is chosen so as to
maximize the summation of weights of queues served, where weight of a queue $e
\in \Escr$ is given by $\big[(I-R)Q^\alpha(\tau^-)\big]_e$. Given that $\Sscr$
is monotone, there exists a $\pi \in \Sscr$ that maximizes this weight and is
such that $\pi_e = 0$ if $Q_e(\tau^-) = 0$.  We will restrict our attention to
such schedules only.  From this, it follows that the objective value of the
optimization program in \eqref{eq:pi-control} is always non-negative.

By the discussion above, it is clear that the following invariants are
satisfied:
\begin{enumerate}
  \itemsep=0pt
\item For any schedule $\pi$ and time $\tau$, $S_\pi(\tau) = S_\pi(\tau-1)$ if
  \[
  \pi^\top (I - R) Q^\alpha(\tau^-) < \sigma^\top (I - R) Q^\alpha(\tau^-),
  \]
  for some $\sigma \in \Sscr$.

\item For any queue $e$ and time $\tau$, $Z_e(\tau) = 0$.  In other words,
  there is no idling.
\end{enumerate}

\section{Fluid Model}\label{sec:fluid-model}


In this section, we introduce the fluid model of the our system.  As we shall
see, the allocation of rate to flows in the fluid model resembles rate
allocation of `flow-level' models that has been popular in the literature
\cite{MR00,BM01}.  In that sense, our model on original time-scale operates at
a packet-level granularity and on the fluid-scale operates at a flow-level
granularity.

\subsection{Fluid Scaling and Fluid Model  Equations} 

In order to introduce the fluid model of our network, we will consider the
scaled version of the system.  To this end, denote the overall system state at
a time $t\geq 0$ by
\[
\Zscr(t) \defeq \Big(Q(t),Z(\lfloor t \rfloor), 
N(t),S(\lfloor t \rfloor),\bar{X}(t), \An(t), D(t), A(t)\Big).
\]
Here, the components of the state $\Zscr(t)$ are the primitives introduced in
Sections~\ref{sec:dyn-flow} and \ref{sec:dyn-packet}.  That is, at times
$t\in\Rp$, we have,
\[
\begin{array}{l@{\ }l@{\qquad}l}
Q(t) & \defeq \big[Q_e(t)\big]_{e\in\Escr},
& \text{where
$Q_e(t)$ is the length of queue $e$},
\\[0.05in]
N(t) & \defeq \big[N_f(t)\big]_{f\in\Fscr},
& \text{where $N_f(t)$ 
is the number of flows of type $f$},
\\[0.05in]
\bar{X}(t) &
\defeq \big[\bar{X}_f(t)\big]_{f\in\Fscr},
&
\text{where
$\bar{X}_f(t)$ is the cumulative rate allocated to flow type
  $f$},
\\[0.05in]
\An(t) & \defeq \big[\An_f(t)\big]_{f\in \Fscr},
& \text{where
$\An_f(t)$  is the cumulative arrival count of flow type
$f$},
\\[0.05in]
D(t) & \defeq \big[D_f(t)\big]_{f\in\Fscr},
& \text{where $D_f(t)$ is
the cumulative departure count of flow type $f$},
\\[0.05in]
A(t) & \defeq \big[A_f(t)\big]_{f\in \Fscr},
& \text{where
$A_f(t)$  is the cumulative packet arrival count of flow type $f$},
\end{array}
\]
and, at times $\tau\in\Zp$, we have
\[
\begin{array}{l@{\ }l@{\qquad}l}
Z(\tau) & \defeq \big[Z_e(\tau)\big]_{e\in\Escr},
&
\text{where
$Z_e(\tau)$ is the cumulative idleness for queue $e$},
\\[0.05in]
S(\tau) & \defeq \big[S_\pi(\tau)\big]_{\pi\in\Sscr},
&
\text{where
$S_\pi(\tau)$ is the cumulative time schedule
$\pi$ is employed}.
\end{array}
\]

Given a scaling parameter $r\in \R$, $r\geq 1$, define the scaled system state
as
\begin{equation}\label{eq:scaled-state}
\begin{split}
\Zscr^{(r)}(t)  \defeq
\Big( & Q^{(r)}(t),Z^{(r)}(t),
N^{(r)}(t),S^{(r)}(t), 
\bar{X}^{(r)}(t),
\An^{(r)}(t), D^{(r)}(t), A^{(r)}(t)\Big).
\end{split}
\end{equation}
Here, the scaled components are defined as
\[
\begin{array}{l@{\,}l@{\quad}l@{\,}l}
Q^{(r)}(t) & \defeq r^{-1} Q(rt), 
&
N^{(r)}(t) & \defeq r^{-1} N(rt), \\[4pt]
\bar{X}^{(r)}(t) & 
\defeq r^{-1} \bar{X}(rt), 
&
\An^{(r)}(t) & \defeq r^{-1} \An(rt), 
\\[4pt]
D^{(r)}(t) & \defeq r^{-1} D(rt), 
&
A^{(r)}(t) & \defeq r^{-1} A(rt),
\\[4pt]
Z^{(r)}(t) 
&
\multicolumn{3}{l}{
\!\!\!
\defeq r^{-1} 
\big[(rt - \lfloor rt \rfloor)Z(\lceil rt \rceil) 
+ (\lceil rt \rceil- rt)Z(\lfloor rt \rfloor)\big], 
}
\\[4pt]
S^{(r)}(t) 
&
\multicolumn{3}{l}{
\!\!\!
\defeq r^{-1} \big[(rt - \lfloor rt \rfloor)S(\lceil rt \rceil)
+ (\lceil rt \rceil- rt)S(\lfloor rt \rfloor)\big].
}
\end{array}
\]
In the above, the components $Z(\cdot)$ and $S(\cdot)$ are linearly
interpolated for technical convenience only.

Our interest is in understanding the behavior of $\Zscr^{(r)}(\cdot)$ as
$r\to\infty$. Roughly speaking, in this limiting system the trajectories will
satisfy certain deterministic equations, called fluid model
equations. Solutions to these equations, which are defined below, will be
denoted as fluid model solutions. The formal result is stated in
Theorem~\ref{thm:fms}.

\begin{definition}[\textbf{Fluid Model Solution}]\label{def:fms}
  \ 
  Given fixed initial conditions $q(0)\in\Rp^\Escr$ and $n(0)\in\Rp^\Fscr$,
  for every time horizon $T > 0$, let $\FMS(T)$ denote the set of all
  trajectories
\begin{multline}\label{eq:fms-path}
  \zscr(t) \defeq \big(q(t),z(t),n(t),s(t),\bar{x}(t), \an(t), d(t), a(t)\big)
  \\
  \in \mathfrak{Z} \defeq
  \Rp^\Escr\times\Rp^\Escr\times\Rp^\Fscr\times\Rp^\Sscr\times\Rp^\Fscr
  \times\Rp^\Fscr\times\Rp^\Fscr\times\Rp^\Fscr,
\end{multline}
over the time interval $[0,T]$ such that:
\newcounter{FMS-enum}
\begin{enumerate}[({F}1)]
  \itemsep=0pt
\item\label{enum:one} All components of $\zscr(t)$ are uniformly Lipschitz
  continuous and thus differentiable for almost every $t\in (0,T)$. Such
  values of $t$ are known as regular points.

\item\label{enum:two} For all $t\in [0,T]$,
$n(t) = n(0) + \an(t) - d(t)$.

\item\label{enum:three} For all $t \in [0,T]$, 
$\an(t) = \nu t$. 

\item\label{enum:four} For all $t\in [0,T]$,
$d(t) = \diag(\mu)\bar{x}(t)$.

\item\label{enum:five} For all $t\in [0,T]$, 
\[
q(t) = q(0) - \big(I - R^\top\big) \Pi s(t) + \big(I-R^\top\big) z(t)
+ \Gamma a(t). 
\]

\item\label{enum:six} For all $t\in [0,T]$, 
$a(t) = \bar{x}(t)$.

\item\label{enum:seven} For all $t\in [0,T]$,
$\bone^\top s(t) =  t$.

\item\label{enum:eight} Each component of $z(\cdot)$, $s(\cdot)$, and
  $\bar{x}(\cdot)$ is non-decreasing.
\setcounter{FMS-enum}{\value{enumi}}
\end{enumerate}

In addition, define the set $\FMS^\alpha(T)$ to be the subset of trajectories
in $\FMS(T)$ that also satisfy:
\begin{enumerate}[({F}1)]
  \itemsep=0pt
\setcounter{enumi}{\value{FMS-enum}}
\item\label{enum:flow} If $t\in(0,T)$ is a regular point, then for all
  $f\in\Fscr$,
{\small \[
x_f(t) =
\begin{cases}
\displaystyle
\argmax_{x\in[0,C]}\ 
\frac{x^{1-\alpha} n_f^\alpha(t)}{1-\alpha} - 
q^\alpha_{\iota(f)}(t)\, x
&
\text{if $n_f(t)>0$,}
\\
\nu_f/\mu_f ~(=\rho_f)
& \text{otherwise,}
\end{cases}
\] } where\footnote{We use the notation $\dot{\theta}(t)$ to denote
$\frac{d}{dt}\theta(t)$ for $\theta\colon [0,T] \to \R$.} $x_f(t) \defeq
\dot{\bar{x}}_f(t)$.

\item\label{enum:packet} If $t\in(0,T)$ is a regular point, then for all $\pi\in\Sscr$,
$\dot{s}_\pi(t) = 0$,
if 
\[
\pi^\top (I - R) q^\alpha(t) 
< \max_{\sigma\in\Sscr}\ \sigma^\top (I - R) q^\alpha(t).
\]
\item\label{enum:nzero} If $t\in(0,T)$ is a regular point and $n_f(t)=0$ for
  some $f\in\Fscr$, then $q_{\iota(f)}(t)=0$.
\item\label{enum:idling} For all $t \in [0,T]$, 
$z(t) = \bzero$.
\end{enumerate}
\end{definition}

Note that \FMSeq{enum:one}--\FMSeq{enum:eight} correspond to fluid model
equations that must be satisfied under \emph{any} scheduling policy, and,
hence, are \emph{algorithm independent} fluid model equations. On the other
hand, \FMSeq{enum:flow}--\FMSeq{enum:idling} are particular to networks
controlled under the \MWUMa policy. \FMSeq{enum:flow} captures the long-term
effect of the rate allocation mechanism through the $\alpha$-fair utility
maximization based policy. Indeed, in a static resource allocation model,
\FMSeq{enum:flow} can be thought of as the \emph{primal} update in an
algorithm that seeks to allocate rates to maximize the net $\alpha$-fair
utility of the flows subject to capacity constraints. \FMSeq{enum:packet}
captures the effect of short-term packet-level behavior induced by the
scheduling algorithm. Specifically, the characteristics of the MW-$\alpha$
packet scheduling algorithm are captured by this equation.

\subsection{Formal Statement}\label{sec:formal-fms}

We wish to establish fluid model solutions as limit of the scaled system state
process $\Zscr^{(r)}(\cdot)$ as $r\to \infty$.  To this end, fix a time
horizon $T>0$. Let $\bD[0,T]$ denote the space of all functions from $[0,T]$
to $\mathfrak{Z}$, as defined in \eqref{eq:fms-path}, that are right
continuous with left limits (RCLL). We will denote the Skorohod metric on this
space by $\bd(\cdot,\cdot)$ (see Appendix~\ref{ap:fluid} for details). Given a
fixed scaling parameter $r$, consider the scaled system dynamics over interval
$[0,T]$. Each sample path $\Zscr^{(r)}(\cdot)$ of the system state is RCLL,
and hence is contained in the space $\bD[0,T]$.

The following theorem formally establishes the convergence of the scaled
system process to a fluid model solution of the form specified in
Definition~\ref{def:fms}. 
\begin{theorem}\label{thm:fms} 
  Given a fixed time horizon $T > 0$, consider a collection of scaled system
  state processes $\{\Zscr^{(r)}(\cdot)\ :\ r\geq 1\} \subset \bD[0,T]$ 
  under an arbitrary control policy. Suppose the initial conditions
\begin{equation}\label{eq:init}
 \lim_{r\to\infty} Q^{(r)}(0) = q(0),\quad
 \lim_{r\to\infty} N^{(r)}(0) = n(0),
\end{equation}
are satisfied with probability $1$. Then, for any $\beps>0$,
\[
 \liminf_{r\to \infty}\  
 \PR\Bigl(\Zscr^{(r)}(\cdot) \in \FMS_\beps(T)\Bigr) = 1.
\]
Here, $\FMS_\beps(T)$ is an $\beps$-flattening of the set $\FMS(T)$ of fluid
model solutions, i.e.,
\[
 \FMS_\beps(T) \defeq\{ \bdx \in \bD[0,T]\ :\ 
 \bd(\bdx, \by) < \beps,\ 
 \by \in \FMS(T) 
 \}.
\]
Additionally, under the \MWUMa control policy, we have that
\[
 \liminf_{r\to \infty}\ 
 \PR\Bigl(\Zscr^{(r)}(\cdot) \in \FMS_\beps^\alpha(T)\Bigr) = 1.
\]
Here, $\FMS_\beps^\alpha(T)$ is an $\beps$-flattening of the set
$\FMS^\alpha(T)$ of \MWUMa fluid model solutions, i.e.,
\[
 \FMS_\beps^\alpha(T) \defeq\{ \bdx \in \bD[0,T]\ :\ 
 \bd(\bdx, \by) < \beps,\ 
 \by \in \FMS^\alpha(T) 
 \}.
\]
\end{theorem}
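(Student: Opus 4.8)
The plan is to follow the standard two-step program for establishing fluid limits of stochastic networks: first, extract convergent subsequences of the scaled sample paths by a compactness/tightness argument, and second, identify every limit point as a fluid model solution by passing to the limit in the scaled dynamical equations. Concretely, I would begin with the functional strong law of large numbers (FSLLN) for the exogenous primitives. The flow-arrival process $\An(\cdot)$ is Poisson of rate $\nu_f$, so $\An^{(r)}(t)\to\nu t$ uniformly on $[0,T]$ a.s., giving \FMSeq{enum:three}. The packet-generation process $A_f(\cdot)$ is a Poisson process time-changed by $\bar X_f(\cdot)$ with a thinning of probability $\mu_f$ governing departures; since $X_f$ is uniformly bounded by $C$, a random-time-change argument (e.g.\ the functional LLN for Poisson processes composed with Lipschitz time changes) yields $A^{(r)}(t)-\bar X^{(r)}(t)\to 0$ and $D^{(r)}(t)-\diag(\mu)\bar X^{(r)}(t)\to 0$ uniformly, which are \FMSeq{enum:six} and \FMSeq{enum:four}. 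Equation \FMSeq{enum:two} is then just the scaled version of \eqref{eq:prim-f}, and \FMSeq{enum:five} is the scaled version of \eqref{eq:prim-q}. The relation $\bone^\top S(\tau)=\tau$ holds at the prelimit level (exactly one schedule is used per slot, and with no idling in the \MWUMa case, but in general $\bzero\in\Sscr$ so $\bone^\top S(\tau)=\tau$ still holds because we count the null schedule as a schedule); dividing by $r$ gives \FMSeq{enum:seven}. Monotonicity \FMSeq{enum:eight} is inherited componentwise from the prelimit processes.

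For the compactness step, I would show the family $\{\Zscr^{(r)}(\cdot)\}$ is a.s.\ relatively compact in $\bD[0,T]$ under the Skorohod metric. Each coordinate is either (i) a scaled counting/cumulative process with uniformly bounded rate — $\bar X^{(r)}$, $S^{(r)}$, $Z^{(r)}$ are Lipschitz with constant independent of $r$ (rate $\le C$, $\le 1$, $\le 1$ respectively); $\An^{(r)}, A^{(r)}, D^{(r)}$ are asymptotically Lipschitz by the FSLLN above — or (ii) expressible through the algebraic identities \eqref{eq:prim-f}, \eqref{eq:prim-q} in terms of the former, hence also asymptotically Lipschitz. Thus, by Arzelà–Ascoli, every subsequence has a further subsequence converging u.o.c.\ to a limit $\zscr(\cdot)$ whose components are uniformly Lipschitz, which is \FMSeq{enum:one}, and the limit automatically satisfies \FMSeq{enum:two}–\FMSeq{enum:eight} by continuity of the maps involved. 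Because these limit relations pin down the limit set to be exactly $\FMS(T)$ and this set is closed, a standard argument converts "every subsequential limit lies in $\FMS(T)$" into "$\PR(\Zscr^{(r)}\in\FMS_\beps(T))\to1$" for each $\beps>0$, which is the first assertion.

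The second assertion, adding \FMSeq{enum:flow}, \FMSeq{enum:packet}, \FMSeq{enum:nzero}, \FMSeq{enum:idling} under \MWUMa, is where the real work lies. Equation \FMSeq{enum:idling} ($z\equiv\bzero$) is immediate from invariant 2 of the scheduling policy (no idling), which passes to the limit trivially. For \FMSeq{enum:packet} I would argue by contradiction at a regular point $t$: if $\pi^\top(I-R)q^\alpha(t)$ is strictly below the max, then by continuity of $q^\alpha(\cdot)$ it is strictly below on a neighborhood of $t$, so for large $r$ the prelimit queue-length vector (which is close to $r\,q(\cdot)$ uniformly on that neighborhood, after unscaling) also makes $\pi$ strictly suboptimal at every integer slot in that window; by invariant 1, $\pi$ is never used there, so $S_\pi$ is constant on that window, hence $\dot s_\pi(t)=0$. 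The subtle point is comparing $Q_e^\alpha$ at the prelimit level across schedules: because $x\mapsto x^\alpha$ is continuous and the queue lengths are $O(r)$ on regions where $q_e(t)>0$, the argmax set of the prelimit MW-$\alpha$ problem is contained in the argmax set of the fluid problem for large $r$ — but one must handle queues with $q_e(t)=0$ carefully, since there $Q_e$ may be $o(r)$ and $Q_e^\alpha$ need not be negligible relative to $O(r)$ when $\alpha<1$. I would resolve this by noting $(I-R)q^\alpha(t)$ only involves the vector $q^\alpha$, and the contribution of a queue with $q_e(t)=0$ to the fluid weight is $0$, while in the prelimit it is at most $Q_e^\alpha(\tau^-)$; since the schedule we consider already satisfies $\pi_e=0$ whenever $Q_e(\tau^-)=0$ (we restricted to such schedules), these terms drop out of the comparison. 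For \FMSeq{enum:nzero}, the key observation is that when $N_f(t)=0$ the prelimit rate is $X_f=0$, so no packets are injected into $\iota(f)$ from flow $f$; combined with the non-idling property and a careful accounting of the upstream arrivals into $\iota(f)$, one shows the fluid queue at $\iota(f)$ cannot be positive at a regular point where $n_f(t)=0$ — this is analogous to the standard "empty virtual station" argument. Finally \FMSeq{enum:flow}: on any interval where $n_f(\cdot)$ stays bounded away from $0$ (and hence $N_f(rt)\to\infty$), the scaled rate $X_f(rt)$ is the argmax of a strictly concave program whose coefficients $N_f^\alpha(rt)/r^\alpha\cdot r^\alpha$ and $Q_{\iota(f)}^\alpha(rt)$ converge; I would invoke continuity of the argmax of a strictly concave maximization (a standard maximum-theorem / Berge-type argument) to conclude $X_f(rt)\to x_f(t)$, then integrate and use that $\bar X^{(r)}\to\bar x$ uniformly to identify $\dot{\bar x}_f(t)=x_f(t)$ as stated; the boundary case $n_f(t)=0$ is covered by \FMSeq{enum:nzero} together with the convention in \FMSeq{enum:flow}, using that whenever $n_f$ touches $0$ the queue $q_{\iota(f)}$ is also $0$ so the unconstrained $\alpha$-fair solution $\rho_f$ is forced.

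I expect the main obstacle to be \FMSeq{enum:flow}, specifically the uniform control needed to pass from the prelimit argmax to the fluid argmax at regular points where $n_f(t)>0$ but may be small, and to handle the interplay between this and \FMSeq{enum:nzero} at points where $n_f(t)=0$: one must rule out pathological oscillation of $N_f(rt)$ between $0$ and large values on vanishing time scales, which requires a separate estimate (e.g.\ using the Poisson arrival rate $\nu_f>0$ to show $n_f$ grows at a definite rate away from $0$, so that the set $\{t: n_f(t)=0\}$ has the right structure). The scheduling equation \FMSeq{enum:packet}, while technically delicate around empty queues, follows a well-trodden path once the "restrict to schedules with $\pi_e=0$ when $Q_e=0$" convention is used consistently.
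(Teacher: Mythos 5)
Your overall skeleton (tightness of the scaled laws via Poisson concentration and Lipschitz bounds, extraction of limit points, identification of the limit with the fluid model equations, and then passing from ``all subsequential limits lie in the closed set of fluid model solutions'' to the $\beps$-flattened probability statement) is the same route the paper takes, and your treatment of \FMSeq{enum:one}--\FMSeq{enum:eight}, \FMSeq{enum:packet} and \FMSeq{enum:idling} matches the paper's in substance. The genuine gap is in how you handle \FMSeq{enum:nzero} and the boundary case of \FMSeq{enum:flow}. First, your verification of \FMSeq{enum:flow} at a regular point with $n_f(t)=0$ is wrong as stated: you invoke \FMSeq{enum:nzero} to get $q_{\iota(f)}(t)=0$ and then claim the $\alpha$-fair program ``forces'' $x_f(t)=\rho_f$. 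It does not: when the flow count is zero the prelimit rule sets $X_f=0$, and when both $N_f$ and $Q_{\iota(f)}$ are small the maximizer is $C$, not $\rho_f$; nothing in the utility maximization produces $\nu_f/\mu_f$. The correct argument is purely about the limit trajectory: at a regular point with $n_f(t)=0$, non-negativity of $n_f(\cdot)$ forces $\dot{n}_f(t)=0$, and then \FMSeq{enum:two}--\FMSeq{enum:four} give $x_f(t)=\nu_f/\mu_f=\rho_f$. In particular, the ``main obstacle'' you flag --- ruling out oscillation of $N_f(rt)$ between $0$ and large values on vanishing time scales --- does not arise at all once this flow-conservation argument is used, since no prelimit control near the boundary is needed for this step.

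Second, your proposed proof of \FMSeq{enum:nzero} itself (``flow $f$ injects nothing, non-idling, upstream accounting, hence $q_{\iota(f)}=0$'', an empty-virtual-station drainage argument) would fail. The queue $\iota(f)=(s(f),d(f))$ can hold packets of other flow types sharing that ingress and transit packets routed through that link toward $d(f)$, and the MW-$\alpha$ scheduler carries no per-queue work-conservation guarantee, so ``no injections from $f$ plus non-idling'' does not make that queue drain, let alone be zero at a given regular point. The paper instead proves \FMSeq{enum:nzero} by contradiction through the rate-allocation rule: if $q_{\iota(f)}(t)\geq\varepsilon_1>0$ while $n_f(t)=0$, then on a neighborhood of $t$ the unscaled system has $Q_{\iota(f)}\geq r\varepsilon_1/2$ and $N_f\leq 2r\varepsilon_2$, so $X_f\leq 4\varepsilon_2/\varepsilon_1$, which for small $\varepsilon_2$ contradicts $x_f(t)=\rho_f>0$ --- and that last fact is exactly the $n_f(t)=0$ case of \FMSeq{enum:flow} established independently as above. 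So the logical order is the reverse of the one you propose: the boundary case of \FMSeq{enum:flow} comes first from flow conservation, and \FMSeq{enum:nzero} is then deduced from the form of the rate allocator; as organized in your proposal these two steps do not close.
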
 
Theorem~\ref{thm:fms} can be established by following a somewhat standard
sequence of arguments (cf.\ \cite{Bramson98,SW09,KW04}).  First, the
collection of measures corresponding to the collection of random processes
$\{\Zscr^{(r)}(\cdot)\ :\ r \geq 1\}$ is shown to be tight. This establishes
that limit points must exist. Next, it is established that each limit point
must satisfy the conditions of a fluid model solution with probability $1$.
The tightness argument uses concentration properties of Poisson process along
with the Lipschitz property of queue length process. A detailed argument is
required to establish that, with probability $1$, the conditions of fluid
model solution are satisfied. The complete proof of Theorem~\ref{thm:fms} is
presented in Appendix~\ref{ap:fluid}.


\section{System Stability}\label{sec:stability}

In this section, we characterize the stability of a network under the \MWUMa
policy.  In particular, we shall see that the network evolves according to a
Markov process is positive recurrent as long as the system is
\emph{underloaded}. In other words, the system is maximally stable.  In order
to construct the stability region under the \MWUMa policy, first define the
set of \emph{packet arrival rates} by
\begin{equation}\label{eq:Lambda}
\Lambda \defeq 
\left\{ \lambda \in \Rp^{\Escr}\ :\  \exists\ s\in\Rp^\Sscr \text{ with }
\lambda \leq \Pi s,\ \1^\top s \leq 1 \right\}.
\end{equation}
Imagine that the network has no packet arrivals from flows, but instead has
packets arriving according to exogenous processes. Suppose that
$\lambda\in\Rp^\Escr$ is the vector of exogenous arrival rates, so that
packets arrived to each queue $e$ at rate $\lambda_e$. Then, it is not
difficult to see that the network would not be stable under \emph{any}
scheduling policy if $\lambda \notin \Lambda$. This is because there is at
least one queue in the network that is loaded beyond its service capacity.
Hence, the set $\Lambda$ represents the raw \emph{scheduling capacity} of the
network.

The set $\Lambda$ can alternatively be described as follows: Given a vector $\lambda\in\Rp^\Escr$, consider the
linear program
\[
\begin{array}{l@{\ }lll}
\PRIMAL(\lambda) \defeq
&
\minimize_s & \1^\top s \\
& \subjectto 
& \lambda \leq \Pi s,
\\[2pt]
& &
s \in \Rp^\Sscr.
\end{array}
\]
Clearly $\lambda\in\Lambda$ if and only if $\PRIMAL(\lambda) \leq 1$. The
quantity $\PRIMAL(\lambda)$ is called the \emph{effective load} of a system
with exogenous packet arrivals at rate $\lambda$.

Now, in our model, packets arrive to the network not through an exogenous
process, but rather, they are generated by flows. As discussed in
Section~\ref{sec:dyn-flow}, each flow type $f \in \Fscr$ generates packets at
according to an offered load of $\rho_f$. The generated packets are injected
into the network according to the ingress matrix $\Gamma$, and subsequently
travel through the network along pre-determined paths specified by the routing
matrix $R$. Let $\lambda \in \Rp^{\Escr}$ be the vector of \emph{implied
  loads} on the scheduling network due to the packets generated by flows. It
seems reasonable to relate $\lambda$ and the vector $\rho \in \Rp^{\Fscr}$ of
offered loads according to $\lambda = \Gamma \rho + R^\top \lambda$.
Equivalently, we define $\lambda \defeq \Xi \Gamma \rho$, where $\Xi$ is from
\eqref{eq:xi-def}. We define the \emph{effective load} $\Leff(\rho)$ of our
network by $\Leff(\rho) \defeq \PRIMAL(\Xi \Gamma \rho)$.

Given the above discussion, it seems natural to suspect that the network's
scheduling capacity allows it to operate effectively as long as $\Leff(\rho)
\leq 1$. This motivates the following definition:

\begin{definition}[\textbf{Admissibility}] A vector $\rho\in \Rp^{\Fscr}$ of
  offered loads \textbf{admissible} if $\Leff(\rho) \leq 1$. Similarly, $\rho$
  is \textbf{strictly admissible} if $\Leff(\rho) < 1$. Finally, $\rho$ is
  \textbf{critically admissible} if $\Leff(\rho) = 1$.
\end{definition}

We will establish system stability, or, more formally, positive recurrence,
when offered load is \emph{strictly admissible}. To this end, recall that the
system is completely described by the $\Zscr(\cdot)$ process. Under the \MWUMa
policy, the evolution of all the components of $\Zscr(\cdot)$ is entirely
determined by $\big(N(\cdot), Q(\cdot)\big)$. Further, the changes in
$\big(N(\cdot),Q(\cdot)\big)$ occur at times specified by the arrivals of a
(time-varying) Poisson process. Therefore, tuple $\big(N(\cdot),
Q(\cdot)\big)$ forms the state space of a continuous-time Markov chain. The
following is the main result of this section:

\begin{theorem}\label{thm:stability}
  Consider a network system with strictly admissible $\rho$ operating under
  the \MWUMa policy. Then, the Markov chain $\big(N(\cdot),
  Q(\cdot)\big)$ is positive recurrent.
\end{theorem}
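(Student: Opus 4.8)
The plan is to derive positive recurrence from the fluid‑model stability criterion (Dai; see also \cite{Bramson98,KW04}). Theorem~\ref{thm:fms} already identifies the fluid limits of the scaled processes as \MWUMa fluid model solutions, and the moment conditions needed by the criterion are immediate here (flow arrivals are Poisson and every $x_f$ is bounded by $C$). Hence it suffices to show the fluid model is \emph{stable}: there is a constant $T^\star$, depending only on $(\rho,R,\Gamma,\Sscr,\alpha,C)$, such that every $\zscr(\cdot)\in\FMS^\alpha$ with $L(q(0),n(0))\le 1$ (for the Lyapunov function $L$ below) satisfies $q(t)=0$ and $n(t)=0$ for all $t\ge T^\star$. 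The process $\bigl(N(\cdot),Q(\cdot)\bigr)$ — or, if one prefers, its embedding at the integer scheduling epochs — is a countable Markov chain whose empty configuration is accessible from, and leads to, every (physically realizable) state, so the criterion delivers exactly the positive recurrence of that class.

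For the fluid stability I would use
\[
 L(n,q) \;\defeq\; \frac{1}{1+\alpha}\Bigl(\sum_{f\in\Fscr}\theta_f\,n_f^{1+\alpha} \;+\; \sum_{e\in\Escr} q_e^{1+\alpha}\Bigr), \qquad \theta_f \;\defeq\; \frac{1}{\mu_f\,\rho_f^{\alpha}}\,.
\]
At a regular point, \FMSeq{enum:two}--\FMSeq{enum:four} give $\dot n_f=\mu_f(\rho_f-x_f)$; \FMSeq{enum:five}, \FMSeq{enum:six}, and \FMSeq{enum:idling} give $\dot q=\Gamma x-(I-R^\top)\Pi\dot s$; and \FMSeq{enum:packet} together with $\bone^\top\dot s=1$ (the derivative of \FMSeq{enum:seven}) gives $\bigl((I-R^\top)\Pi\dot s\bigr)^\top q^\alpha=\max_{\sigma\in\Sscr}\sigma^\top(I-R)q^\alpha$. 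Since $(\Gamma x)^\top q^\alpha=\sum_f x_f q^\alpha_{\iota(f)}$, this yields
\[
 \dot L \;=\; \sum_{f}\theta_f\mu_f\,n_f^\alpha(\rho_f-x_f) \;+\; \sum_{f}x_f\,q^\alpha_{\iota(f)} \;-\; \max_{\sigma\in\Sscr}\sigma^\top(I-R)q^\alpha\,.
\]

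The argument then rests on two estimates. First, split $x_f=\rho_f+(x_f-\rho_f)$; the $\rho$‑part satisfies $\sum_f\rho_f q^\alpha_{\iota(f)} = (\Gamma\rho)^\top q^\alpha = (\Xi\Gamma\rho)^\top(I-R)q^\alpha \le \Leff(\rho)\,\max_{\sigma\in\Sscr}\sigma^\top(I-R)q^\alpha$, obtained by taking an optimal $s^\star$ for $\PRIMAL(\Xi\Gamma\rho)$ (so $\Xi\Gamma\rho\le\Pi s^\star$, $\bone^\top s^\star=\Leff(\rho)$), discarding the queues of negative back‑pressure $\bigl((I-R)q^\alpha\bigr)_e<0$, and using monotonicity of $\Sscr$ to restrict the right‑hand maximum to the remaining queues. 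Second — and this is where the weights $\theta_f$ are forced — with $\theta_f=(\mu_f\rho_f^\alpha)^{-1}$ we have $\theta_f\mu_f n_f^\alpha-q^\alpha_{\iota(f)}=\rho_f^{-\alpha}\bigl(n_f^\alpha-(\rho_f q_{\iota(f)})^\alpha\bigr)$, which has the same sign as $n_f-\rho_f q_{\iota(f)}$; and since the \MWUMa rate is $x_f=\min\{C,\,n_f/q_{\iota(f)}\}$ when $n_f>0$ (with $x_f=C$ when $q_{\iota(f)}=0$, and, when $n_f=0$, $q_{\iota(f)}=0$ by \FMSeq{enum:nzero}) while $\rho_f<C$, this sign equals the sign of $x_f-\rho_f$. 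Hence $(x_f-\rho_f)\bigl(\theta_f\mu_f n_f^\alpha-q^\alpha_{\iota(f)}\bigr)\ge 0$ for every $f$, so, combining and using $\Leff(\rho)<1$,
\[
 \dot L \;\le\; -\sum_{f}(x_f-\rho_f)\bigl(\theta_f\mu_f n_f^\alpha-q^\alpha_{\iota(f)}\bigr)\;-\;\bigl(1-\Leff(\rho)\bigr)\max_{\sigma\in\Sscr}\sigma^\top(I-R)q^\alpha \;\le\; 0\,.
\]
Moreover $\dot L<0$ whenever $(n,q)\ne 0$: if $q\ne 0$, acyclicity forces $\bigl((I-R)q^\alpha\bigr)_e>0$ at the heaviest queue nearest its destination, so the last term is strictly negative; if $q=0$ and $n\ne 0$, then \FMSeq{enum:nzero} gives $q_{\iota(f)}=0$ hence $x_f=C>\rho_f$ for every $f$ with $n_f>0$, so the first sum is strictly positive. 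Finally, because the \MWUMa rate depends on $(n,q)$ only through the scale‑invariant ratios $n_f/q_{\iota(f)}$, the bound on $\dot L$ above is positively homogeneous of degree $\alpha$ while $L$ is homogeneous of degree $1+\alpha$, and \FMSeq{enum:nzero} keeps the solution in the set $\{n_f=0\Rightarrow q_{\iota(f)}=0\}$ on which that bound is continuous; compactness of $\{L=1\}$ within that set then gives $\dot L\le-\kappa\,L^{\alpha/(1+\alpha)}$ along any fluid solution, and integrating this differential inequality yields $L(\zscr(t))=0$, hence $q(t)=n(t)=0$, for all $t\ge T^\star:=(1+\alpha)\kappa^{-1}L(\zscr(0))^{1/(1+\alpha)}$.

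I expect the crux to be the second sign estimate: it is the step that genuinely couples the flow‑level rule \FMSeq{enum:flow} to the packet‑level price $q^\alpha_{\iota(f)}$, it pins down $\theta_f=1/(\mu_f\rho_f^\alpha)$, and it leans on the explicit $\alpha$‑fair maximizer together with \FMSeq{enum:nzero}, with the boundary faces ($n_f=0$ or $q_{\iota(f)}=0$) needing care. The other delicate point is the passage from pointwise fluid drift to positive recurrence: one must justify the drift identity on the faces $q_e=0$, where $q_e^{1+\alpha}$ is only one‑sidedly differentiable, and verify the hypotheses of the fluid‑limit criterion — accessibility of the empty state, the (elementary) moment bounds, and, if one works with the integer‑time embedded chain, the transfer of positive recurrence back to the continuous‑time description.
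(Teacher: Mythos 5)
Your fluid-model analysis is correct and takes a genuinely different route from the paper's Lemma~\ref{lem:fluid-drain}, with the same Lyapunov function $L_\alpha$ of \eqref{eq:lyapunov} (up to the $1/(1+\alpha)$ factor). The paper bounds the flow term by concavity of $z\mapsto z^{1-\alpha}/(1-\alpha)$ together with optimality of $x_f$ (its inequality \eqref{eq:L-A}), discards the resulting cross term, and then obtains strict decrease by a two-case analysis: when $\1^\top q^{1+\alpha}$ is not small, the back-pressure term gives drift at most $-(1+\alpha)(1-\Leff(\rho))\,\1^\top q^\alpha/|\Escr|^2$; when it is small, most of the mass sits in $n$, those flows are served at rate $C>\rho_f$, and $\Delta_n$ is strictly negative. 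This only yields that $L_\alpha$ drops by a fixed $\delta_*$ by time $T_*$ from the unit level set, which is all the paper's subsequent Foster argument needs. You instead retain the cross term $\sum_f (x_f-\rho_f)\bigl(\theta_f\mu_f n_f^\alpha - q_{\iota(f)}^\alpha\bigr)$, prove it is pointwise nonnegative using the explicit form of the $\alpha$-fair maximizer and \FMSeq{enum:nzero} (your sign argument, including the faces $q_{\iota(f)}=0$ and $n_f=0$, is right, and your treatment of negative back-pressure coordinates in the first estimate is the correct care), and then use homogeneity to get a uniform strictly negative drift and finite-time emptying of the fluid model --- a stronger conclusion than Lemma~\ref{lem:fluid-drain}, reached without the paper's $\beps_1$ case split. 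One imprecision: the set $\{\,n_f=0\Rightarrow q_{\iota(f)}=0\,\}$ is not closed, so ``compactness of $\{L=1\}$ within that set'' does not literally give an attained infimum; you need the limiting argument (if the max-weight term vanishes along a sequence then $q\to 0$, and then for any $f$ with $n_f$ bounded away from zero the cross term converges to $(C-\rho_f)\theta_f\mu_f n_f^\alpha>0$), which does go through but must be said.

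The genuine gap is the final step, from fluid stability to positive recurrence. You propose to invoke ``the fluid-limit criterion (Dai)'' and treat its hypotheses as routine, but the paper is explicit that Dai's theorem does not apply to this model (joint flow- and packet-level dynamics, state-dependent Poisson packet generation, rate allocation coupled to queue lengths, discrete-time scheduling embedded in continuous time), and that a specialized analysis is needed. What the paper actually does is re-prove the transfer: it verifies the multiplicative Foster condition of Lemma~\ref{lem:pos-rec} for the normed Lyapunov function $\ell=L_\alpha^{1/(1+\alpha)}$, by scaling a sequence of large initial states by $r_k=\ell(x^k)$, using the tightness and weak-convergence machinery behind Theorem~\ref{thm:fms} (Lemma~\ref{lem:mu-tight}), continuity of the path functional $F$, uniform integrability supplied by Poisson concentration and the bound $x_f\le C$, and the stopping time $\tau=\ell(\Xscr(0))U$ with $U$ uniform on $[T_*,T]$. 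Your proposal gestures at these verifications but does not carry them out, and they are the substantive remaining work: converting weak convergence of sample paths into an expected-drift inequality for the unscaled chain is exactly where the uniform integrability and the stopping-time construction are used; irreducibility/aperiodicity (or restriction to the communicating class of the empty state) also has to be addressed before Lemma~\ref{lem:pos-rec}-type results apply. So: the fluid half of your argument is a correct and attractive alternative (its stronger conclusion would even simplify this last step), but as written the stochastic half rests on a citation that is not available for this model and must be replaced by the kind of bespoke Foster--Lyapunov argument the paper gives.
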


It is worth noting that if $\Leff(\rho) > 1$, then at least one of the queues
in the network must be, on average, loaded beyond its capacity.  Hence the
network Markov process can not be positive recurrent or stable.

The proof of Theorem~\ref{thm:stability}, provided next in
Section~\ref{sec:stability-proofs}, uses a fluid model approach.  Dai
\cite{Dai95} pioneered such an approach for a class of queueing
networks. However, this result does not apply to the present
setting, and a specialized analysis is needed. Conceptually, the fluid model
approach involves two steps: (1) derive strong stability of fluid model, and
(2) use strong stability to establish the positive recurrence of the original
Markov chain. To this end, define Lyapunov function $L_\alpha$
over the vector of flow counts $n=[n_f]_{f\in\Fscr} \in \Rp^{\Fscr}$ and
the vector of queue lengths $q=[q_e]_{e\in\Escr} \in \Rp^{\Escr}$ by
\begin{equation}\label{eq:lyapunov}
L_\alpha(n, q) \defeq
 \sum_{f \in \Fscr} \frac{n_f^{1+\alpha}}{\mu_f \rho_f^\alpha} + \sum_{e\in \Escr} q_e^{1+\alpha}.
\end{equation}
The following lemma, whose proof is found in Section~\ref{sec:fluid-drain},
provides a central argument towards establishing the strong stability of the
fluid model. This lemma will also be of use later in establishing the
characterization and attractiveness of invariant manifold under critical
loading.
\begin{lemma}\label{lem:fluid-drain}
  Let $\big(n(\cdot),q(\cdot)\big)$ be, respectively, the flow count process
  and the queue length process of a fluid model solution in the set
  $\FMS^\alpha(T)$. If $\Leff(\rho) \leq 1$, then for every regular point
  $t\in (0,T)$,
  \[
  \frac{d}{dt} L_\alpha\big(n(t), q(t)\big) \leq 0.
  \]

  Suppose further that $\Leff(\rho) < 1$. Then, there exist $\delta_* > 0$ and
  $T_* > 0$ such that, for all $T > T_*$, if the initial conditions $\big(n(0),
  q(0)\big)$ satisfy
  \[
  L_\alpha\big(n(0), q(0)\big) = 1,
  \]
  then
  \[ 
  L_\alpha\big(n(t), q(t)\big) \leq 1-\delta_*,
  \quad\text{for all $t \in [T_*, T]$.}
  \]
\end{lemma}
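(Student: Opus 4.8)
The plan is to differentiate $L_\alpha$ along a fluid model solution in $\FMS^\alpha(T)$ and to control the resulting drift using two structural facts: the optimality of the back-pressure schedule together with the admissibility of $\rho$, and the optimality of the per-flow $\alpha$-fair rate allocation. Fix a regular point $t\in(0,T)$, write $x_f\defeq\dot{\bar{x}}_f(t)$, and evaluate all quantities at $t$. From \FMSeq{enum:two}--\FMSeq{enum:four}, $\dot n_f=\nu_f-\mu_f x_f$; differentiating \FMSeq{enum:five} and using \FMSeq{enum:six} and \FMSeq{enum:idling} (so $\dot z=\bzero$), $\dot q=\Gamma x-(I-R^\top)\Pi\dot s$. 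By \FMSeq{enum:seven}, \FMSeq{enum:eight} and \FMSeq{enum:packet}, $\dot s$ is a probability vector supported on the maximizers of $\sigma\mapsto\sigma^\top(I-R)q^\alpha$, so $(q^\alpha)^\top(I-R^\top)\Pi\dot s=\max_{\sigma\in\Sscr}\sigma^\top(I-R)q^\alpha$. Applying the chain rule to \eqref{eq:lyapunov} and using $\nu_f/\mu_f=\rho_f$ gives
\[
\frac{1}{1+\alpha}\,\frac{d}{dt}L_\alpha\big(n(t),q(t)\big)
=\sum_{f\in\Fscr}\Big(\frac{n_f}{\rho_f}\Big)^{\!\alpha}(\rho_f-x_f)
+\sum_{f\in\Fscr}q^\alpha_{\iota(f)}\,x_f-\max_{\sigma\in\Sscr}\sigma^\top(I-R)q^\alpha .
\]

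Next I would bound the scheduling term from below. Since $\Leff(\rho)=\PRIMAL(\Xi\Gamma\rho)\le 1$, there is $s^*\ge\bzero$ with $\Pi s^*\ge\Xi\Gamma\rho$ and $\bone^\top s^*\le 1$. Writing $w\defeq(I-R)q^\alpha$ and $w^+$ for its componentwise positive part, monotonicity of $\Sscr$ gives $\max_{\sigma\in\Sscr}\sigma^\top w=\max_{\sigma\in\Sscr}\sigma^\top w^+$, and as $w^+\ge\bzero$ and $\Pi s^*\ge\Xi\Gamma\rho\ge\bzero$,
\[
\max_{\sigma\in\Sscr}\sigma^\top(I-R)q^\alpha
\ \ge\ \frac{(\Pi s^*)^\top w^+}{\bone^\top s^*}
\ \ge\ (\Pi s^*)^\top w^+
\ \ge\ (\Xi\Gamma\rho)^\top w^+
\ \ge\ (\Xi\Gamma\rho)^\top w
\ =\ (q^\alpha)^\top\Gamma\rho,
\]
the last equality because $(I-R^\top)\Xi\Gamma\rho=\Gamma\rho$. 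Substituting into the drift identity and using $(q^\alpha)^\top\Gamma\rho=\sum_f\rho_f q^\alpha_{\iota(f)}$ leaves $\frac{1}{1+\alpha}\frac{d}{dt}L_\alpha\le\sum_f(\rho_f-x_f)\big[(n_f/\rho_f)^\alpha-q^\alpha_{\iota(f)}\big]$, and I would show each summand is $\le 0$. If $n_f(t)=0$, then $x_f=\rho_f$ by \FMSeq{enum:flow} and the term is $0$. If $n_f(t)>0$, then $x_f$ maximizes the strictly concave $x\mapsto n_f^\alpha x^{1-\alpha}/(1-\alpha)-q^\alpha_{\iota(f)}x$ over $[0,C]$; the first-order optimality condition evaluated at $\rho_f\in(0,C)$ gives $\big(n_f^\alpha x_f^{-\alpha}-q^\alpha_{\iota(f)}\big)(\rho_f-x_f)\le 0$, while monotonicity of $x\mapsto n_f^\alpha x^{-\alpha}$ gives $\big(n_f^\alpha\rho_f^{-\alpha}-n_f^\alpha x_f^{-\alpha}\big)(\rho_f-x_f)\le 0$; adding these, $\big((n_f/\rho_f)^\alpha-q^\alpha_{\iota(f)}\big)(\rho_f-x_f)\le 0$. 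Hence $\frac{d}{dt}L_\alpha(n(t),q(t))\le 0$, which is the first assertion.

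For the strict statement, set $\zeta\defeq 1-\Leff(\rho)>0$, so $s^*$ above may be taken with $\bone^\top s^*\le 1-\zeta$; the scheduling chain then improves to $\max_{\sigma\in\Sscr}\sigma^\top(I-R)q^\alpha-(q^\alpha)^\top\Gamma\rho\ge\frac{\zeta}{1-\zeta}(\Xi\Gamma\rho)^\top w^+$. Telescoping $w$ along the acyclic route out of a fullest queue gives $\bone^\top w^+\ge\max_e q_e^\alpha$, and since $\Xi\Gamma\rho>\bzero$ this slack is at least $c_1\max_e q_e^\alpha$ for a constant $c_1>0$ depending only on the network data. Combining with the previous paragraph, $\frac{1}{1+\alpha}\frac{d}{dt}L_\alpha(n(t),q(t))\le\Psi(n(t),q(t))$ where $\Psi(n,q)\defeq\sum_f(\rho_f-x_f(n,q))\big[(n_f/\rho_f)^\alpha-q^\alpha_{\iota(f)}\big]-c_1\max_e q_e^\alpha$ and $x(n,q)$ is the rate allocation from \FMSeq{enum:flow}. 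Now $\Psi$ depends on $(n,q)$ only, is continuous on the compact set $K\defeq\{(n,q)\in\Rp^\Fscr\times\Rp^\Escr:\tfrac12\le L_\alpha(n,q)\le 1,\ n_f=0\Rightarrow q_{\iota(f)}=0\ \forall f\}$ (continuity near $n_f=0$ holds because the corresponding summand then vanishes), and is strictly negative on $K$: if $q\ne\bzero$ the last term is strictly positive while the sum is $\le 0$ by the previous paragraph, and if $q=\bzero$ then $L_\alpha\ge\tfrac12$ forces some $n_f>0$, whence $q_{\iota(f)}=0$ forces $x_f=C>\rho_f$ and that summand is strictly negative. By compactness $\Psi\le -c$ on $K$ for some $c>0$, and since fluid model solutions satisfy \FMSeq{enum:nzero}, $\frac{d}{dt}L_\alpha(n(t),q(t))\le -(1+\alpha)c$ at every regular $t$ with $L_\alpha(n(t),q(t))\in[\tfrac12,1]$. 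Finally, if $L_\alpha(n(0),q(0))=1$, the first assertion makes $t\mapsto L_\alpha(n(t),q(t))$ non-increasing and $\le 1$, so while it exceeds $\tfrac12$ it decreases at rate at least $(1+\alpha)c$ and thus reaches $\tfrac12$ by time $T_*\defeq 1/(2(1+\alpha)c)$, and stays $\le\tfrac12$ afterwards; taking $\delta_*\defeq\tfrac12$ gives the claim for every $T>T_*$.

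The step I expect to be the main obstacle is the scheduling inequality and its quantitative refinement: connecting the back-pressure weight $\max_{\sigma\in\Sscr}\sigma^\top(I-R)q^\alpha$ to the effective load $\Leff(\rho)$ via the linear program defining $\Lambda$, coping with the fact that $(I-R)q^\alpha$ may have negative entries (handled by reducing to its positive part using monotonicity of $\Sscr$), and then extracting a gap proportional to $\max_e q_e^\alpha$ from $\Leff(\rho)<1$. The remaining ingredients---the chain-rule computation, the one-variable concavity argument for the rate allocation, and the compactness-plus-comparison endgame---are routine.
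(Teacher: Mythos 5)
Your proof of the first assertion is correct and essentially the paper's argument: the same chain-rule decomposition, the same per-flow inequality as the paper's \eqref{eq:L-A} (you derive it from the variational inequality at the maximizer plus monotonicity of $x\mapsto x^{-\alpha}$ rather than the concavity-gradient inequality, which is equivalent), and the same comparison of the max-weight term with $(\Gamma\rho)^\top q^\alpha$ via an optimal solution of $\PRIMAL(\Xi\Gamma\rho)$; your positive-part/telescoping bound plays the role of the paper's \eqref{eq:big-max}, with the standing assumption $\Xi\Gamma\rho>\bzero$ supplying the constant instead of the single-queue-schedule averaging.

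For the second assertion your strategy (uniform strictly negative drift on the shell $\{\tfrac12\le L_\alpha\le 1\}$, then integrate) is genuinely different from the paper's dichotomy on $\1^\top q^{1+\alpha}(0)$, and the quantitative refinement of the scheduling bound by $\frac{\zeta}{1-\zeta}(\Xi\Gamma\rho)^\top w^+\ge c_1\max_e q_e^\alpha$ is fine; but the step ``by compactness $\Psi\le -c$ on $K$'' has a genuine gap: $K$ is not compact. The constraint $n_f=0\Rightarrow q_{\iota(f)}=0$ does not define a closed set --- take $n_{f'}=1/k$ for all $f'$ and one ingress queue length fixed at a positive value chosen so that $L_\alpha\in[\tfrac12,1]$; these points lie in $K$, but their limit (with $n_f=0$ and $q_{\iota(f)}>0$) does not. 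A continuous strictly negative function on a non-compact set need not be bounded away from zero, so $\sup_K\Psi\le-c<0$ does not follow from what you wrote. This is exactly the discontinuity of the drift at the boundary $n_f=0$ that the paper flags explicitly in the proof of Theorem~\ref{thm:attractive} and handles there by covering with the closed sets $\Sbb$; in Lemma~\ref{lem:fluid-drain} itself the paper sidesteps it with the large/small $q(0)$ case analysis. Your argument is repairable without changing its shape: drop the nzero constraint and work on the compact set $\{(n,q):\tfrac12\le L_\alpha(n,q)\le1\}$; note that every $f$-summand of $\Psi$ is nonpositive there (it is $0$ when $n_f=0$ by your formula $x_f=\rho_f$), so $\Psi<0$ whenever $q\neq\bzero$, and when $q=\bzero$ some $n_f>0$ forces $x_f=C>\rho_f$, so $\Psi<0$ there too; then observe that $\Psi$ is upper semicontinuous on this compact set (its only discontinuities are at points with $n_f=0$, $q_{\iota(f)}>0$, where sequential limits of the $f$-summand are $\le$ its value $0$), hence its supremum is attained and strictly negative. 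With that substitution the rest of your endgame (absolute continuity of $L_\alpha$ along the trajectory, $T_*=1/(2(1+\alpha)c)$, $\delta_*=\tfrac12$, and monotonicity from the first part) goes through.
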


\subsection{Proof of Theorem \ref{thm:stability}}\label{sec:stability-proofs}

The following lemma provides a sufficient condition for positive recurrence:
\begin{lemma}\label{lem:pos-rec}\cite[Theorem~8.13]{Robert03} Let $\Xscr(\cdot)$
  be an irreducible, aperiodic jump Markov process on a countable state space
  $\mathfrak{X}$. Suppose there exists a function $V\colon \mathfrak{X} \to
  \Rp$, constants $A$ and $\beps>0$, and an integrable stopping time $\tau>0$
  such that, for all $x \in \mathfrak{X}$ with $V(x) > A$,
  \[
  \E\left[V(\Xscr(\tau))\ |\ \Xscr(0) = x\right] \leq 
  V(x) - \beps\E[\tau\ |\ \Xscr(0)=x]. 
  \] 
  If the set $\{x \in \mathfrak{X}\ :\ V(x) \leq A\}$ is finite and
  $\E[V(\Xscr(1))\ |\ \Xscr(0) = x] < \infty$ for all $x \in \mathfrak{X}$,
  then the process $X(\cdot)$ is positive recurrent and ergodic.
\end{lemma}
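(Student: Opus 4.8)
The plan is to treat Lemma~\ref{lem:pos-rec} as a generalized Foster--Lyapunov (drift) criterion and prove it in three stages: (i) convert the one-step drift along the stopping time $\tau$ into a nonnegative supermartingale along an iterated sequence of successive copies of $\tau$, and read off that the finite set $B \defeq \{x\in\mathfrak{X} : V(x)\le A\}$ is reached in finite expected time from every initial state; (ii) upgrade this, using that $B$ is finite together with the hypothesis $\E[V(\Xscr(1))\mid\Xscr(0)=x]<\infty$, to a uniform bound on the mean first-return time to $B$; and (iii) invoke the classical equivalence ``finite mean return time to a finite set $\Longleftrightarrow$ positive recurrence'' for irreducible jump Markov processes. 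A preliminary remark: since $\E[V(\Xscr(1))\mid\Xscr(0)=x]<\infty$ for every $x$, the random state $\Xscr(1)$ is well defined, so the minimal jump process is non-explosive; this will be used below.

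For stage (i), set $T_0 \defeq 0$ and $T_{k+1} \defeq T_k + \tau\circ\theta_{T_k}$, where $\theta$ denotes the time-shift. By the strong Markov property and the hypothesis, on $\{V(\Xscr(T_k)) > A\}$ we have $\E[V(\Xscr(T_{k+1}))\mid\mathcal{F}_{T_k}] \le V(\Xscr(T_k)) - \beps\,\E_{\Xscr(T_k)}[\tau]$. Let $N \defeq \inf\{k\ge 0 : \Xscr(T_k)\in B\}$. Then $M_k \defeq V(\Xscr(T_{k\wedge N})) + \beps\sum_{j=0}^{k\wedge N - 1}\E_{\Xscr(T_j)}[\tau]$ is a nonnegative supermartingale with $M_0 = V(x)$, so $\beps\,\E_x[T_{k\wedge N}] = \beps\,\E_x\big[\sum_{j=0}^{k\wedge N-1}\E_{\Xscr(T_j)}[\tau]\big]\le V(x)$ for all $k$ (the first equality is a Wald-type identity, valid because $\{j<N\}\in\mathcal{F}_{T_j}$). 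Monotone convergence then yields $\beps\,\E_x[T_N]\le V(x)<\infty$; in particular $T_N<\infty$ almost surely. Since $\tau>0$ and the process is non-explosive, $N$ cannot be infinite (otherwise the sampled states would eventually stabilize at some fixed $y\notin B$ and the increments $\tau\circ\theta_{T_k}$ would be conditionally distributed as $\tau$ under $\PR_y$, which has strictly positive mean, contradicting $T_N<\infty$). Hence $N<\infty$ a.s., $\Xscr(T_N)\in B$, and the hitting time $\sigma_B \defeq \inf\{t\ge 0 : \Xscr(t)\in B\}$ obeys $\sigma_B\le T_N$, so $\E_x[\sigma_B]\le V(x)/\beps<\infty$ for every $x$.

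For stage (ii), fix $y\in B$ and let $\sigma_B^+$ be the first return to $B$ (the first time in $B$ after the process has left $B$). Split the path at integer times: on $\{\sigma_B^+>1\}$, either the path has not yet left $B$ by time $1$ --- handled by iterating the same decomposition over successive unit intervals, using that $B$ is finite --- or $\Xscr(1)\notin B$ and $\sigma_B^+ = 1+\sigma_B\circ\theta_1$, in which case $\E[\sigma_B\circ\theta_1\mid\mathcal{F}_1] = \E_{\Xscr(1)}[\sigma_B]\le V(\Xscr(1))/\beps$ by stage (i). Since $\E[V(\Xscr(1))\mid\Xscr(0)=y]<\infty$ and $B$ is finite, this gives $\E_y[\sigma_B^+]<\infty$ and therefore $\sup_{y\in B}\E_y[\sigma_B^+]<\infty$. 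For stage (iii), the process observed at its successive entries into the finite set $B$ is an irreducible Markov chain on $B$, hence positive recurrent, and each excursion between consecutive entries has finite mean length by (ii); a standard regenerative argument then produces an invariant probability measure, and, together with the assumed irreducibility and aperiodicity, shows that $\Xscr(\cdot)$ is positive recurrent and ergodic.

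I expect the only genuinely delicate step to be the limit passage at the end of stage (i): the supermartingale inequality delivers just the family of bounds $\E_x[T_{k\wedge N}]\le V(x)/\beps$, and deducing from them both that $N<\infty$ a.s. and that $\E_x[T_N]<\infty$ requires monotone convergence plus the non-explosivity input (equivalently, the impossibility of sampling a non-explosive path at times accumulating to a finite limit when $\tau$ is a.s.\ positive). This is precisely why the hypotheses include $\tau>0$, integrability of $\tau$, and the finiteness of $\E[V(\Xscr(1))\mid\Xscr(0)=x]$. The remaining pieces --- the Wald-type identity in (i), the splitting at integer times in (ii), and the regenerative conclusion in (iii) --- are routine applications of the strong Markov property and the classical theory of countable-state Markov processes.
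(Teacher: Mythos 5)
This lemma is not proved in the paper at all: it is quoted verbatim as \cite[Theorem~8.13]{Robert03}, so there is no in-paper argument to compare against. Your proposal is essentially the standard proof of this Foster--Lyapunov criterion with a stopping time (and is close in spirit to the proof in Robert's book): iterate $\tau$ via the strong Markov property to get a supermartingale bounding the mean hitting time of the finite sub-level set $B=\{V\le A\}$ by $V(x)/\beps$, then use finiteness of $B$ together with $\E[V(\Xscr(1))\mid \Xscr(0)=x]<\infty$ to bound the mean return time to $B$, and finish with the regenerative/invariant-measure construction. The outline is sound, and your identification of the delicate limit passage in stage (i) is correct; making it rigorous requires a conditional Borel--Cantelli (L\'evy extension) argument on the event that the sampled states are eventually frozen at a fixed $y\notin B$, decomposing over the countably many possible $y$. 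Two smaller points: the remark that $\E[V(\Xscr(1))]<\infty$ ``implies'' non-explosivity is backwards --- well-definedness of $\Xscr(t)$ for all $t$ is a standing assumption of the statement, not a consequence of the integrability hypothesis; and the embedded chain at successive entries to $B$ need not be irreducible on $B$, though a finite chain always has a stationary class, which is all the regenerative argument needs.

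The one genuinely loose step is stage (ii), the case ``the path has not yet left $B$ by time $1$,'' which you dispose of with ``iterating over successive unit intervals.'' As written this does not close: iterating the hypothesis to control $\E_y[V(\Xscr(m))]$ for $m\ge 2$ is not legitimate (finiteness of $z\mapsto\E_z[V(\Xscr(1))]$ pointwise does not give $\E_y[V(\Xscr(2))]<\infty$), so you must instead use that on the event of not having left $B$ by time $m-1$ the state at time $m-1$ lies in the finite set $B$, whence $\E[V(\Xscr(m));\,\text{still in }B\text{ on }[0,m-1]]\le \max_{z\in B}\E_z[V(\Xscr(1))]\cdot \PR_y(\text{still in }B)$, combined with a geometric bound $\PR_y(\text{never leaves }B\text{ on }[0,m])\le \eta^m$, $\eta<1$, which itself requires irreducibility plus $\mathfrak{X}\neq B$. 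This can be completed, but the cleaner (and standard) device --- the one used in Robert's proof --- is to avoid the ``first return after leaving $B$'' altogether and take the regeneration time to be $\sigma\defeq\inf\{t\ge 1:\Xscr(t)\in B\}$; then $\E_y[\sigma]\le 1+\beps^{-1}\big(\E_y[V(\Xscr(1))]+\max_{z\in B}V(z)\big)<\infty$ in a single application of the hypothesis, and the successive $\sigma$-visits to $B$ (spaced at least one apart) already yield the regenerative structure for the invariant probability measure and hence positive recurrence and ergodicity.
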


Our proof of Theorem~\ref{thm:stability} relies on establishing the sufficient
condition for positive recurrence given by Lemma~\ref{lem:pos-rec}, using the
stability of the fluid model (Lemma~\ref{lem:fluid-drain}).  To this end, note
that under the \MWUMa policy, $\Xscr(t) \defeq \big(N(t), Q(t)\big) \in
\Z_+^{\Fscr} \times \Z_+^{\Escr}$ is a jump Markov process. We shall use
`normed' version of the Lyapunov function $L_\alpha$, defined as
\[
\ell(n, q) \defeq \left(L_\alpha(n,q)\right)^{\frac{1}{1+\alpha}},
\]
for all $(n,q)\in \R^{\Fscr}_+ \times \R^{\Escr}_+$. The role of $V$ in
Lemma~\ref{lem:pos-rec} will be played by $\ell$.  
Lemma~\ref{lem:norm} implies that there exists constants $0 < C_1 \leq
C_2 < \infty$ so that, for all $(n,q)$,
\begin{equation}\label{eq:pr1}
  C_1 \|(n,q)\|_\infty \leq \ell(n,q) \leq C_2 \|(n, q)\|_\infty.
\end{equation}
Further, for any $\kappa > 0$ and $(n, q)$, we have that $\ell(\kappa n,
\kappa q) = \kappa \ell(n, q)$.

Now, consider any sequence of initial states 
\[
\left\{ 
  x^{k} \defeq 
  \big(N^{k}(0), Q^{k}(0)\big) \in \Z_+^{\Fscr}
  \times \Z_+^{\Escr}\ :\ k\in\N 
\right\},
\] 
such that $\|x^{k}\|_\infty \to \infty$ as $k\to\infty$. For each $k$,
consider a system starting at the initial state $x^{k}$. Denote the state of
the $k$th system at time $t \geq 0$ by
\[
\Xscr^k(t) 
\defeq 
\big(N^k(t), Q^{k}(t)\big).
\]

For each $k\in\N$, define the scaling factor $r_k\defeq
\ell\big(x^{(k)}\big)$, and notice that $r_k\to \infty$.  Fix a time horizon
$T > 0$, and for the $k$th system, consider the \emph{scaled} state process,
defined for $t\in [0,T]$ as
\[
\Xscr^{(r_k)}(t) 
\defeq 
\frac{1}{r_k}
\Xscr^k(r_k t)
=
\frac{1}{r_k}
\big(N^{k}(r_k t), Q^{k}(r_k t)\big), 
\]
The descriptor of the scaled system is given, for $t \in
[0,T]$, by $\Zscr^{(r_k)}(t)$ from \eqref{eq:scaled-state}.  Let $\mu^{(r_k)}$
be its distribution on $\bD[0,T]$.

From \eqref{eq:pr1}, we have that, for any $k$,
\begin{equation}\label{eq:pr2}
  \|\Xscr^{(r_k)}(0)\|_\infty = \frac{\|\Xscr^{k}(0)\|_\infty}{r_k} \leq 1/C_1.
\end{equation}
Since the set of scaled initial conditions is compact, there must exist a
limit point and a convergence subsequence. Along this subsequence, by the
analysis of Theorem~\ref{thm:fms} (in particular, Lemma~\ref{lem:mu-tight})
the measures $\{ \mu^{(r_k)} \}$ are tight, and therefore, there exists a
measure $\mu^{(\infty)}$ that is a limit point. By restricting to a further
subsequence, we can assume, without loss of generality, that $\mu^{(r_k)}
\Rightarrow \mu^{(\infty)}$ as $k\to \infty$. That is,
\[
\big(\Zscr^{(r_k)}(t)\big)_{t\in [0,T]} \Rightarrow 
\big(\zscr(t)\big)_{t\in [0,T]},
\quad \text{as $k\to\infty$,}
\]
with $\zscr(\cdot)\in\FMS^\alpha(T)$ satisfying the fluid model equations.

Given a fluid model solution $\zscr(\cdot)$ of the form \eqref{eq:fms-path},
denote by $\big(n(\cdot),q(\cdot)\big)$ the flow count and queue length
components.  Note that $\ell\big(\Xscr^{(r_k)}(0)\big)=1$, for all $k$.  By
the continuity of $\ell$, we have that $\ell\big(n(0),q(0)) = 1$; that is
$L_\alpha\big(n(0),q(0)\big) = 1$.  Then, from Lemma~\ref{lem:fluid-drain},
there exist $\delta_* > 0$ and $T_* > 0$ so that, for sufficiently large
$T$,
\begin{equation}\label{eq:pr3a}
  \ell\big(n(t),q(t)\big) \leq 1-\delta_*,
  \quad\text{for all $t \in [T_*, T]$.}
\end{equation}

Define the functional $F : \bD[0,T] \to \R_+$ by
\[
F\left(\big(\zscr(t)\big)_{t\in [0,T]}\right) \defeq
\frac{1}{T-T_*} \int_{T_*}^{T} \ell\big(n(t), q(t)\big)\, dt.
\]
Since $F$ is a continuous, it follows that  
\begin{equation}
  \label{eq:pr4}
  F\left(\big(\Zscr^{(r_k)}(t)\big)_{t \in [0,T]}\right) 
  \Rightarrow 
  F\left(\big(\zscr(t)\big)_{t\in [0,T]}\right),
  \quad \text{as $k\to\infty$.}
\end{equation}

Further, from \eqref{eq:pr2}, and using the fact that the arrival processes
are Poisson and the boundedness of the rate allocation policies, it follows
that for all $t \in [0,T]$ and all $k\in\N$,
\begin{equation}
  \label{eq:pr7}
  \E\left[\left\|\big(N^{(r_k)}(t),Q^{(r_k)}(t))\right\|_\infty\right] 
  \leq C_3 T,
\end{equation}
for some constant $C_3 > 0$.  From this, the uniform integrability of
$F\big(\Zscr^{(r_k)}(\cdot)\big)$ follows. Subsequently, from
\eqref{eq:pr3a} and \eqref{eq:pr4}, it follows that
\[
\lim_{k\to\infty} 
\E\left[
  F\left(\big(\Zscr^{(r_k)}(t)\big)_{t\in[0,T]}\right)
\right] 
\leq  1-\delta_*.
\]

Equivalently, in terms of the \emph{unscaled} 
state process $\Xscr(\cdot)$, we have that
\begin{equation}\label{eq:pr9}
  \lim_{k\to\infty} 
  \frac{1}{\ell(x^k)}
  \E\left[
    \left.
      \frac{1}{T-T_*} \int_{T_*}^{T} \ell\left(\Xscr\left(\ell\big(x^k\big) t
        \right)\right)\, 
      dt\ 
    \right|\ 
    \Xscr(0) = x^k
  \right]
  \leq 1 - \delta_*.
\end{equation}
To complete the proof of Theorem~\ref{thm:stability}, define $U$ to be a
random variable that is uniformly distributed over $[T_*,T]$. Define the
stopping time $\tau \defeq \ell\big(\Xscr(0)\big) U$.  Note that
\[
\E\left[\tau\ |\ \Xscr(0)=x\right] = \tfrac{1}{2} \ell(x) (T+T_*).
\]
Then, \eqref{eq:pr9} implies that, for all initial states
$x\in\Z^\Fscr_+\times\Z^\Escr_+$ with $\ell(x)$ sufficiently large,
\[
\E\left[
  \left.
    \ell\big(\Xscr(\tau)\big)\ 
  \right|\ 
  \Xscr(0) = x
\right]
\leq
\ell(x) - \beps \E\left[\tau\ |\ \Xscr(0)=x\right],
\]
where the constant $\beps$ is chosen so that $0 < \beps < 2 \delta_*/(T + T_*)$.
This satisfies the conditions of Lemma~\ref{lem:pos-rec}, and hence completes
the proof of Theorem~\ref{thm:stability}.

\subsection{Proof of Lemma~\ref{lem:fluid-drain}}\label{sec:fluid-drain}

Suppose $t \in (0,T)$ is a regular point. We will start by establishing the
first part of Lemma~\ref{lem:fluid-drain}: if $\Leff(\rho) \leq 1$, then $\dt
L_\alpha\big(n(t),q(t)\big) \leq 0$. To this end, note that
\begin{equation}\label{eq:dtL}
\dt 
L_\alpha\big(n(t), q(t)\big)
=
(1+\alpha)
\Bigg(
\underbrace{
  \sum_{f \in \Fscr} 
  \frac{n_f^{\alpha}(t)}{\mu_f \rho_f^\alpha}
  \dot{n}_f(t) 
}_{\mbox{\normalsize $\Delta_n$}} 
+ 
\underbrace{
  \sum_{e\in \Escr} 
  q_e^{\alpha}(t) \dot{q}_e(t)
}_{\mbox{\normalsize $\Delta_q$}} 
\Bigg).
\end{equation}
We consider the terms $\Delta_n$ and $\Delta_q$ separately.

First, consider the term $\Delta_n$. For each flow type $f$, we wish to show that
\begin{equation}\label{eq:L-A}
  \frac{n_f^{\alpha}(t)}{\mu_f \rho_f^\alpha}
  \dot{n}_f(t) 
  \leq
  q_{\iota(f)}^\alpha(t) \big(\rho_f - x_f(t)\big).
\end{equation}
By \FMSeq{enum:two}--\FMSeq{enum:four}, $\dot{n}_f(t) = \nu_f - \mu_f x_f(t)$.
There are two cases.  If $n_f(t)=0$, then by \FMSeq{enum:flow}, we have that
$x_f(t)=\rho_f$, thus both sides of \eqref{eq:L-A} are $0$. If $n_f(t) > 0$,
then
\begin{equation}\label{eq:L-A-1}
\begin{split}
  \frac{n_f^{\alpha}(t)}{\mu_f \rho_f^\alpha}
  \dot{n}_f(t) 
  =
  \frac{n_f^{\alpha}(t)}{\rho_f^\alpha}
  \big(\rho_f - x_f(t)\big)
   \leq
   n_f^\alpha(t) \frac{\rho_f^{1-\alpha}}{1-\alpha}
   -
   n_f^\alpha(t) \frac{x_f^{1-\alpha}(t)}{1-\alpha}.
\end{split}
\end{equation}
Here, the inequality follows from the fact that the function $g(z)\defeq
z^{1-\alpha}/(1-\alpha)$ is concave. For a concave function $g$, $g'(y) (y-x)
\leq g(y) - g(x)$, here we have $x=x_f(t)$ and $y=\rho_f$. Now, since $x_f(t)$ is optimal for the rate allocation problem of \FMSeq{enum:flow} and $\rho_f$ is feasible, we have that
\begin{equation}\label{eq:L-A-2}
n_f^\alpha(t) \frac{x_f^{1-\alpha}(t)}{1-\alpha} 
-
q_{\iota(f)}^\alpha(t) x_f(t)
\geq
n_f^\alpha(t) \frac{\rho_f^{1-\alpha}}{1-\alpha} 
-
q_{\iota(f)}^\alpha(t) \rho_f.
\end{equation}
Combining \eqref{eq:L-A-1} and \eqref{eq:L-A-2}, we have established
\eqref{eq:L-A}. Then, we can sum \eqref{eq:L-A} over all $f\in\Fscr$, to
obtain
\begin{equation}\label{eq:A-bound}
\Delta_n \leq \left[\Gamma \big(\rho - x(t)\big)\right]^\top q^\alpha(t).
\end{equation}

Now, consider the term $\Delta_q$ in \eqref{eq:dtL}. \FMSeq{enum:five},
\FMSeq{enum:six}, and \FMSeq{enum:idling} imply that
\begin{equation}\label{eq:B-1}
\dot{q}(t) = - (I - R^\top) \Pi \dot{s}(t) + \Gamma x(t).
\end{equation}
From \FMSeq{enum:seven} and \FMSeq{enum:packet},
\begin{equation}\label{eq:B-2}
\left[\big(I - R^\top\big) \Pi \dot{s}(t)\right]^\top q^\alpha(t)
= \sum_{\pi\in\Sscr}
\dot{s}_\pi(t) \pi^\top (I - R) q^\alpha(t)
=
\max_{\sigma\in\Sscr}\ \sigma^\top (I - R) q^\alpha(t).
\end{equation}
Together \eqref{eq:B-1} and \eqref{eq:B-2}, imply that
\begin{equation}\label{eq:B-bound}
\Delta_q \leq  
\left[ \Gamma x(t) \right]^\top q^\alpha(t)
  -
\max_{\sigma\in\Sscr}\ \sigma^\top (I - R) q^\alpha(t).
\end{equation}

Now, combining \eqref{eq:dtL}, \eqref{eq:A-bound}, and \eqref{eq:B-bound}, we
have that
\begin{equation}\label{eq:dtL-2}
\dt 
L_\alpha\big(n(t), q(t)\big)
\leq
(1+ \alpha)
\left(
\left[ \Gamma \rho \right]^\top q^\alpha(t)
  -
\max_{\sigma\in\Sscr}\ \sigma^\top (I - R) q^\alpha(t)
\right).
\end{equation} 
By the definition of $\Leff(\rho)\defeq \PRIMAL(\Xi\Gamma\rho)$, there exists some $s\in\R^\Sscr_+$ with $\1^\top s = \Leff(\rho)$ and 
\[
\Gamma \rho \leq \big(I - R^\top\big) \Pi s 
= \sum_{\pi \in \Sscr} s_\pi \big(I - R^\top\big) \pi.
\]
This implies that
\[
\left[\Gamma \rho\right]^\top q^\alpha(t) \leq 
\Leff(\rho) \max_{\sigma\in\Sscr}\ \sigma^\top (I - R) q^\alpha(t),
\]
Hence, by \eqref{eq:dtL-2},
\begin{equation}\label{eq:dtL-3}
\dt 
L_\alpha\big(n(t), q(t)\big)
\leq
  -
(1+\alpha) \big(1 - \Leff(\rho)\big)
\max_{\sigma\in\Sscr}\ \sigma^\top (I - R) q^\alpha(t).
\end{equation} 

In order to bound the right hand side of \eqref{eq:dtL-3}, we will argue that
\begin{equation}\label{eq:big-max}
\max_{\sigma\in\Sscr}\ \sigma^\top (I - R) q^\alpha(t)
\geq \frac{\1^\top q^\alpha(t)}{|\Escr|^2}.
\end{equation}
To see this, consider a queue $e_0\in\Escr$ with maximal length at time $t$,
i.e., $q_{e_0}(t) = \max_{e} q_e(t)$. Define $e_1,\dots, e_J\in\Escr$ to be a
sequence of distinct queues so that, for each $0 \leq i < J$, packets
departing from queue $e_i$ go to queue $e_{i+1}$, and packets departing from
queue $e_J$ exist the network. By our assumption of acyclic routing, such a
sequence exists, and since each queue is distinct, $J+1\leq |\Escr|$.
Consider the distinct schedules $\pi_0,\ldots,\pi_J\in\Sscr$, where, for
$0 \leq i \leq J$, the schedule $\pi_i$ serves exactly the queue $e_i$ ---
such schedules exist by the monotonicity assumption on the scheduling
constraints. Clearly, these schedules have weights given by
\[
\pi_i^\top (I - R) q^\alpha(t)
=
\begin{cases}
q^\alpha_{e_i}(t) - q^\alpha_{e_{i+1}}(t)
&
\text{if $0 \leq i < J$,} 
\\
q^\alpha_{e_J}(t) 
&
\text{if $i = J$.} 
\end{cases}
\]
Averaging over the $J+1$ schedules,
\[
\frac{1}{J+1} \sum_{i=0}^J \pi_i^\top (I - R) q^\alpha(t) 
= \frac{q^\alpha_{e_0}(t)}{J+1}.
\]
Since at least one schedule must have a weight that exceeds this average,
\[
\max_{\sigma\in\Sscr}\ \sigma^\top (I - R) q^\alpha(t)
\geq \frac{q^\alpha_{e_0}(t)}{J+1}
\geq \frac{\max_{e\in\Escr} q^\alpha_{e}(t)}{|\Escr|}
\]
Since 
\[
\max_{e\in\Escr} q^\alpha_{e}(t) \geq \frac{\1^\top q^\alpha(t)}{|\Escr|},
\]
\eqref{eq:big-max} follows.

Combining \eqref{eq:dtL-3} and \eqref{eq:big-max}, we obtain, when
$\Leff(\rho) \leq 1$,
\begin{equation}\label{eq:dtL-4}
\dt 
L_\alpha\big(n(t), q(t)\big)
\leq
  -
(1+\alpha) \big(1 - \Leff(\rho)\big)
\frac{\1^\top q^\alpha(t)}{|\Escr|^2} \leq 0.
\end{equation}
This establishes the first part of Lemma~\ref{lem:fluid-drain}.

\medskip \noindent To prove the second part of Lemma~\ref{lem:fluid-drain}, we
will consider two separate cases over initial conditions
$\big(n(0),q(0)\big)$ with $L_\alpha\big(n(0),q(0)\big)=1$: 
\begin{enumerate}[(i)]
\item\label{enum:q-big}
  $\1^\top q^{1+\alpha}(0) > \beps_1$.
\item\label{enum:q-small}
$\1^\top q^{1+\alpha}(0) \leq \beps_1$.
\end{enumerate}
Here, $\beps_1>0$ is a constant that will be determined shortly.

For case~(\ref{enum:q-big}), from the norm inequality in part~(\ref{enum:p-q}) of
Lemma~\ref{lem:norm}, we have that
\[
\beps_1^{\frac{1}{1+\alpha}} < \|q(0)\|_{1+\alpha} 
\leq \|q(0)\|_{\alpha},
\]
Thus, $\1^\top q^\alpha(0) > \beps_2 \defeq
\beps_1^{\frac{\alpha}{1+\alpha}}$. Due to \FMSeq{enum:one}, $q(\cdot)$ is
uniformly Lipschitz continuous, there exists $T_1>0$ such that $\1^\top
q^\alpha(t) \geq \beps_2/2$ for all $t\in [0,T_1]$. From \eqref{eq:dtL-4},
since $\Leff(\rho) < 1$,we have that
\[
\dt 
L_\alpha\big(n(t), q(t)\big)
\leq
  -
(1+\alpha) \big(1 - \Leff(\rho)\big)
\frac{\beps_2}{2|\Escr|^2} < 0,
\]
for all regular $t\in (0,T_1]$. Therefore, there exists $\delta_1 > 0$ so that
\begin{equation}\label{eq:delta-1}
L_\alpha\big(n(t), q(t)\big) \leq 1- \delta_1, 
\end{equation}
for all $t \geq T_1$.

For case~(\ref{enum:q-small}), the argument is more complicated. The basic
insight is as follows: if $\1^\top q^{1+\alpha}(0)$ is \emph{small}, then
$\Delta_q$ in \eqref{eq:dtL} must be small as well. Further, a good fraction
of the flows will be allocated the maximal rate (i.e., $C$), therefore
$\Delta_n$ in \eqref{eq:dtL} will be significantly negative.  This will leads
to strictly negative drift in $L$. We will now formalize this intuition and
make an appropriate choice of $\beps_1 > 0$ along the way.

First, by the Lipschitz continuity of $q(\cdot)$, there exists some $\tau_1>0$
such that, for all $t\in [0,\tau_1]$, $\1^\top q^{1+\alpha}(t) \leq 2\beps_1$.
Note that from \eqref{eq:B-bound}, the fact that $\bzero\in\Sscr$, and $x(t)
\leq C\1$, for all regular $t\in (0,\tau_1]$,
\[
\Delta_q \leq \left[ \Gamma x(t) \right]^\top q^\alpha(t) \leq C \1^\top q^{\alpha}(t).
\]
By Jensen's inequality,
\[
\left(\frac{1}{|\Fscr|} \1^\top q^{\alpha}(t)\right)^{\frac{1+\alpha}{\alpha}}
\leq
\frac{1}{|\Fscr|} \1^\top q^{1+\alpha}(t).
\]
Therefore, for regular $t\in (0,\tau_1]$,
\begin{equation}\label{eq:Dq}
\Delta_q \leq C |\Fscr|^{\frac{1}{1+\alpha}} \left(2\beps_1\right)^{\frac{\alpha}{1+\alpha}}.
\end{equation}

Again using the fact that $\1^\top q^{1+\alpha}(t) \leq 2\beps_1$ for $t\in
[0,\tau_1]$, it follows that $q_e(t) \leq \beps_3 \defeq (2
\beps_1)^{1/1+\alpha}$, for all $e \in \Escr$. Now, consider the set
\[
\Fscr' \defeq \{ f\in\Fscr\ :\ n_f(0) \geq 4 C \beps_3 \}.
\]
By Lipschitz continuity, there exists some $0 < T_2 < \tau_1$ such that,
for all $f\in\Fscr'$ and all $t\in [0,T_2]$, $n_f(t) \geq 2 C
\beps_3$. Then, by \FMSeq{enum:flow}, $x_f(t) = C$ for  regular $t\in
(0,T_2]$. Thus, if $f\in\Fscr'$, we have, for regular $t\in
(0,T_2]$,
\begin{equation}\label{eq:Dn-1}
\frac{n_f^{\alpha}(t)}{\mu_f \rho_f^\alpha}
\dot{n}_f(t)  
=
\frac{n_f^{\alpha}(t)}{\mu_f \rho_f^\alpha}
\big(\rho_f - x_f(t)\big)
\leq
-
\frac{n_f^{\alpha}(t)}{\mu_f \rho_f^\alpha}
(C - \rho_f)
\leq
- \beta_1 n_f^\alpha(t),
\end{equation}
where we define
\[
\beta_1 \defeq \min_{f\in\Fscr} \frac{C-\rho_f}{\mu_f \rho_f^\alpha} > 0.
\]
Finally, if $f \notin \Fscr'$, we have, for regular $t\in (0,T_2]$,
\begin{equation}\label{eq:Dn-2}
\frac{n_f^{\alpha}(t)}{\mu_f \rho_f^\alpha}
\dot{n}_f(t)  
=
\frac{n_f^{\alpha}(t)}{\mu_f \rho_f^\alpha}
\big(\rho_f - x_f(t)\big)
\leq
\frac{(2 C)^\alpha \rho_f^{1-\alpha}
  \beps_1^{\frac{\alpha}{1+\alpha}}}
{\mu_f}
\leq
\beta_2 (2 C)^\alpha 
  \beps_1^{\frac{\alpha}{1+\alpha}},
\end{equation}
where we define
\[
\beta_2 \defeq \max_{f\in\Fscr} \frac{\rho_f^{1-\alpha}}{\mu_f}.
\]

Using \eqref{eq:dtL} and \eqref{eq:Dq}--\eqref{eq:Dn-2}, it follows that, for
all $t\in[0,T_2]$,
\begin{equation}\label{eq:dtL-5}
\dt 
L_\alpha\big(n(t), q(t)\big)
\leq
(1+\alpha)
\left(
\left[
C |\Fscr|^{\frac{1}{1+\alpha}} 
+
\beta_2 |\Fscr| (2 C)^\alpha 
\right]
\beps_1^{\frac{\alpha}{1+\alpha}}
- \beta_1 \sum_{f\in\Fscr'} n_f^\alpha(t)
\right).
\end{equation}
From \eqref{eq:dtL-4}, for all $t \geq 0$, $L_\alpha\big(n(t), q(t)\big) \leq
1$. For $t\in [0,T_2]$, $\1^\top q(t)^{1+\alpha} \leq 2 \beps_1$. Suppose
that $\beps_1 < 1/4$. Then,
\[
\sum_{f \in \Fscr'} n_f^{1+\alpha}(t)  
\geq 
\frac{1 - 2 \beps_1}
{
  \max_{f\in \Fscr} \frac{1}{\mu_f \rho_f^\alpha}
}
\geq
\frac{1}
{
  4\max_{f\in \Fscr} \frac{1}{\mu_f \rho_f^\alpha}
}.
\]
for $t\in [0,T_2]$.
From part~(\ref{enum:p-q}) of Lemma~\ref{lem:norm},
\[
\sum_{f \in \Fscr'} n_f^{\alpha}(t) 
\geq \left( \frac{1} { 4\max_{f\in
      \Fscr} \frac{1}{\mu_f \rho_f^\alpha} } \right)^{\frac{\alpha}{1+\alpha}}
\defeq \beta_3 > 0.
\]
Then, for regular $t\in (0,T_2]$,
\[
\dt 
L_\alpha\big(n(t), q(t)\big)
\leq
-(1+\alpha) \beta_1\beta_3/2 < 0.
\]
It follows that there exists $\delta_2 > 0$ such that, for all $t \geq T_2$,
\begin{equation}\label{eq:delta-2}
  L_\alpha\big(n(t), q(t)\big)  \leq  1-\delta_2.
\end{equation}

Lemma~\ref{lem:fluid-drain} follows from \eqref{eq:delta-1} and
\eqref{eq:delta-2} with $\delta_* = \min\{\delta_1, \delta_2\}$ and $T_* =
\max\{T_1, T_2\}$.

\section{Critical Loading}
\label{sec:critical-loading}

We have established the throughput optimality of the system under the \MWUMa
control policy, for any $\alpha\in (0,\infty)\setminus\{1\}$. Thus, this
entire family of policies possesses good \emph{first order}
characteristics. Further, there may be many other throughput optimal policies
outside the class of \MWUMa policies.  This naturally raises the question of
whether there is a `best' choice of $\alpha$, and how the resulting \MWUMa
policy might compare against the universe of all other policies.

In order to answer these questions, we desire a more refined analysis of
policy performance than throughput optimality. One way to obtain such an
analysis is via the study of a \emph{critically loaded system}, i.e., a system
with critically admissible arrival rates. Under a critical loading, fluid
model solutions take non-trivial values over entire horizon. In contrast, for
strictly admissible systems under throughput optimal policies, all fluid
trajectories go to $\bzero$ (cf.\ Lemma~\ref{lem:fluid-drain}). We will employ
the study of the fluid model solutions of critically loaded systems as a tool
for the comparative analysis of network control policies.

In particular, given a vector of flow counts, $n\in \Rp^{\Fscr}$, and the
vector of queue lengths, $q\in \Rp^{\Escr}$, define the linear cost function
\begin{equation}\label{eq:cost}
c(n,q) \defeq 
\sum_{f \in \Fscr} \frac{n_f}{\mu_f}
+ \sum_{e\in \Escr} q_e = \1^\top 
\left[\Gamma \diag(\mu^{-1}) n + q\right].
\end{equation}
This cost function is analogous to a `minimum delay' objective in a
packet-level queueing network: a cost is incurred for each queued packet, and
a cost is also incurred for each outstanding flow, proportional to the number
of packets that it is expected to generate.


In this section, we establish fundamental lower bounds that apply to the cost
incurred in a critically loaded fluid model under \emph{any} scheduling
policy. In Sections~\ref{sec:balanced} and \ref{sec:invariant}, we will
compare these with the costs incurred by \MWUMa control policies. We shall
find that as $\alpha \to 0^+$, the cost induced by the \MWUMa
algorithms improves and becomes close to the algorithm independent lower bound

\subsection{Virtual Resources and Workload}

We begin with some definitions. First, consider the dual of the LP
$\PRIMAL(\lambda)$,
\[
\begin{array}{l@{\ }lll}
\DUAL(\lambda) \defeq
&
\maximize_\zeta & \lambda^\top \zeta \\
& \subjectto
&
\Pi^\top \zeta  \leq \1, 
\\[2pt]
& &
\zeta\in\Rscr^\Escr_+.
\end{array}
\]
Note that there is no duality gap, thus the value of $\PRIMAL(\lambda)$ is
equal to the value of $\DUAL(\lambda)$.

\begin{definition}[\textbf{Virtual Resource}] 
  We will call any feasible solution $\zeta \in \Rp^{\Escr}$ of dual
  optimization problem $\DUAL(\Xi\Gamma \rho)$ a \textbf{virtual resource}.
  Suppose the system is critically loaded, i.e., the offered load vector
  $\rho$ satisfies
  \[
  \Leff(\rho) = \PRIMAL(\Xi \Gamma \rho) = \DUAL(\Xi \Gamma \rho) = 1.
  \]
  Then, we call a virtual resource that is an optimal solution of
  $\DUAL(\Xi \Gamma \rho)$ a \textbf{critical virtual resource}.
\end{definition}

For a critically loaded system with offered load vector $\rho$, let
$\CR(\rho)$ denote the set of all critical virtual resources. Note that
$\CR(\rho)$ is a bounded polytope and hence possesses finitely many extreme
points. 
Let $\CRE(\rho)$ denote the set of extreme points of $\CR(\rho)$.

The following definition captures the amount of `work' associated with a
critical resource, as a function of the current state of the system.
\begin{definition}[\textbf{Workload}]
  Consider a critically loaded system with an offered load vector $\rho$ and a
  critical virtual resource $\zeta\in\CR(\rho)$. If the flow count and
  queue length vectors are given by $(n,q)$, the \textbf{workload} associated
  with the resource $\zeta$ is defined to be
  \[
  w_\zeta(n,q) \defeq \zeta^\top \Xi \left[ q + \Gamma \diag(\mu)^{-1} n\right].
  \]
\end{definition}

\subsection{A Lower Bound on Fluid Trajectories}

Consider a critically loaded system with offered load vector $\rho$. We claim
that the following fundamental lower bound holds on the fluid trajectory under
\emph{any} algorithm. This bound can be thought of as a minimal
work-conservation requirement.

\begin{lemma}\label{lem:wc}
  Consider the fluid model trajectory of system under any scheduling and rate
  allocation policy, with flow count and queue length processes given by
  $\big(n(\cdot),q(\cdot)\big)$. Then, for any time $t \geq 0$ and any critical
  virtual resource $\zeta \in \CR(\rho)$,
  \begin{equation}\label{eq:wc}
    w_\zeta\big(n(0),q(0)\big)
    \leq
    w_\zeta\big(n(t),q(t)\big).
  \end{equation}
\end{lemma}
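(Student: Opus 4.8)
The plan is to show that $w_\zeta(n(t),q(t))$ is non-decreasing by differentiating it at regular points and using the algorithm-independent fluid model equations \FMSeq{enum:one}--\FMSeq{enum:eight} together with the dual feasibility of $\zeta$. Since $w_\zeta(n,q) = \zeta^\top \Xi [q + \Gamma\diag(\mu)^{-1} n]$ is linear in $(n,q)$ and all components are Lipschitz (hence absolutely continuous), it suffices to prove that $\frac{d}{dt} w_\zeta(n(t),q(t)) \geq 0$ for almost every $t$, and then integrate. So first I would compute, at a regular point $t$,
\[
\frac{d}{dt} w_\zeta\big(n(t),q(t)\big)
= \zeta^\top \Xi \big[\dot q(t) + \Gamma \diag(\mu)^{-1}\dot n(t)\big].
\]

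Next I would substitute the fluid equations. From \FMSeq{enum:two}, \FMSeq{enum:three}, and \FMSeq{enum:four}, $\dot n(t) = \nu - \diag(\mu)\dot{\bar x}(t) = \nu - \diag(\mu) x(t)$, so $\Gamma\diag(\mu)^{-1}\dot n(t) = \Gamma\rho - \Gamma x(t)$ (using $\rho = \diag(\mu)^{-1}\nu$). From \FMSeq{enum:five}, \FMSeq{enum:six}, and \FMSeq{enum:eight} (and the general non-idling bookkeeping — note here we are under \emph{any} policy, so $z$ need not vanish, but $z$ is non-decreasing, which is the direction we need), $\dot q(t) = -(I-R^\top)\Pi \dot s(t) + (I-R^\top)\dot z(t) + \Gamma x(t)$. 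Adding these, the $\Gamma x(t)$ terms cancel and we get
\[
\dot q(t) + \Gamma\diag(\mu)^{-1}\dot n(t)
= \Gamma\rho - (I-R^\top)\Pi\dot s(t) + (I-R^\top)\dot z(t).
\]
Now apply $\zeta^\top\Xi$. Since $\Xi = (I-R^\top)^{-1}$, we have $\zeta^\top\Xi(I-R^\top) = \zeta^\top$, so the middle and last terms become $-\zeta^\top\Pi\dot s(t) + \zeta^\top\dot z(t)$. Thus
\[
\frac{d}{dt} w_\zeta\big(n(t),q(t)\big)
= \zeta^\top\Xi\Gamma\rho - \zeta^\top\Pi\dot s(t) + \zeta^\top\dot z(t).
\]

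Finally I would bound each term. Because $\zeta$ is dual-feasible for $\DUAL(\Xi\Gamma\rho)$, we have $\Pi^\top\zeta \leq \1$ componentwise, hence $\zeta^\top\Pi\dot s(t) = (\Pi^\top\zeta)^\top\dot s(t) \leq \1^\top\dot s(t) = \frac{d}{dt}\big(\1^\top s(t)\big) = 1$ by \FMSeq{enum:seven} (and $\dot s(t)\geq 0$ by \FMSeq{enum:eight}, which is what lets us use the inequality $\Pi^\top\zeta\le\1$ in this direction). Since $\zeta$ is a \emph{critical} virtual resource, $\zeta^\top\Xi\Gamma\rho = \Leff(\rho) = 1$. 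And $\zeta^\top\dot z(t)\geq 0$ since $\zeta\geq 0$ and $z(\cdot)$ is non-decreasing. Combining, $\frac{d}{dt}w_\zeta(n(t),q(t)) \geq 1 - 1 + 0 = 0$ at every regular point, so $w_\zeta(n(\cdot),q(\cdot))$ is non-decreasing, giving \eqref{eq:wc}. The only mild subtlety — and the step I would be most careful about — is the handling of the $z$ term and the sign conventions: one must invoke that $z$ is non-decreasing (true under any policy by \FMSeq{enum:eight}) rather than that $z\equiv\bzero$ (which only holds under \MWUMa via \FMSeq{enum:idling}), and one must confirm the direction of the dual-feasibility inequality matches the sign of $\dot s\ge 0$; both work out in our favor, which is exactly why this is a genuine lower bound valid for all policies.
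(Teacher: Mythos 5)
Your proposal is correct and follows essentially the same argument as the paper: differentiate $w_\zeta$ at regular points, substitute the algorithm-independent fluid equations (F2)--(F6), cancel the $\Gamma x$ terms, hit the result with $\zeta^\top \Xi$, and use dual feasibility together with (F7)--(F8) and $\zeta^\top \Xi\Gamma\rho = 1$ to conclude the derivative is non-negative. The point you flag as the subtlety (using that $z(\cdot)$ is non-decreasing rather than identically zero, and the sign of $\dot s \geq 0$ when invoking $\Pi^\top\zeta \leq \1$) is handled exactly the same way in the paper's proof.
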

\begin{proof}
  Given a time interval $[0,T]$, for any $T > 0$, consider the fluid model
  trajectory $\zscr(\cdot)$ of the form \eqref{eq:fms-path}.  By
  Theorem~\ref{thm:fms}, this fluid trajectory must satisfy the algorithm
  independent fluid model equations \FMSeq{enum:one}--\FMSeq{enum:eight}.  By
  \FMSeq{enum:one}, the trajectory is Lipschitz continuous and differentiable
  for almost all $t \in (0,T)$.  For any such regular point $t$, by
  \FMSeq{enum:two}--\FMSeq{enum:four}, we have
  \[
  \dot{n}(t) = \nu - \diag(\mu) \dot{\bar{x}}(t). 
  \]
  Thus,
  \begin{equation}\label{eq:lb1}
    \Gamma \diag(\mu^{-1}) \dot{n}(t)  =  \Gamma \rho -  \Gamma \dot{\bar{x}}(t).
  \end{equation}
  From \FMSeq{enum:five}--\FMSeq{enum:six}, we obtain
  \begin{equation}\label{eq:lb2}
    \dot{q}(t) = 
    \left(I-R^\top\right) \dot{z}(t) 
    - \left(I-R^\top\right) \Pi \dot{s}(t)
    +
    \Gamma \dot{\bar{x}}(t).
  \end{equation}
  Adding \eqref{eq:lb1} and \eqref{eq:lb2}, we obtain
  \[
    \dot{q}(t) +  \Gamma \diag(\mu^{-1}) \dot{n}(t)  
    =  \Gamma \rho + \left(I-R^\top\right) \left(\dot{z}(t) 
    - \Pi \dot{s}(t)\right). 
  \]
  Now, multiplying both sides by $\Xi\defeq \left(I-R^\top\right)^{-1}$, we
  obtain
  \begin{equation}\label{eq:lb4}
    \Xi \left[\dot{q}(t) +  \Gamma \diag(\mu^{-1}) \dot{n}(t)\right] =
    \Xi \Gamma \rho + \dot{z}(t) - \Pi \dot{s}(t). 
  \end{equation}
  Consider a critical virtual resource $\zeta \in \CR(\rho)$.  Since $\zeta$
  is $\DUAL(\Xi \Gamma \rho)$ optimal, $\zeta^\top \Xi \Gamma \rho = 1$.
  Taking an inner product of \eqref{eq:lb4} with $\zeta$, we obtain
  \begin{equation}\label{eq:lb6}
    \zeta^\top \Xi \left[\dot{q}(t) +  \Gamma \diag(\mu^{-1}) \dot{n}(t)\right] 
    =
    1 + \zeta^\top \dot{z}(t) - \zeta^\top \Pi \dot{s}(t). 
  \end{equation}
  Now, by \FMSeq{enum:eight}, $z(\cdot)$ is non-decreasing. Since $\zeta$ is
  non-negative, then $\zeta^\top \dot{z}(t) \geq 0$.  By \FMSeq{enum:eight},
  $\dot{s}(t)$ is non-negative. Since $\zeta$ is $\DUAL(\Xi \Gamma \rho)$
  feasible and from \FMSeq{enum:seven}, it follows that $ \zeta^\top \Pi
  \dot{s}(t) \leq \bone^\top \dot{s}(t) = 1.  $ Applying these observations to
  \eqref{eq:lb6}, it follows that
  \[
  \dt w_\zeta\big(n(t),q(t)\big)
  = \zeta^\top \Xi \left[\dot{q}(t) +  \Gamma \diag(\mu^{-1}) \dot{n}(t)\right] 
  \geq 0.
  \]
  Given that $\big(n(\cdot),q(\cdot)\big)$ is Lipschitz continuous, the
  desired result follows immediately.
\end{proof}

Lemma~\ref{lem:wc} guarantees the conservation of workload under any
policy. This motivates the \emph{effective cost} of a state
$(n,q)\in\Rp^\Fscr\times\Rp^\Escr$, defined by the linear program
\begin{equation}\label{eq:eff-cost}
\begin{array}{l@{\ }lll}
c^*(n,q) \defeq
&
\minimize_{n',q'} & c(n',q') \\
& \subjectto 
& w_\zeta(n',q') \geq w_\zeta(n,q),
~\forall\ \zeta\in\CRE(\rho),
\\[2pt]
& &
n\in\Rp^\Fscr,\quad q\in\Rp^\Escr.
\end{array}
\end{equation}
The effective cost is the lowest cost of any state with at least as much
workload as $(n,q)$. We have the following lower bound on the cost achieved
under any fluid trajectory:

\begin{theorem}\label{thm:eff-cost}
  Consider fluid model trajectory of system under any scheduling and rate
  allocation policy, with flow count and queue length processes given by
  $\big(n(\cdot),q(\cdot)\big)$. Then, for any time $t \geq 0$, the
  instantaneous cost $c\big(n(t),q(t)\big)$ is bounded below according to
  \begin{equation}\label{eq:cost-lower-bound}
    c^*\big(n(0),q(0)\big) \leq c\big(n(t),q(t)\big).
  \end{equation}
\end{theorem}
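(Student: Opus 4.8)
The plan is to observe that Theorem~\ref{thm:eff-cost} follows almost immediately by combining Lemma~\ref{lem:wc} with the definition of the effective cost $c^*$ in \eqref{eq:eff-cost}. The key idea is that the state $\big(n(t),q(t)\big)$ reached at any time $t$ is itself a feasible point for the linear program defining $c^*\big(n(0),q(0)\big)$, so the optimal value of that program cannot exceed the objective value $c\big(n(t),q(t)\big)$ achieved at this particular feasible point.

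In detail, I would first fix an arbitrary $t \geq 0$. By Lemma~\ref{lem:wc}, for every critical virtual resource $\zeta \in \CR(\rho)$ we have $w_\zeta\big(n(0),q(0)\big) \leq w_\zeta\big(n(t),q(t)\big)$; in particular this holds for every extreme point $\zeta \in \CRE(\rho)$, since $\CRE(\rho) \subseteq \CR(\rho)$. Next I would note that $\big(n(t),q(t)\big) \in \Rp^\Fscr \times \Rp^\Escr$, because every component of a fluid model solution lies in the nonnegative orthant by the definition of $\mathfrak{Z}$ in \eqref{eq:fms-path}. Hence the pair $(n',q') = \big(n(t),q(t)\big)$ satisfies all the constraints of the program \eqref{eq:eff-cost} defining $c^*\big(n(0),q(0)\big)$. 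Since that program minimizes $c(n',q')$ over all feasible $(n',q')$, we conclude $c^*\big(n(0),q(0)\big) \leq c\big(n(t),q(t)\big)$, which is exactly \eqref{eq:cost-lower-bound}.

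There is essentially no hard step: the substance of the theorem resides entirely in Lemma~\ref{lem:wc} (workload conservation under any policy) together with the bookkeeping of packaging those workload inequalities into a linear program. The only point deserving a moment's attention is the match of index sets — Lemma~\ref{lem:wc} supplies workload monotonicity for all of $\CR(\rho)$, whereas \eqref{eq:eff-cost} only imposes constraints indexed by the extreme points $\CRE(\rho)$ — but the containment $\CRE(\rho) \subseteq \CR(\rho)$ makes this direction immediate. (One does not even need the converse observation that feasibility against all extreme points implies the inequality against every $\zeta\in\CR(\rho)$, which would follow from linearity of $\zeta \mapsto w_\zeta(n,q)$.) Thus the proof is a short, direct argument rather than one requiring any new estimate.
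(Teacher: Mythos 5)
Your proposal is correct and is essentially the paper's own proof: Lemma~\ref{lem:wc} makes $\big(n(t),q(t)\big)$ feasible for the linear program \eqref{eq:eff-cost} defining $c^*\big(n(0),q(0)\big)$, so the bound follows from optimality of the LP value. The extra remark about $\CRE(\rho)\subseteq\CR(\rho)$ is a harmless spelling-out of the same one-line argument.
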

\begin{proof}
  By Lemma~\ref{lem:wc}, if the initial condition of a fluid trajectory
  satisfies $\big(n(0),q(0)\big)=(n,q)$, then $\big(n(t),q(t)\big)$ is
  feasible for \eqref{eq:eff-cost} for every $t \geq 0$. The result
  immediately follows.
\end{proof}

\section{Balanced Systems}
\label{sec:balanced}

In this section, we will develop a bound on the cost achieved in a fluid model
solution under the \MWUMa policy. In particular, we will establish that this
cost, at any instant of time, is within a constant factor of the cost
achievable under \emph{any} policy. The constant factor is uniform across the
entire fluid trajectory, and relates to a notion of `balance' of the critical
resources of the network that we will describe shortly.

We begin with a preliminary lemma, that provides an upper bound on the cost
under the \MWUMa policy. This upper bound is closely related to the Lyapunov
function introduced earlier for studying the system stability.

\begin{lemma}\label{lem:ub}
  Consider a fluid model trajectory of system under the \MWUMa policy, and
  denote the flow count and queue length processes by
  $\big(n(\cdot),q(\cdot)\big)$. Suppose that the offered load vector $\rho$
  satisfies $\Leff(\rho) \leq 1$. Then, at any time $t \geq 0$, it must be
  that
  \begin{equation}\label{eq:lub}
    c\big(n(t),q(t)\big) 
    \leq 
    \big(1+ \beta(\alpha)\big)
    c\big(n(0),q(0)\big),
  \end{equation}
  where $\beta(\alpha) \to 0$ as $\alpha \to 0^+$. 
\end{lemma}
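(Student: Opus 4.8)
The plan is to bound the instantaneous cost $c\big(n(t),q(t)\big)$ using the monotonicity of the Lyapunov function $L_\alpha$ established in Lemma~\ref{lem:fluid-drain}, since $\Leff(\rho) \le 1$ gives $L_\alpha\big(n(t),q(t)\big) \le L_\alpha\big(n(0),q(0)\big)$ for all $t \ge 0$. The key is to relate the linear cost $c(n,q) = \1^\top[\Gamma \diag(\mu^{-1}) n + q]$ to the power-sum Lyapunov function $L_\alpha(n,q) = \sum_f n_f^{1+\alpha}/(\mu_f \rho_f^\alpha) + \sum_e q_e^{1+\alpha}$. Both are functions of the $|\Fscr|+|\Escr|$ nonnegative coordinates $(n_f, q_e)$; $c$ is a weighted $\ell_1$-type norm and $L_\alpha$ is a weighted $\ell_{1+\alpha}$-type norm raised to the $(1+\alpha)$ power. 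So I would introduce the normed version $\ell(n,q) \defeq L_\alpha(n,q)^{1/(1+\alpha)}$ from Section~\ref{sec:stability-proofs}, which is a genuine norm, and compare $c$ with $\ell$ via Lemma~\ref{lem:norm}-type norm-equivalence inequalities on $\R_+^{\Fscr} \sqcup \R_+^{\Escr}$.

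Concretely: first, since $\ell_{1+\alpha} \le \ell_1$ coordinatewise (power means are monotone, or directly $\|v\|_{1+\alpha} \le \|v\|_1$ for $v \ge 0$), and accounting for the coefficient weights $1/(\mu_f\rho_f^\alpha)$ versus $1/\mu_f$ and the change of variable, one gets an upper bound $\ell\big(n(t),q(t)\big) \le K_1(\alpha)\, c\big(n(t),q(t)\big)^{?}$ — but the cleaner route is to directly establish two inequalities of the form
\[
a(\alpha)\, c(n,q) \le \ell(n,q) \le b(\alpha)\, c(n,q)
\]
valid for all $(n,q)$ in the \emph{bounded} region $\ell(n,q) \le \ell\big(n(0),q(0)\big)$; note that norm equivalence between $\ell_1$ and $\ell_{1+\alpha}$ on a finite-dimensional space is not scale-invariant here because the exponent differs, so the constants must be allowed to depend on the scale $\ell(n(0),q(0))$ — actually the right move is to use that $x^{1+\alpha} \le x$ for $x \in [0,1]$ and $x^{1+\alpha} \le M^\alpha x$ for $x \in [0,M]$, which handles one direction, and Jensen / power-mean for the other. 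Combining: $c\big(n(t),q(t)\big) \le a(\alpha)^{-1}\ell\big(n(t),q(t)\big) \le a(\alpha)^{-1}\ell\big(n(0),q(0)\big) \le a(\alpha)^{-1} b(\alpha)\, c\big(n(0),q(0)\big)$, and then I would show $a(\alpha)^{-1}b(\alpha) = 1 + \beta(\alpha)$ with $\beta(\alpha) \to 0$ as $\alpha \to 0^+$, using that as $\alpha \to 0^+$ the $\ell_{1+\alpha}$ norm converges to the $\ell_1$ norm and the weight ratio $\rho_f^{-\alpha} \to 1$.

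The main obstacle I anticipate is the \emph{scale dependence} of the norm-equivalence constants: because $c$ is degree-$1$ and $L_\alpha$ is degree-$(1+\alpha)$, any bound $c \le \text{const}\cdot \ell$ must have a constant that grows with the magnitude of the state, so a naive argument would give a constant depending on $c\big(n(0),q(0)\big)$ rather than a universal $1+\beta(\alpha)$. The resolution must exploit something sharper than crude norm equivalence — most likely the precise structure of the inequality, perhaps normalizing so that $\ell\big(n(0),q(0)\big)$ is factored out cleanly, i.e. working with the ratios $c(n,q)/\ell(n,q)$ directly and bounding $\max c/\ell$ over the feasible set versus $\min c/\ell$; since $\ell$ is $1$-homogeneous one can rescale to $\ell = 1$, and then on the simplex-like set $\{\ell = 1\}$ one compares $c$ against $\1$, using $c(n,q) \le \1^\top[\cdots]$ with each coordinate $\le 1$ hence each coordinate raised to $1+\alpha$ is at least the coordinate to a smaller power — the point being that on $\{\ell \le 1\}$ all coordinates are bounded by a fixed constant, killing the scale issue. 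I expect the write-up to hinge on getting this homogeneity bookkeeping exactly right, and on verifying $\beta(\alpha) \to 0$ via continuity of $\alpha \mapsto (\text{ratio of extreme values of } c \text{ on } \{\ell=1\})$ at $\alpha = 0^+$, where both $\ell$ and the weights degenerate to the $c$-defining quantities.
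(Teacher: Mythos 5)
Your proposal is correct and takes essentially the same route as the paper: use the monotonicity $L_\alpha\big(n(t),q(t)\big)\leq L_\alpha\big(n(0),q(0)\big)$ from Lemma~\ref{lem:fluid-drain}, then compare $c$ with the $1$-homogeneous functional $\ell=L_\alpha^{1/(1+\alpha)}$ via the norm inequalities of Lemma~\ref{lem:norm}, the factors being the dimension term $d^{\alpha/(1+\alpha)}$ and the per-flow weight ratios (powers of $\rho_f$ or $\nu_f$), all of which tend to $1$ as $\alpha\to 0^+$. Your mid-proof worry about scale-dependent constants is a false alarm for precisely the reason you settle on at the end — both $c$ and $\ell$ are positively homogeneous of degree one, so the equivalence constants are scale-free — and that is exactly how the paper's proof works.
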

\begin{proof}
  Recall the Lyapunov function $L_\alpha$ from \eqref{eq:lyapunov}. It follows
  from Lemma~\ref{lem:fluid-drain} that, so long as $\Leff(\rho) \leq 1$,
  \begin{equation}\label{eq:lub0}
    L_\alpha\big(n(t), q(t)\big) \leq L_\alpha\big(n(0), q(0)\big).
  \end{equation}
  Applying Lemma~\ref{lem:norm}, with $p\defeq
  1+\alpha$ and $d \defeq |\Escr| + |\Fscr|$,
  \[
    \sum_{f \in \Fscr} 
    \frac{n_f(t)}{\mu_f} 
    \left(\frac{1}{\nu_f}\right)^{\frac{\alpha}{1+\alpha}}
    + \sum_{e\in \Escr} q_e(t)
    \leq
    d^{\frac{\alpha}{1+\alpha}} 
    \left[\sum_{f \in \Fscr} 
      \frac{n_f(0)}{\mu_f} 
      \left(\frac{1}{\nu_f}\right)^{\frac{\alpha}{1+\alpha}} 
      + \sum_{e\in \Escr} q_e(0) \right].
  \]
  Now, as $\alpha\to 0^+$, $d^{\frac{\alpha}{1+\alpha}} \to 1$. 
  Also,
  \begin{equation}\label{eq:lub5}
    \left(\frac{1}{\nu^*}\right)^{\frac{\alpha}{1+\alpha}}
    \leq
    \left(\frac{1}{\nu_f}\right)^{\frac{\alpha}{1+\alpha}}
    \leq
    \left(\frac{1}{\nu_*}\right)^{\frac{\alpha}{1+\alpha}},
  \end{equation}
  where $\nu_* \defeq \min_f \nu_f$ and $\nu^* \defeq \max_f \nu_f$. 
  Thus, as $\alpha\to 0^+$, $1/\nu_f\tends 1$ uniformly over $f$.
  The result then follows.
\end{proof}

The following definition is central to our performance guarantee:
\begin{definition}[\textbf{Balance Factor}]
  Given a system that is critically loaded with offered load vector $\rho$,
  define the
  \textbf{balance factor} as the value of the optimization
  problem
  \[
  \begin{array}{l@{\ }lll}
    \gamma(\rho) \defeq
    &
    \minimize_{n,q,n',q'}
    & 
    \displaystyle c(n',q')
    \\
    &
    \subjectto
    & 
    w_\zeta(n',q')
    \geq
    w_\zeta(n,q),\ 
    \forall\ \zeta\in\CRE(\rho), \\[2pt]
    & &
    c(n,q) = 1, \\[2pt]
    & & 
    n,n'\in\Rp^\Fscr,
    \quad
    q,q'\in\Rp^\Escr.
  \end{array}
  \]
\end{definition}

It is clear that $\gamma(\rho)\geq 0$, since $n',q'\geq \bzero$. Since there
are feasible solutions with $(n,q)=(n',q')$, it is also true that
$\gamma(\rho)\leq 1$. In order to interpret $\gamma(\rho)$, assume for the
moment that there is only a single critical extreme resource
$\zeta\in\CRE(\rho)$. If we define $v \defeq \Xi^\top \zeta$, then the
constraint that $w_\zeta(n',q') \geq w_\zeta(n,q)$ is equivalent to
\[
v^\top\left[ \Gamma \diag(\mu)^{-1} n' + q' \right]
\geq
v^\top\left[ \Gamma \diag(\mu)^{-1} n + q \right].
\]
In this case, it is clear that the solution to the LP defining $\gamma(\rho)$ is
given by
\[
\gamma(\rho) = 
(\min_e v_e)/(\max_e v_e).
\]
Hence, $\gamma(\rho)$ is the measure of the degree of `balance' of the
influence of the critical resource $\zeta$ across buffers in the network.

In the more general case (i.e., $|\CRE(\rho)|\geq 1$), define the set $\Vscr
\defeq \text{span}\ \{\Xi^\top \zeta\ :\ \zeta\in\CRE(\rho)\}$. It is not
difficult to see that $\gamma(\rho)>0$ if and only if, for each queue $e \in
\Vscr$, there exists some $v\in\Vscr$ with $v_e > 0$, i.e., if every queue is
influenced by \emph{some} critical resource. We call networks where
$\gamma(\rho) > 0$ \emph{balanced}. In the extreme, if $\1 \in \Vscr$, then
$\gamma(\rho) = 1$.

The following is the main theorem of this section. It offers a bound on the
cost incurred at any instant in time under the \MWUMa policy, relative that
incurred under \emph{any} other policy. This bound is a function of the
balance factor.
\begin{theorem}\label{thm:balance}
  Consider fluid model trajectory of a critically loaded system under the
  \MWUMa policy and denote the flow count and queue length processes by
  $\big(n(\cdot),q(\cdot)\big)$. Suppose that $\gamma(\rho) > 0$. Let
  $\big(n'(\cdot),q'(\cdot)\big)$ be the flow count and queue length policies
  under an arbitrary policy given the same initial conditions, i.e.,
  $n(0)=n'(0)$ and $q(0)=q'(0)$. Then, at any time $t\geq 0$, it must be that
  \begin{equation}\label{eq:balance-bound}
    c\big(n(t),q(t)\big)
    \leq
    \frac{1+\beta(\alpha)}{\gamma(\rho)}
    c\big(n'(t),q'(t)\big),
  \end{equation}
  where $\beta(\alpha) \to 0$ as $\alpha \to 0^+$. 
\end{theorem}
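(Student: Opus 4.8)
The plan is to chain three ingredients: the upper bound on $c$ under the \MWUMa policy from Lemma~\ref{lem:ub}, the workload-conservation inequality valid under any policy from Lemma~\ref{lem:wc}, and the definition of the balance factor $\gamma(\rho)$. First I would apply Lemma~\ref{lem:ub} to the \MWUMa trajectory $\big(n(\cdot),q(\cdot)\big)$ (legitimate since the system is critically loaded, so $\Leff(\rho)=1\leq 1$), obtaining $c\big(n(t),q(t)\big)\leq\big(1+\beta(\alpha)\big)\,c\big(n(0),q(0)\big)$ for all $t\geq 0$. This reduces the theorem to proving $c\big(n(0),q(0)\big)\leq\gamma(\rho)^{-1}\,c\big(n'(t),q'(t)\big)$, i.e.\ that the common initial cost is no more than $\gamma(\rho)^{-1}$ times the cost of the arbitrary policy at any later time.

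For that reduction I would invoke Lemma~\ref{lem:wc} applied to the arbitrary-policy trajectory $\big(n'(\cdot),q'(\cdot)\big)$. Since $n'(0)=n(0)$ and $q'(0)=q(0)$, for every critical virtual resource $\zeta\in\CR(\rho)$ — in particular for every extreme critical resource $\zeta\in\CRE(\rho)$ — we get $w_\zeta\big(n(0),q(0)\big)=w_\zeta\big(n'(0),q'(0)\big)\leq w_\zeta\big(n'(t),q'(t)\big)$. Thus the pair of states $\big((n(0),q(0)),(n'(t),q'(t))\big)$ satisfies exactly the workload inequalities appearing in the optimization problem that defines $\gamma(\rho)$.

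It remains to exploit the positive homogeneity of that LP. If $c\big(n(0),q(0)\big)=0$, the claim is immediate: Lemma~\ref{lem:ub} then forces $c\big(n(t),q(t)\big)=0$, while the right-hand side of \eqref{eq:balance-bound} is nonnegative. So assume $\kappa\defeq c\big(n(0),q(0)\big)>0$. Because $c$ and each $w_\zeta$ are positively homogeneous of degree one, the rescaled pair $\big((n(0)/\kappa,q(0)/\kappa),(n'(t)/\kappa,q'(t)/\kappa)\big)$ is feasible for the problem defining $\gamma(\rho)$ — the normalization gives $c\big(n(0)/\kappa,q(0)/\kappa\big)=1$ and the workload inequalities survive the common scaling — so by optimality $c\big(n'(t)/\kappa,q'(t)/\kappa\big)\geq\gamma(\rho)$, i.e.\ $c\big(n'(t),q'(t)\big)\geq\gamma(\rho)\,c\big(n(0),q(0)\big)$. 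Combining this with the first step yields \eqref{eq:balance-bound}.

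The argument is essentially bookkeeping once these three pieces are assembled, and I do not expect a substantial obstacle. The two points that need a word of care are: (i) Lemma~\ref{lem:wc} is stated over all of $\CR(\rho)$ whereas the $\gamma(\rho)$ LP ranges only over the extreme points $\CRE(\rho)\subseteq\CR(\rho)$, so the inequalities we need form a subset of those guaranteed; and (ii) the degenerate initial state $c\big(n(0),q(0)\big)=0$ must be dispatched separately before normalizing.
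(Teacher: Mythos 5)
Your proposal is correct and follows essentially the same route as the paper's proof: apply Lemma~\ref{lem:wc} to the arbitrary-policy trajectory, normalize by the common initial cost to obtain a feasible point of the LP defining $\gamma(\rho)$, and combine with Lemma~\ref{lem:ub}. The only cosmetic difference is your handling of the degenerate case via $c\big(n(0),q(0)\big)=0$ and Lemma~\ref{lem:ub}, where the paper instead cites Theorem~\ref{thm:invariant}; both dispose of it equally well.
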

\begin{proof}
  First, note that if $\big(n(0),q(0)\big)=\bzero$, i.e., the system is empty,
  then this holds for all $t \geq 0$ (cf.\ Theorem~\ref{thm:invariant}). In
  this case, \eqref{eq:balance-bound} is immediate. Otherwise, fix $t \geq 0$,
  and set $\bar{c} \defeq c\big(n(0),q(0)\big) > 0$. Define
  \[
  (n',q') \defeq \big(n'(t),q'(t))/\bar{c},\quad
  (n,q) \defeq \big(n(0),q(0))/\bar{c}.
  \]
  Using Lemma~\ref{lem:wc}, it is clear that $(n,q,n',q')$ is feasible for
  the LP defining $\gamma(\rho)$.  Thus,
  \[
  c\big(n(0),q(0)\big)
  \leq
  \frac{1}{\gamma(\rho)} c\big(n'(t),q'(t)\big).
  \]
  The result then follows by applying Lemma~\ref{lem:ub}.
\end{proof}

\section{Invariant Manifold}\label{sec:invariant}

In Section~\ref{sec:balanced}, we proved a constant factor guarantee on the
cost of the \MWUMa policy, relative to the cost achieved under any other
policy. Our bound held point-wise, at \emph{every} instant of time. However,
the constant factor of the bound depends on the balance factor, and this could
be very large.

In this section, we consider a different type of analysis. Instead of
considering the evolution of the fluid model for \emph{every} time $t$, we
instead examine the asymptotically limiting states of the fluid model as
$t\to\infty$. In particular, we characterize these \emph{invariant} states as
fixed points in the solution space of an optimization problem. We shall also
show that these fixed points are attractive, i.e., starting from any initial
condition, the fluid trajectory reaches an invariant state. We will quantify
time to converge to the invariant manifold as a function of the initial
conditions of the fluid trajectory.

This characterization of invariant states is key towards establishing the
\emph{state space collapse} property of the system under a heavy traffic limit
\cite{Bramson98}. Moreover, we shall demonstrate that these invariant states
are cost optimal as $\alpha\to 0^+$. In other words, the cost of an invariant
state cannot be improved by \emph{any} policy.

\subsection{Optimization Problems}\label{sec:opt-prob}

We start with two useful optimization problems that will be useful in
characterizing invariant states of the fluid trajectory. We assume that the
system is critically loaded.

Suppose we are given a state $(n,q)\in\Rp^\Fscr\times\Rp^\Escr$ of,
respectively, flow counts and queue lengths. Define the optimization problem
\[
\begin{array}{l@{\ }lll}
\ALGP(n,q) \defeq 
& \minimize_{n',q',t,x,\sigma} & L_\alpha(n',q') \\
& \subjectto & n' = n + t\big[\nu - \diag(\mu) x\big], \\[2pt]
& & q' = q + t\left[\Gamma x - \left(I - R^\top\right) \sigma\right], \\[2pt]
& & 
n' \in\Rp^\Fscr,\quad
q'\in\Rp^\Escr,\quad
t\in\Rp,
\\[2pt]
& &
x\in [0,C]^\Fscr,\ 
\sigma\in\Lambda.
\end{array}
\]
Here, recall that $\Lambda$ is the scheduling capacity region of the network,
defined by \eqref{eq:Lambda}. Similarly, define the optimization problem
\[
\begin{array}{l@{\ }lll}
  \ALGD(n,q) \defeq
  & \minimize_{n',q'} & L_\alpha(n',q')
  \\
  & \subjectto 
  & w_\zeta(n',q') \geq w_\zeta(n,q),
  \\[2pt]
  & & \quad \forall\ \zeta\in\CRE(\rho),
  \\[2pt]
  & & n'\in\Rp^\Fscr,\quad q'\in\Rp^\Escr.
\end{array}
\]

Intuitively, given a state $(n,q)$, $\ALGP(n,q)$ finds a state $(n',q')$ which
minimizes the Lyapunov function $L_\alpha$ and can be reached starting from
$(n,q)$, using feasible scheduling and rate allocation
decisions. $\ALGP(n,q)$, on the other hand, finds a state $(n',q')$ which
minimizes the Lyapunov function and has at least as much workload as $(n,q)$.
The following result states that $\ALGP(n,q)$ and $\ALGD(n,q)$ are equivalent
optimization problems:

\begin{lemma}\label{lem:equivopt}
  A state $(n',q')\in\Rp^\Fscr\times\Rp^\Escr$ is feasible for the
  optimization problem $\ALGP(n,q)$ if and only if it is feasible for the
  optimization problem $\ALGD(n,q)$.
\end{lemma}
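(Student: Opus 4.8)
The plan is to show that the feasible region of $\ALGP(n,q)$ and that of $\ALGD(n,q)$ coincide, by establishing two inclusions. The key algebraic object is the workload $w_\zeta(n',q') = \zeta^\top \Xi[q' + \Gamma\diag(\mu)^{-1}n']$; both problems minimize the same objective $L_\alpha$, so only the constraint sets need comparison. Throughout, I would fix a critical virtual resource $\zeta \in \CRE(\rho)$ and repeatedly use the two facts that make $\zeta$ special: $\zeta^\top \Xi\Gamma\rho = 1$ (optimality in $\DUAL(\Xi\Gamma\rho)$) and $\Pi^\top\zeta \leq \1$ (feasibility), i.e.\ $\zeta^\top(I-R^\top)\pi = \zeta^\top\Xi^{-1}\pi \leq 1$ for every schedule $\pi$.

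\medskip\noindent\textbf{Step 1: $\ALGP$-feasible $\Rightarrow$ $\ALGD$-feasible.}
Suppose $(n',q')$ is reachable from $(n,q)$ via $t \in \Rp$, $x \in [0,C]^\Fscr$, $\sigma \in \Lambda$, so that $n' = n + t[\nu - \diag(\mu)x]$ and $q' = q + t[\Gamma x - (I-R^\top)\sigma]$. I would compute $w_\zeta(n',q') - w_\zeta(n,q)$ directly. Using $n' - n = t[\nu - \diag(\mu)x]$ we get $\Gamma\diag(\mu)^{-1}(n'-n) = t[\Gamma\rho - \Gamma x]$ (exactly as in \eqref{eq:lb1}), and $q' - q = t[\Gamma x - (I-R^\top)\sigma]$, so the $\Gamma x$ terms cancel and
\[
w_\zeta(n',q') - w_\zeta(n,q) = t\,\zeta^\top\Xi\big[\Gamma\rho - (I-R^\top)\sigma\big] = t\big(\zeta^\top\Xi\Gamma\rho - \zeta^\top\sigma\big) = t\big(1 - \zeta^\top\sigma\big).
\]
Since $\sigma \in \Lambda$ there is $s \in \Rp^\Sscr$ with $\sigma \leq \Pi s$ and $\1^\top s \leq 1$; hence $\zeta^\top\sigma \leq \zeta^\top\Pi s = (\Pi^\top\zeta)^\top s \leq \1^\top s \leq 1$ (using $\zeta \geq \bzero$, $\Pi^\top\zeta \leq \1$, $s \geq \bzero$). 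Therefore $t(1 - \zeta^\top\sigma) \geq 0$, so $w_\zeta(n',q') \geq w_\zeta(n,q)$ for every $\zeta \in \CRE(\rho)$, and since $n' \in \Rp^\Fscr$, $q' \in \Rp^\Escr$ are imposed as constraints in $\ALGP$, the point is $\ALGD$-feasible.

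\medskip\noindent\textbf{Step 2: $\ALGD$-feasible $\Rightarrow$ $\ALGP$-feasible.}
This is the harder direction and the main obstacle: given $(n',q') \in \Rp^\Fscr\times\Rp^\Escr$ with $w_\zeta(n',q') \geq w_\zeta(n,q)$ for all extreme critical resources $\zeta$, I must exhibit $t \in \Rp$, $x \in [0,C]^\Fscr$, $\sigma \in \Lambda$ realizing the transition. The natural construction is to take $t$ large and pick $x$, $\sigma$ so that $(n + t[\nu - \diag(\mu)x], \, q + t[\Gamma x - (I-R^\top)\sigma]) = (n',q')$; this amounts to solving $\nu - \diag(\mu)x = (n'-n)/t$ and $\Gamma x - (I-R^\top)\sigma = (q'-q)/t$ for $x \in [0,C]^\Fscr$, $\sigma \in \Lambda$. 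For $t$ sufficiently large the right-hand sides are arbitrarily small, so $x$ close to $\rho \in (0,C)^\Fscr$ works for the first equation, and then $\sigma = \Xi[\Gamma x - (q'-q)/t]$ from the second; the content is to check $\sigma \in \Lambda$, i.e.\ $\PRIMAL(\sigma) \leq 1$. Here I would invoke LP duality: $\sigma \in \Lambda \iff \zeta^\top\sigma \leq 1$ for all $\zeta$ feasible for $\DUAL$, and reduce to the \emph{critical} extreme resources plus a perturbation argument — for $\zeta \in \CRE(\rho)$ one has $\zeta^\top\Xi\Gamma x \to \zeta^\top\Xi\Gamma\rho = 1$ and the correction $-\zeta^\top(q'-q)/t = -(w_\zeta(n',q') - w_\zeta(n,q))/t + (\zeta^\top\Xi\Gamma\diag(\mu)^{-1}(n'-n))/t$, whose leading term is $\leq 0$ by $\ALGD$-feasibility, so $\zeta^\top\sigma \leq 1 + o(1/t)$; for non-critical extreme resources $\zeta^\top\Xi\Gamma\rho < 1$ strictly, giving slack that absorbs the $o(1)$ terms. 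Taking $t$ large enough (uniformly over the finitely many extreme points of the dual feasible polytope, after truncating it to a bounded region where the maximum is attained) makes $\sigma \in \Lambda$. Handling the nonnegativity $x \geq \bzero$, $\sigma \geq \bzero$ and the boundary case $(n,q) = (n',q')$ (take $t=0$, $x = \rho$, $\sigma = \Xi\Gamma\rho$) completes the argument. The one subtlety to be careful about is that $\Lambda$ is defined via $\sigma \leq \Pi s$ rather than $\sigma = \Pi s$, which only helps (it enlarges $\Lambda$), and that $\sigma$ produced above lies in $\Rp^\Escr$, which follows since $\Xi = I + R^\top + \cdots$ has nonnegative entries and $\Gamma x - (q'-q)/t$ is nonnegative for large $t$ as $\Gamma\rho > \bzero$ componentwise would need checking only on the support — but in fact $\Xi\Gamma\rho > \bzero$ is assumed, so a slightly perturbed $x$ and large $t$ suffice.
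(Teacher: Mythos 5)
Your argument is correct and follows essentially the same route as the paper's: the forward direction is the same direct computation $w_\zeta(n',q')-w_\zeta(n,q)=t\bigl(1-\zeta^\top\sigma\bigr)\geq 0$ using $\Pi^\top\zeta\leq\1$ and $\zeta^\top\Xi\Gamma\rho=1$, and the converse uses the identical construction $x=\diag(\mu)^{-1}\bigl[\nu-t^{-1}(n'-n)\bigr]$, $\sigma=\Xi\bigl[\Gamma x-t^{-1}(q'-q)\bigr]$ with $t$ large, verifying $\sigma\in\Lambda$ by LP duality (splitting extreme dual solutions into critical and non-critical ones) and $\sigma\geq\bzero$ from the assumption $\Xi\Gamma\rho>\bzero$. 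One small sharpening of your bookkeeping: the correction term is $-\zeta^\top\Xi(q'-q)/t$ (with the $\Xi$), and for $\zeta\in\CRE(\rho)$ the $(n'-n)$ contributions cancel exactly, giving $\zeta^\top\sigma=\zeta^\top\Xi\Gamma\rho-t^{-1}\bigl[w_\zeta(n',q')-w_\zeta(n,q)\bigr]\leq 1$ exactly rather than $1+o(1/t)$ --- which matters, since critical resources have no slack that could absorb a positive error by taking $t$ larger.
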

\begin{proof} 
  First, consider any $(n',q',t,x,\sigma)$ that is feasible for
  $\ALGP(n,q)$. Note that feasibility for
  $\ALGP(n,q)$ implies that
  \[
    \Gamma \diag(\mu)^{-1} n' + q' \geq
    \Gamma \diag(\mu)^{-1} n +   q + 
    t \Big[ \Gamma \diag(\mu)^{-1} \nu - \Gamma x \Big] + 
    t \left[ \Gamma x - \left(I-R^\top\right) \sigma\right].
  \]
  Therefore, if $\zeta \in \CRE(\rho)$, we have that
  \[
  w_\zeta(n',q') = w_\zeta(n,q) 
  + t \left[ \zeta^\top \Xi \Gamma\rho - \zeta^\top\sigma\right].
  \]
  Since $\sigma \in \Lambda$ and $\zeta$ is feasible for
  $\DUAL(\Xi\Gamma\rho)$, we have $\zeta^\top \sigma \leq 1$. Since $\zeta \in
  \CRE(\rho)$, we have $\zeta^\top \Xi \Gamma\rho = 1$.  Therefore, as $t \geq
  0$, it follows that
  \[
  w_\zeta(n',q') \geq w_\zeta(n,q).
  \]
  That is, $(n',q')$ is $\ALGD(n,q)$ feasible. 

  Next, assume that $(n',q')$ is feasible for $\ALGD(n,q)$. Given some $t \geq
  0$, define
  \[
  x \defeq \diag(\mu)^{-1} \left[\nu - t^{-1} (n'-n)\right],
  \ \ 
  \sigma \defeq \Xi \left[\Gamma x - t^{-1} (q'-q)\right].
  \]
  With these definitions, if we establish existence of $t\geq 0$ so that
  $\bzero \leq x \leq C\1$ and $\sigma\in\Lambda$, then $(n',q',t,x,\sigma)$
  is feasible for $\ALGP(n,q)$ feasible.

  Note that as $t\tends\infty$, $x\tends\rho$. By assumption, $0 < \rho_f <
  C$, for all $f\in\Fscr$. Therefore, for $t$ sufficiently large, $\bzero \leq
  x \leq C\1$.

  Next, we wish to show that, for $t$ sufficiently large, $\sigma \in
  \Lambda$. This requirement is equivalent to demonstrating that
  $\PRIMAL(\sigma) \leq 1$ and that $\sigma \geq \bzero$.  To show that
  $\PRIMAL(\sigma) \leq 1$, note that $\PRIMAL(\sigma) = \DUAL(\sigma)$ and
  suppose that $\zeta$ is feasible for $\DUAL(\sigma)$. Then,
  \[
  \begin{split}
    \zeta^\top \sigma & = 
    \zeta^\top \left[\Xi\Gamma x - t^{-1} (q'-q)\right] \\
    & = \zeta^\top\left[\Xi\Gamma\rho
      - t^{-1} \Xi\Gamma\diag(\mu)^{-1} (n'-n) 
      -  t^{-1} (q'-q)\right] \\
    & = \zeta^\top\Xi\Gamma\rho 
    - t^{-1} \left[w_\zeta(n',q') - w_\zeta(n,q)\right]. 
  \end{split}
  \]
  If $\zeta \in \CR(\rho)$, then
  \[
  \zeta^\top\Xi\Gamma\rho = 1,
  \quad\text{and}\quad
  w_\zeta(n',q') - w_\zeta(n,q) \geq 0,
  \]
  thus $\zeta^\top \sigma \leq 1$. On the other hand, if
  $\zeta\notin\CR(\rho)$, $\zeta^\top\Xi\Gamma\rho < 1$. Therefore, in any
  event, for $t$ sufficiently large, $\DUAL(\rho) \leq 1$.

  To show that $\sigma \geq \bzero$, note that
  \[
  \begin{split}
    \sigma & =  \Xi\left[\Gamma x - t^{-1} (q'-q)\right] \\
    & =  \Xi \Gamma \rho 
    - t^{-1} \left[ \Xi\Gamma \diag(\mu)^{-1} (n'-n) + \Xi (q'-q) \right].
  \end{split}
  \]
  By assumption,
  $\Xi\Gamma \rho > \bzero$. Therefore, for $t$ sufficiently large enough,
  $\sigma \geq \bzero$.
\end{proof}

\subsection{Fixed Points: Characterization} 

Note that the optimization problem $\ALGD(n,q)$ has a convex feasible set with
a strictly convex and coercive objective function (see, e.g.,
\cite{BertsekasNLP}). By standard arguments from theory of convex
optimization, it follows that an optimal solution exists and is unique. Hence,
we can make the following definition:
\begin{definition}[\textbf{Lifting Map}]
  Given a critically scaled system, we define the \textbf{lifting map}
  $\Delta\colon \Rp^\Fscr\times\Rp^\Escr \to \Rp^\Fscr\times\Rp^\Escr$ to be
  the function that maps a state $(n,q)$ to the unique solution of the
  optimization problem $\ALGD(n,q)$.
\end{definition}

The main result of this section is to characterize the invariant states of
fluid model as the fixed points of lifting map $\Delta$.
\begin{theorem}\label{thm:invariant}
  A state $(n,q) \in \Rp^{\Fscr}\times\Rp^\Escr$ is an invariant state of a
  fluid model solution under the \MWUMa policy if and only if it is a fixed
  point of $\Delta$, i.e.,
\[
(n,q) = \Delta(n,q).
\]
\end{theorem}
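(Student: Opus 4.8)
The plan is to prove both implications by relating the dynamics of the fluid model under \MWUMa to the monotone behavior of the Lyapunov function $L_\alpha$ along with the workload conservation of Lemma~\ref{lem:wc}. First I would make precise the notion of an \emph{invariant state}: a state $(n,q)$ is invariant if there is a fluid model solution $\zscr(\cdot)\in\FMS^\alpha(T)$ with $\big(n(0),q(0)\big)=(n,q)$ and $\big(n(t),q(t)\big)=(n,q)$ for all $t\in[0,T]$ (equivalently, the constant trajectory is a fluid model solution).

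For the direction ``fixed point $\Rightarrow$ invariant,'' suppose $(n,q)=\Delta(n,q)$. By Lemma~\ref{lem:equivopt}, the feasible sets of $\ALGP(n,q)$ and $\ALGD(n,q)$ coincide, so $(n,q)$ is the unique $L_\alpha$-minimizer over the set of states reachable from $(n,q)$ via feasible scheduling/rate decisions in any time $t$. Now consider the actual fluid trajectory $\big(n(t),q(t)\big)$ started at $(n,q)$ under \MWUMa. By Lemma~\ref{lem:fluid-drain} (with $\Leff(\rho)=1$), $L_\alpha$ is non-increasing along this trajectory; but $\big(n(t),q(t)\big)$ is always feasible for $\ALGP(n,q)$ (it is reached from $(n,q)$ by the very scheduling and rate decisions the policy makes), and $(n,q)$ already achieves the minimum of $L_\alpha$ over that feasible set. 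Hence $L_\alpha\big(n(t),q(t)\big) = L_\alpha(n,q)$ for all $t$, i.e.\ $\dt L_\alpha = 0$ along the trajectory. Tracing through the proof of Lemma~\ref{lem:fluid-drain}, equality in \eqref{eq:dtL-4} forces $\1^\top q^\alpha(t)\equiv 0$ or $\Leff(\rho)<1$; since we are critically loaded, the only way $\dt L_\alpha\equiv 0$ is consistent with the chain of inequalities is that each of the intermediate inequalities \eqref{eq:L-A}, \eqref{eq:B-bound}, \eqref{eq:big-max} holds with equality, which pins down $x_f(t)=\rho_f$ for all $f$ and $\dot q(t)=0$, hence $\dot n(t)=0$ as well. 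Thus the trajectory is constant, so $(n,q)$ is invariant. (Care is needed here: I would instead argue more directly that $L_\alpha$ constant plus $L_\alpha$ strictly convex and $(n,q)$ the \emph{unique} minimizer over a convex reachable set forces $\big(n(t),q(t)\big)=(n,q)$.)

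For the converse, ``invariant $\Rightarrow$ fixed point,'' suppose $\big(n(t),q(t)\big)\equiv(n,q)$ is a fluid model solution under \MWUMa. By Lemma~\ref{lem:wc}, any state reachable by a fluid trajectory — in particular any $\ALGD(n,q)$-feasible state, via Lemma~\ref{lem:equivopt} and the explicit construction in its proof — has workload at least $w_\zeta(n,q)$ for every $\zeta\in\CRE(\rho)$; so $(n,q)$ itself is $\ALGD(n,q)$-feasible, hence $L_\alpha(\Delta(n,q))\leq L_\alpha(n,q)$. It remains to show the reverse inequality, i.e.\ that the invariant state is actually $L_\alpha$-minimal among $\ALGD(n,q)$-feasible states. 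Suppose not: then there is an $\ALGD(n,q)$-feasible $(\tilde n,\tilde q)$ with $L_\alpha(\tilde n,\tilde q) < L_\alpha(n,q)$; by Lemma~\ref{lem:equivopt} it is $\ALGP(n,q)$-feasible, so there exist $t_0>0$, $x\in[0,C]^\Fscr$, $\sigma\in\Lambda$ realizing it. The idea is that running \MWUMa from $(n,q)$ cannot do worse, in the $L_\alpha$ sense, than this particular feasible maneuver — more precisely, by the strict-drift structure in the proof of Lemma~\ref{lem:fluid-drain}, if $(n,q)$ is \emph{not} $L_\alpha$-minimal over the reachable set then $\dt L_\alpha\big(n(t),q(t)\big)<0$ at $t=0^+$ along the \MWUMa trajectory, contradicting that the trajectory is constant (so $L_\alpha$ is constant). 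Concretely I would re-examine \eqref{eq:dtL-2}: along the constant trajectory $\dot n=\dot q=0$, forcing $x_f\equiv\rho_f$ and forcing $\max_{\sigma\in\Sscr}\sigma^\top(I-R)q^\alpha = [\Gamma\rho]^\top q^\alpha$ with the \MWUMa schedule attaining the workload bound tightly; combined with strict convexity of $L_\alpha$ this rules out any strictly-lower-$L_\alpha$ reachable state, giving $L_\alpha(n,q)\le L_\alpha(\Delta(n,q))$ and hence, by uniqueness of the minimizer, $(n,q)=\Delta(n,q)$.

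The main obstacle I anticipate is the converse direction — specifically, converting ``the constant trajectory has $\dt L_\alpha \equiv 0$'' into ``$(n,q)$ globally minimizes $L_\alpha$ over the $\ALGD(n,q)$-feasible set,'' rather than merely being a local/stationary point. The delicate point is that the \MWUMa trajectory explores only one path out of $(n,q)$, so I must use \emph{convexity} of the reachable set (equivalently, of the $\ALGD$ feasible set, which is manifestly convex) together with the fact that for a strictly convex $L_\alpha$, a feasible point from which every feasible direction is non-improving must be the global minimizer. Establishing that the \MWUMa policy's instantaneous drift indeed dominates (in the right direction) every feasible maneuver — i.e.\ that $\dt L_\alpha$ under \MWUMa equals $\min$ over feasible $(x,\sigma)$ of the corresponding drift — is the technical crux, and it is exactly what the variational inequalities \eqref{eq:L-A-2} (optimality of the rate allocation) and \eqref{eq:B-2} (optimality of the max-weight schedule) are designed to deliver.
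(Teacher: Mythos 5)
Your forward direction (fixed point $\Rightarrow$ invariant) is essentially the paper's argument: the trajectory started at $(n,q)$ is $\ALGP(n,q)$-feasible, $L_\alpha$ is non-increasing by Lemma~\ref{lem:fluid-drain}, and uniqueness of the $\ALGD(n,q)$ minimizer pins the trajectory at $(n,q)$; your parenthetical correction is exactly the right way to close it. The converse, however, contains a genuine gap at the step you yourself identify as the crux. You claim that $\dt L_\alpha$ under \MWUMa equals the minimum over feasible $(x,\sigma)$ of the linearized drift, and that \eqref{eq:L-A-2} and \eqref{eq:B-2} deliver this. That claim is false in general: writing the flow contribution of a maneuver $x$ as $\sum_f\big(\rho_f\,n_f^\alpha/\rho_f^\alpha + x_f(q_{\iota(f)}^\alpha - n_f^\alpha/\rho_f^\alpha)\big)$, the drift-minimizing allocation is bang-bang ($x_f=0$ or $C$ according to the sign of $q_{\iota(f)}^\alpha - n_f^\alpha/\rho_f^\alpha$), whereas the \MWUMa allocation is $x_f=\min\{n_f/q_{\iota(f)},C\}$, which generically lies strictly between; \eqref{eq:L-A-2} only compares the \MWUMa allocation against the single point $x=\rho_f$, not against arbitrary feasible $x$. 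So "a strictly better $\ALGD$-feasible point exists $\Rightarrow$ the \MWUMa drift at $(n,q)$ is strictly negative" does not follow from the inequalities you cite, and your contradiction argument stalls.

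The missing ingredient is the invariant-state identity that makes the flow term direction-independent. From constancy you correctly get $x(t)\equiv\rho$; you must then push this through \FMSeq{enum:flow} and \FMSeq{enum:nzero} to conclude $n_f=\rho_f q_{\iota(f)}$ for every $f$ (for $n_f>0$, $\rho_f<C$ forces $\rho_f=n_f/q_{\iota(f)}$; for $n_f=0$, \FMSeq{enum:nzero} gives $q_{\iota(f)}=0$), together with $(\Gamma\rho)^\top q^\alpha=\max_{\pi\in\Sscr}\pi^\top(I-R)q^\alpha$, which is \eqref{eq:lm-iv-i}--\eqref{eq:lm-iv-ii}. This is how the paper proceeds: take the $\ALGD(n,q)$ optimum, realize it via Lemma~\ref{lem:equivopt} as an $\ALGP(n,q)$ point, and compute the one-sided derivative $\dot J(0)$ of $L_\alpha$ along the segment from $(n,q)$ to it. The flow term $X$ in \eqref{eq:fp2} vanishes identically because its coefficient $q_{\iota(f)}^\alpha-n_f^\alpha/\rho_f^\alpha$ is zero by \eqref{eq:lm-iv-ii} — no comparison of \MWUMa's drift with other maneuvers is needed — and the scheduling term $Y$ is nonnegative by \eqref{eq:lm-iv-i} and feasibility of $\sigma\in\Lambda$. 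Convexity of $J$ then gives $L_\alpha(n,q)\le L_\alpha(\Delta(n,q))$, and uniqueness of the $\ALGD$ minimizer yields $(n,q)=\Delta(n,q)$. In short: your convexity-plus-variational-inequality outline is the right shape, but the variational inequality must be evaluated at an invariant state using \eqref{eq:lm-iv-ii}, which your sketch never derives, rather than via a (false) drift-domination property of \MWUMa at arbitrary states.
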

\begin{proof}
The proof follows by establishing equivalence of the
following statements, for every state $(n,q)$:
\begin{enumerate}[(i)]
  \itemsep=0pt
\item\label{enum:lm-i} $(n,q) = \Delta(n,q)$.
\item\label{enum:lm-ii} Any fluid model solution satisfying the initial
  condition $\big(n(0),q(0)\big) = (n,q)$ has $\big(n(t),q(t)\big) = (n,q)$
  for all $t$.
\item\label{enum:lm-iii} There exists a fluid model solution with
  $\big(n(t),q(t)\big) = (n,q)$ for all $t$.
\item\label{enum:lm-iv} $(n,q)$ satisfy
  \begin{gather}
    (\Gamma \rho)^\top q^\alpha = \max_{\pi \in \Sscr} \pi^\top(I-R)q^\alpha,
    \label{eq:lm-iv-i}
    \\
    \rho_f q_{\iota(f)} = n_f,\quad \forall\ f\in\Fscr.
    \label{eq:lm-iv-ii}
  \end{gather}
\end{enumerate}


\medskip


\noindent
\emph{(\ref{enum:lm-i}) $\Rightarrow$ (\ref{enum:lm-ii}):} If $(n,q) =
\Delta(n,q)$, then it solves $\ALGD(n,q)$. Consider a fluid model solution
with an initial state $\big(n(0), q(0)\big) = (n,q)$. By
Lemma~\ref{lem:fluid-drain}, it follows that, for all $t$, $L_\alpha\big(n(t),
q(t)\big) \leq L_\alpha(n,q)$.  From the fluid model equations
\FMSeq{enum:one}--\FMSeq{enum:idling}, $\big(n(t), q(t)\big)$ is $\ALGP(n,q)$
feasible, for all $t$.  Therefore, it follows that $\big(n(t), q(t)\big)$ is
an optimal solution of $\ALGP(n,q)$, and, by Lemma~\ref{lem:equivopt}, of
$\ALGD(n,q)$.  Since $\ALGD(n,q)$ has $(n,q)$ as its unique solution, it
follows that $\big(n(t),q(t)\big) = (n,q)$, for all $t$.

\medskip

\noindent
\emph{(\ref{enum:lm-ii}) $\Rightarrow$ (\ref{enum:lm-iii}):} This follows in a
straightforward manner by considering the arguments in Theorem~\ref{thm:fms}
with initial conditions given by $(n,q)$.

\medskip

\noindent
\emph{(\ref{enum:lm-iii}) $\Rightarrow$ (\ref{enum:lm-iv}):} Consider a fluid
model solution that satisfies $\big(n(t),q(t)\big)=(n,q)$, for all $t$. Then,
for any regular point $t$, we have $\dot{n}(t) = \bzero$ and
$\dot{q}(t) = \bzero$.  Using \FMSeq{enum:two}--\FMSeq{enum:four}, it follows
that $x(t) \defeq \dot{\bar{x}}(t) = \rho$.  For any $f\in\Fscr$, if
$n_f = n_f(t) > 0$ and $x_f(t) = \rho_f < C$, then by \FMSeq{enum:flow} it
must be that $x_f(t) = n_f(t)/q_{\iota(f)}(t)$.  Therefore, $\rho_f q_{\iota(f)} = n_f
$. Similarly, if $n_f = 0$, it must be that $q_{\iota(f)}=0$ by
\FMSeq{enum:nzero}.

Now, define $H(t) \defeq \bone^\top q^{1+\alpha}(t)$.  Since $q(\cdot)$ is
constant, applying \FMSeq{enum:five}, \FMSeq{enum:six}, \FMSeq{enum:idling},
it must be that for every regular $t$,
\[
0 
=  \dot{H}(t) 
= \dot{q}(t)^\top q^\alpha(t)
= \left[
  \Gamma \rho
  -
  \left(I - R^\top\right) \Pi \dot{s}(t) 
\right]^\top q^\alpha(t).
\]
Applying \FMSeq{enum:seven} and \FMSeq{enum:packet},
\[
0 
= 
  (\Gamma \rho)^\top q^\alpha
  -
  \max_{\pi\in\Sscr}\ \pi^\top (I - R) q^\alpha.
\]

\medskip

\noindent
\emph{(\ref{enum:lm-iv}) $\Rightarrow$ (\ref{enum:lm-i}):} Suppose $(n,q)$
satisfy \eqref{eq:lm-iv-i}--\eqref{eq:lm-iv-ii}. Define $(n',q') \defeq
\Delta(n,q)$.  Since $(n',q')$ solves the optimization problem $\ALGD(n,q)$,
by Lemma~\ref{lem:equivopt}, there exists $(t,x,\sigma)$ so that
$(n',q',t,x,\sigma)$ is an optimal solution for $\ALGP(n,q)$. This solution
must satisfy
\[
n' = n + t\big[\nu - \diag(\mu) x\big],\quad
q' = q + t\left[\Gamma x - \left(I - R^\top\right) \sigma\right].
\]
Now consider the trajectory
\[
\big(n(\tau), q(\tau)\big) \defeq (n,q) +  \frac{\tau}{t} (n'-n,q'-q),
\quad\forall\ \tau\in [0,t].
\]
Define $J$ to be the Lyapunov function $L_\alpha$ evaluated along this path,
i.e., $J(\tau) \defeq L_\alpha\big(n(\tau),q(\tau)\big)$.  Then, 
\[
\begin{split}
\frac{\dot{J}(0)}{1+\alpha}
& = 
\sum_{f\in\Fscr}
\frac{n_f^\alpha(\nu_f - \mu_f x_f)}{\mu_f \rho_f^\alpha} 
+ (\Gamma x)^\top q^\alpha
- \sigma^\top (I-R)q^\alpha  
\\
& = 
\underbrace{
\Bigg(
\sum_{f\in\Fscr}
\frac{n_f^\alpha(\nu_f - \mu_f x_f)}{\mu_f \rho_f^\alpha} 
+ (\Gamma \delta)^\top q^\alpha
\Bigg)
}_{\mbox{\normalsize $X$}} 
+ 
\underbrace{
\Bigg(
(\Gamma \rho)^\top q^\alpha
- \sigma^\top (I-R)q^\alpha  
\Bigg)
}_{\mbox{\normalsize $Y$}},
\end{split}
\]
where $\delta \defeq x - \rho$.

First, consider $Y$. Since $\sigma \in \Lambda$, there exists some
$s\in\Rp^\Sscr$ with $\1^\top s \leq 1$ and $\sigma \leq \Pi s$. From the
monotonicity of $\Sscr$, we can pick $s$ so that $\sigma = \Pi s$.  Therefore,
\[
\sigma^\top (I-R) q^\alpha 
=
s^\top \Pi^\top (I - R) q^\alpha
\leq 
\max_{\pi \in \Sscr}\ \pi^\top(I-R)q^\alpha.
\]
Then, by \eqref{eq:lm-iv-i}, it follows that $Y \geq 0$. Now, consider $X$,
and note that $X = 0$ by \eqref{eq:lm-iv-ii} along with
the fact that
\begin{equation}\label{eq:fp2}
X 
= 
\sum_{f\in\Fscr} \left(
\frac{n_f^\alpha (\rho_f - x_f)}{\rho_f^\alpha}  
+ \delta_f q^\alpha_{\iota(f)} 
\right)
=
\sum_{f\in\Fscr} \delta_f
\left(
q^\alpha_{\iota(f)} 
- \frac{n_f^\alpha}{\rho_f^\alpha}  
\right).
\end{equation}

Thus, we have that $\dot{J}(0) \geq 0$. Since $J(\tau)$ is a convex function,
this implies that $J(0) \leq J(t)$, i.e., $L_\alpha(n,q) \leq
L_\alpha(n',q')$.  Due to uniqueness of the optimal solution to $\ALGD(n,q)$,
it follows that $(n',q') = (n,q)$.
\end{proof}

\subsection{Fixed Points: Attractiveness}

We will now establish the attractiveness of the space of fixed points.
Specifically, we will show that starting from any initial state, the fluid
trajectory converges (arbitrarily close to) space of fixed points, in
finite time.

Given $\beps > 0$, define 
\[
\Jscr_\beps \defeq 
\left\{ 
(n, q)\in\Rp^\Fscr\times\Rp^\Escr\ 
:\ 
\| (n,q) - \Delta(n,q) \|_1 < \beps
\right\}.
\]
In other words, $\Jscr_\beps$ is the set of states $(n,q)$ which are
$\beps$-approximate fixed points (in an $\ell_1$-norm sense) of the lifting
map. Given a fluid trajectory $\big(n(\cdot), q(\cdot)\big)$,
define
\[
h_\beps\big(n(\cdot),q(\cdot)\big) \defeq
\inf\ \big\{ t \geq 0\ :\ 
\big(n(s),q(s)\big) \in \Jscr_\beps,\ \forall\ s \geq t \big\}.
\]
In other words, $h_\beps\big(n(\cdot),q(\cdot)\big)$ is the amount of time
required for the trajectory $\big(n(\cdot), q(\cdot)\big)$ to reach and
subsequently remain in the set $\Jscr_\beps$.

\begin{theorem}\label{thm:attractive}
  For any $\beps > 0$, there exists $H_\beps > 0$ so that if $\big(n(\cdot),
  q(\cdot)\big)$ is a fluid trajectory of the \MWUMa policy in a critically
  loaded system, with initial condition satisfying $\|(n(0), q(0))\|_\infty
  \leq 1$, then
  \[
  h_\beps\big(n(\cdot), q(\cdot)\big) \leq H_\beps.
  \]
\end{theorem}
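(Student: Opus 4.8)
The plan is to run a Lyapunov argument driven by $L_\alpha$, but with the key twist that workload conservation (Lemma~\ref{lem:wc}) forces a \emph{monotone} gap between $L_\alpha$ and its value at the lifted state; this gap certifies that the trajectory is eventually trapped near the fixed-point manifold, and $L_\alpha$ then supplies a trapping-time bound that is uniform over the initial conditions. First I record the exact drift of $L_\alpha$: along any \MWUMa fluid trajectory in a critically loaded system ($\Leff(\rho)=1$), the computation in the proof of Lemma~\ref{lem:fluid-drain} actually gives, at every regular point $t$, the \emph{equality}
\[
\frac{d}{dt}L_\alpha\big(n(t),q(t)\big)=-(1+\alpha)\,\Phi\big(n(t),q(t)\big),
\]
where
\[
\Phi(n,q)\defeq\Big(\max_{\sigma\in\Sscr}\sigma^\top(I-R)q^\alpha-(\Gamma\rho)^\top q^\alpha\Big)+\sum_{f\in\Fscr}\big(\rho_f-x_f\big)\Big(q_{\iota(f)}^\alpha-n_f^\alpha/\rho_f^\alpha\Big),
\]
and $x_f$ is the rate allocation of \FMSeq{enum:flow}. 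Both grouped terms are $\ge0$ (the first since $\Leff(\rho)=1$, the second summand-by-summand, this being exactly the rearrangement of \eqref{eq:L-A} obtained from concavity together with optimality of $x_f$). Moreover $\Phi(n,q)=0$ if and only if $(n,q)$ is a fixed point of $\Delta$: by the equivalence (i)$\Leftrightarrow$(iv) of Theorem~\ref{thm:invariant}, vanishing of the first term is \eqref{eq:lm-iv-i}, and vanishing of every summand of the second forces $\rho_f q_{\iota(f)}=n_f$, i.e.\ \eqref{eq:lm-iv-ii}.

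Next I exploit two monotonicities. Since $\Leff(\rho)\le1$, Lemma~\ref{lem:fluid-drain} makes $t\mapsto L_\alpha\big(n(t),q(t)\big)$ non-increasing; by the norm equivalence \eqref{eq:pr1} the trajectory therefore remains in a fixed compact set $K$, and $L_\alpha\big(n(0),q(0)\big)\le L_0$ for a constant $L_0$ depending only on the network and $\alpha$ whenever $\|(n(0),q(0))\|_\infty\le1$. On the other hand, Lemma~\ref{lem:wc} shows the workloads $w_\zeta\big(n(t),q(t)\big)$ are non-decreasing in $t$ for every $\zeta\in\CRE(\rho)$, so the feasible region of $\ALGD\big(n(t),q(t)\big)$ shrinks in $t$, and hence $t\mapsto L_\alpha\big(\Delta(n(t),q(t))\big)$ is non-decreasing. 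Consequently the gap
\[
g(t)\defeq L_\alpha\big(n(t),q(t)\big)-L_\alpha\big(\Delta(n(t),q(t))\big)\ge0
\]
is continuous and non-increasing in $t$ ($g\ge0$ because $(n(t),q(t))$ is feasible for its own $\ALGD$ problem).

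A compactness step then closes the argument. Let $\mathcal{R}\subseteq K$ be the compact set of states with $n_f=0\Rightarrow q_{\iota(f)}=0$; by \FMSeq{enum:nzero} the trajectory lies in $\mathcal{R}$ at every regular point, and on $\mathcal{R}$ the map $\Phi$ is continuous (the only discontinuity of $x_f$, at $n_f=0<q_{\iota(f)}$, is excluded, and where $n_f,q_{\iota(f)}\to0$ the corresponding term of $\Phi$ tends to $0$). The lifting map $\Delta$ is continuous (unique minimizer of the strictly convex, coercive program $\ALGD$, whose constraints depend continuously on $(n,q)$), so $\Jscr_\beps$ is open and $(n,q)\mapsto g$-value is continuous and vanishes exactly on fixed points. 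Hence there is $\theta=\theta(\beps)>0$ with $g$-value $\ge\theta$ on the compact set $\mathcal{R}\setminus\Jscr_\beps$; since $L_\alpha$ is Lipschitz on $K$ there is $\beps'=\beps'(\beps)>0$ with $\Jscr_{\beps'}\subseteq\{g\text{-value}<\theta\}$; and $\Phi\ge\delta=\delta(\beps')>0$ on $\mathcal{R}\setminus\Jscr_{\beps'}$. Now, as long as $g(t)\ge\theta$ the state lies outside $\Jscr_{\beps'}$, so $\frac{d}{dt}L_\alpha\le-(1+\alpha)\delta$; since $L_\alpha\in[0,L_0]$ this can persist for total time at most $H_\beps\defeq L_0/\big((1+\alpha)\delta\big)$, a constant depending only on $\beps$ and the network. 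Because $g$ is non-increasing, $g(t)<\theta$ for every $t>H_\beps$, whence $(n(t),q(t))\in\Jscr_\beps$; thus $h_\beps\big(n(\cdot),q(\cdot)\big)\le H_\beps$.

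I expect the genuine obstacle to be the \emph{uniformity} of $H_\beps$ over initial conditions: pointwise convergence of a single trajectory to the fixed-point manifold is comparatively cheap (e.g.\ $\int_0^\infty\Phi\big(n(t),q(t)\big)\,dt<\infty$ together with $\Phi\big(n(t),q(t)\big)\to0$), but extracting one time bound hinges on the observation that the gap $g$ is monotone, which requires pairing the $L_\alpha$-monotonicity of Lemma~\ref{lem:fluid-drain} with the workload-monotonicity of Lemma~\ref{lem:wc}. The remaining technical points — continuity of $\Delta$, and the discontinuity of the rate allocation at $n_f=0$ (dealt with by restricting to $\mathcal{R}$) — are routine once noticed, but the second in particular is easy to overlook.
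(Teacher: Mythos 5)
Your architecture is the same as the paper's: the monotone gap $g(t)=L_\alpha\big(n(t),q(t)\big)-L_\alpha\big(\Delta(n(t),q(t))\big)$ obtained by pairing Lemma~\ref{lem:fluid-drain} with Lemma~\ref{lem:wc} is exactly the paper's function $K$, your continuity/Lipschitz step relating sublevel sets of $g$ to the neighborhoods $\Jscr_\beps$ is the paper's step (ii) via Lemma~\ref{lem:cont-radial}, and your ``uniform negative drift off $\Jscr_{\beps'}$'' is the paper's step (iii). The drift identity and the claim that it vanishes only at fixed points (via Theorem~\ref{thm:invariant}, (i)$\Leftrightarrow$(iv), together with \FMSeq{enum:nzero}) are also as in the paper.

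However, there is a genuine gap at precisely the point where the paper works hardest. Your set $\mathcal{R}=\{(n,q):\ n_f=0\Rightarrow q_{\iota(f)}=0\}$ is \emph{not closed}, hence not compact: the points $(n_f,q_{\iota(f)})=(1/k,1)$ lie in $\mathcal{R}$ but converge to $(0,1)\notin\mathcal{R}$. Consequently the step ``$\Phi\ge\delta>0$ on $\mathcal{R}\setminus\Jscr_{\beps'}$ because a continuous function on a compact set attains a positive minimum'' is not valid as written; continuity of $\Phi$ restricted to $\mathcal{R}$ gives you nothing about infima approached along sequences whose limit leaves $\mathcal{R}$ (i.e.\ $n_f^k\to 0$ with $q^k_{\iota(f)}\to q^*>0$), which is exactly where the rate allocation $x_f$ is discontinuous. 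The conclusion is in fact true and can be rescued — along such a sequence $x_f^k=n_f^k/q^k_{\iota(f)}\to 0$, so the $f$-th summand of $\Phi$ tends to $\rho_f (q^*)^\alpha>0$ while all other contributions are nonnegative, so a minimizing-sequence argument still yields a positive infimum — but that case analysis is the missing content; it is what the paper supplies with its finite closed cover $\{\Sbb\}_{\bb\in\{-1,0,1\}^\Fscr}$ and the continuous, strictly negative majorants $\fbb$ of the drift $\fD$. (A minor repair in the same spirit: your threshold $\theta$ should be taken as the minimum of $g$ over the genuinely compact set $\Dscr\setminus\Jscr_\beps$ rather than over $\mathcal{R}\setminus\Jscr_\beps$; since $g$ is continuous everywhere and vanishes exactly at fixed points, which lie in $\Jscr_\beps$, this costs nothing and is needed anyway to conclude $(n(t),q(t))\in\Jscr_\beps$ at \emph{all}, not just regular, times $t>H_\beps$.)
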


In order to prove Theorem~\ref{thm:attractive}, we require the following
technical lemma:
\begin{lemma}\label{lem:cont-radial}
  Under the \MWUMa policy, the lifting map $\Delta$ is continuous. Further,
  $\Delta$ is also
  positively homogeneous, i.e., for all $(n,q) \in \Rp^\Fscr\times\Rp^\Escr$
  and $\kappa > 0$,
  \[
  \Delta(\kappa n, \kappa q) = \kappa \Delta(n,q).
  \]
\end{lemma}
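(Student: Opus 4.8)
The plan is to prove the two assertions separately, starting with positive homogeneity, which is the easier one, and then deducing continuity from it together with a compactness/uniqueness argument.

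\textbf{Positive homogeneity.} I would argue directly from the structure of the optimization problem $\ALGD(n,q)$ defining $\Delta$. Fix $(n,q)$ and $\kappa > 0$. The key observations are: (i) the workload functionals are linear, so $w_\zeta(\kappa n', \kappa q') = \kappa\, w_\zeta(n',q')$ for every $\zeta$; hence $(n',q')$ is $\ALGD(n,q)$-feasible if and only if $(\kappa n', \kappa q')$ is $\ALGD(\kappa n, \kappa q)$-feasible. That is, the feasible set of $\ALGD(\kappa n,\kappa q)$ is exactly $\kappa$ times the feasible set of $\ALGD(n,q)$. (ii) The objective $L_\alpha$ is positively homogeneous of degree $1+\alpha$: from \eqref{eq:lyapunov}, $L_\alpha(\kappa n', \kappa q') = \kappa^{1+\alpha} L_\alpha(n',q')$. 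Since scaling the objective by the positive constant $\kappa^{1+\alpha}$ does not change the minimizer, the unique optimal solution of $\ALGD(\kappa n, \kappa q)$ is $\kappa$ times the unique optimal solution of $\ALGD(n,q)$, i.e.\ $\Delta(\kappa n, \kappa q) = \kappa \Delta(n,q)$.

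\textbf{Continuity.} Here the approach is a standard stability-of-parametric-optimization argument, exploiting that $\ALGD(n,q)$ has a strictly convex, coercive objective and a unique minimizer. Let $(n_k, q_k) \to (n,q)$ and write $(n_k', q_k') \defeq \Delta(n_k, q_k)$. First, I would establish that the sequence $\{(n_k',q_k')\}$ is bounded: using $(n_k,q_k)$ itself as a feasible point for $\ALGD(n_k,q_k)$ gives $L_\alpha(n_k',q_k') \leq L_\alpha(n_k,q_k)$, and the right side is bounded since $(n_k,q_k)$ converges; coercivity of $L_\alpha$ then bounds $(n_k',q_k')$. So along any subsequence we may extract a further convergent subsequence $(n_k',q_k') \to (\hat n, \hat q)$. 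Since each $w_\zeta$ is continuous (linear) and $w_\zeta(n_k',q_k') \geq w_\zeta(n_k,q_k)$, passing to the limit gives $w_\zeta(\hat n,\hat q) \geq w_\zeta(n,q)$ for all $\zeta \in \CRE(\rho)$, so $(\hat n,\hat q)$ is $\ALGD(n,q)$-feasible. For optimality: given any $(n',q')$ feasible for $\ALGD(n,q)$, I would need a feasible point for $\ALGD(n_k,q_k)$ near it; the natural choice is to translate $(n',q')$ by a small correction to restore the perturbed workload constraints — concretely, add a small nonnegative multiple of a fixed vector (e.g.\ a point with strictly positive workload under every extreme critical resource, which exists since $\Xi\Gamma\rho > \bzero$ and the $\zeta$'s are nonzero) scaled by $\max_\zeta [w_\zeta(n_k,q_k) - w_\zeta(n,q)]_+ \to 0$. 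This yields feasible points $(n'_{(k)}, q'_{(k)}) \to (n',q')$ with $L_\alpha(n_k',q_k') \leq L_\alpha(n'_{(k)}, q'_{(k)})$; taking $k\to\infty$ gives $L_\alpha(\hat n,\hat q) \leq L_\alpha(n',q')$. Hence $(\hat n,\hat q)$ is optimal for $\ALGD(n,q)$, and by uniqueness $(\hat n,\hat q) = \Delta(n,q)$. Since every subsequence has a further subsequence converging to the same limit $\Delta(n,q)$, the whole sequence converges, establishing continuity.

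\textbf{Main obstacle.} The delicate step is the construction of the approximating feasible points $(n'_{(k)}, q'_{(k)})$ for the perturbed problems — i.e.\ verifying that one can perturb a feasible point of the limit problem by an $o(1)$ amount to make it feasible for $\ALGD(n_k,q_k)$, staying in the nonnegative orthant. This requires a uniform "Slater-type" direction of increase for all the workload functionals simultaneously; I expect this to follow cleanly from the standing assumption $\Xi\Gamma\rho > \bzero$ (which makes, e.g., $\zeta^\top \Xi\Gamma\rho = 1 > 0$ for every critical resource $\zeta$, so moving in a suitable $\Gamma\rho$-like direction strictly increases every $w_\zeta$), but making the bookkeeping precise is where the real work lies. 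Everything else is routine.
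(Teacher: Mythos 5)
Your proposal is correct and follows essentially the same route as the paper: positive homogeneity from linearity of the workload constraints together with the degree-$(1+\alpha)$ homogeneity of $L_\alpha$, and continuity via boundedness of the minimizers (using $L_\alpha(\Delta(n,q))\leq L_\alpha(n,q)$), subsequential limits, feasibility of the limit point, and a vanishing shift that restores feasibility for the perturbed problems. The "main obstacle" you flag is resolved in the paper exactly as you suggest: it shifts in the direction $\bone$ by the amount $\beps^k \defeq \big(\max_{\zeta\in\CRE(\rho)}\,[w_\zeta(n^k,q^k)-w_\zeta(n,q)]/w_\zeta(\bone)\big)^+ \to 0$ (applied only to the limit optimum rather than to an arbitrary feasible point), with nonnegativity of the shifted point automatic.
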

\begin{proof} 
  To establish continuity, it suffices to prove that is $(n^k, q^k) \to
  (n,q)$, then $\Delta(n^k, q^k) \to \Delta(n,q)$.
  By definition of $\Delta$, we have that $L_\alpha\big(\Delta(n,q)\big) \leq
  L_\alpha(n,q)$.  Therefore, since the convergent sequence $\{(n^k, q^k)\}$
  lies in a compact set, the sequence $\{\Delta(n^k, q^k)\}$ is contained in a
  compact set also. Define $x^k \defeq \Delta(n^k,q^k)$ and $x \defeq
  \Delta(n,q)$.  Then, there exists a convergence subsequence of $\{x^k\}$
  that converges to some $\hat{x}$. In what follows, that $\hat{x} = x$, and
  thereby complete the proof of continuity of $\Delta$.

  Suppose that $\hat{x}\neq x$. By passing to a subsequence, without loss of
  generality, assume that $x^k \to \hat{x}$. Since $x^k$ is feasible for
  $\ALGD(n^k,q^k)$, and $(n^k,q^k)\to (n,q)$, it follows that $\hat{x}$ is
  feasible for $\ALGD(n,q)$. Since $x$ is the unique optimal solution to
  $\ALGD(n,q)$, we have that $L_\alpha(x) < L_\alpha(\hat{x})$. 

  Now, define
  \[
  \beps^k \defeq \left(\max_{\zeta \in \CR^*(\rho)} \frac{w_\zeta(n^k, q^k)
      - w_\zeta(n,q)}{w_\zeta(\bone)}\right)^+,
  \]
  where $(x)^+ \defeq \max(0,x)$. Consider $\tilde{x}^k \defeq x + \beps^k
  \bone$. By the definition of $\beps^k$, it follows that $\tilde{x}^k$ is
  feasible for $\ALGD(n^k, q^k)$. Then, $L_\alpha(x^k) \leq
  L_\alpha(\tilde{x}^k)$. Now, as $k\to\infty$, $L_\alpha(x^k) \to
  L_\alpha(\hat{x})$. Further, $\beps^k \to 0$, so $\tilde{x}^k \to x$ and
  $L_\alpha(\tilde{x}^k) \to L_\alpha(x)$.  Then, $L_\alpha(\hat{x}) \leq
  L_\alpha(x)$. By contradiction, this establishes the continuity of $\Delta$.

  The positive homogeneity of $\Delta$ follows directly from the definition of
  the optimization problem $\ALGD$.
\end{proof}

\begin{proof}[Proof of Theorem~\ref{thm:attractive}]
Given $\delta > 0$, define
\begin{align*}
  \Dscr 
  & \defeq 
  \left\{ 
    (n, q)\in\Rp^\Fscr\times\Rp^\Escr\ 
    :\ 
    L_\alpha(n,q) \leq L_\alpha(\1)
  \right\},
  \\
  \Iscr 
  & \defeq
  \left\{ 
    (n, q)\in\Dscr\ 
    :\ 
    (n,q)=\Delta(n,q)
  \right\},
  \\
  \Iscr_\delta
  & \defeq
  \left\{ 
    (n, q)\in\Dscr\ 
    :\ 
    \| (n,q) - (n',q') \|_1 < \delta,\ (n',q')\in\Iscr
  \right\},
  \\
  \Kscr_\delta 
  & \defeq
  \left\{ 
    (n, q)\in\Dscr\ 
    :\ 
    K(n,q) < K(n',q'),\ \forall\ 
    (n',q')\in\Dscr\setminus\Iscr_\delta
  \right\}.
\end{align*}
where $K(n,q) \defeq L_\alpha(n,q) - L_\alpha\big(\Delta(n,q)\big)$. The
result can be established by showing that the following hold:
\begin{enumerate}[(i)]
  \itemsep=0pt
\item\label{enum:attr-i} $K\big(n(t), q(t)\big)$ is non-increasing in $t$. 

\item\label{enum:attr-ii} For $\delta > 0$ sufficiently small, $\Iscr \subset
  \Kscr_\delta \subset \Iscr_\delta \subset \Jscr_\beps$.

\item\label{enum:attr-iii} Starting from any initial condition in $\Dscr$
  (this includes all $(n,q)$ with $\|(n,q)\|_\infty \leq 1$), the time to hit
  $\Kscr_\delta$ is bounded uniformly.
\end{enumerate}

In particular, \eqref{enum:attr-iii} implies that starting from any state in
$\Dscr$, the fluid trajectory hits the set $\Kscr_\delta$ in finite time. By
\eqref{enum:attr-i}, once the trajectory is in set $\Kscr_\delta$, it remains
in that set forever. By \eqref{enum:attr-ii}, $\Kscr_\delta \subset
\Jscr_\beps$, and the result follows. To complete the proof,
\eqref{enum:attr-i}, \eqref{enum:attr-ii} and \eqref{enum:attr-iii} need to be
justified.

\medskip

\noindent
\emph{(\ref{enum:attr-i}):} 
Since $L_\alpha$ is a Lyapunov function, then $L_\alpha\big(n(t), q(t)\big)$ is
non-increasing over time under any fluid trajectory (cf.\
Lemma~\ref{lem:fluid-drain}). From Lemma~\ref{lem:wc}, the constraints in the
optimization problem $\ALGD\big(n(t), q(t)\big)$ can only become more
restrictive over time. Therefore, as the time $t$ increases, the cost of the
optimal solution of $\ALGD\big(n(t), q(t)\big)$, i.e.,
$L_\alpha\big(\Delta(n(t),q(t))\big)$, can only be non-decreasing.  Therefore,
$K\big(n(t), q(t)\big)$ is non-increasing over time.

\medskip

\noindent
\emph{(\ref{enum:attr-ii}):} First, consider claim $\Iscr \subset
\Kscr_\delta$ for any $\delta > 0$. $\Delta$ is continuous by
Lemma~\ref{lem:cont-radial}. The constraints, one for each $\zeta \in
\CR^*(\rho)$, in $\ALGD(n,q)$ are continuous with respect to $(n,q)$. And
$\big(n(t), q(t)\big)$ continuous with over $t$. Therefore, both functions
$L_\alpha\big(n(t), q(t)\big)$ and $K\big(n(t), q(t)\big)$ are continuous with
respect to $t$.  Now, $\Dscr$ is closed and bounded and $\Iscr_\delta$ is
open, hence $\Dscr \setminus \Iscr_\delta$ is closed and bounded. Therefore,
the infimum of the continuous function $K$ is achieved over this set. Since
$\Iscr \subset \Iscr_\delta$ for any $\delta > 0$, by the definition of
$\Iscr$, this this infimum must be strictly positive. However, over $\Iscr$
value of $K$ is $0$. Therefore, it follows that $\Iscr \subset
\Kscr_\delta$. The claim that $\Kscr_\delta \subset \Iscr_\delta$ is trivial
since if, $(n,q) \in \Dscr\setminus \Iscr_\delta$ then $K(n,q)$ is greater
than or equal to the infimum over that set, hence $(n,q)\notin
\Kscr_\delta$. Finally, to establish that $\Iscr_\delta \subset \Jscr_\beps$,
recall again that $\Delta$ is continuous and, hence, uniformly continuous over
$\Dscr$. Therefore, for any $\beps > 0$ there exists $\delta > 0$ such that
\[
\|(n,q) - (n',q')\|_1 < \delta 
\quad\Rightarrow\quad
\|\Delta(n,q) - \Delta(n',q')\|_1 < \beps/2.
\]
Consider any $(n,q) \in \Iscr_\delta$ and $(n',q') \in \Iscr$ with $\|(n,q) -
(n',q')\|_1 < \delta$. Then,
\[
\begin{split}
\|(n,q) - \Delta(n,q)\|_1 
& \leq 
\|(n,q) - (n',q')\|_1 + \|(n',q') - \Delta(n',q')\|_1 
+ \|\Delta(n',q') - \Delta(n,q)\|_1 \\
& \leq 
\delta + 0 + \beps/2 
\\
& < \beps, 
\end{split}
\]
for small enough choice of $\delta$. This completes the proof 
of (\ref{enum:attr-ii}). 

\medskip

\noindent
\emph{(\ref{enum:attr-iii}):} Here, we shall use
Theorem~\ref{thm:invariant}. First observe that, by the definition of $K$ and
Lemma~\ref{lem:wc}, for all $0 \leq s \leq t$,
\[
K\big(n(s),q(s)\big) - K\big(n(t),q(t)\big)
\geq
L_\alpha\big(n(s),q(s)\big) - L_\alpha\big(n(t),q(t)\big).
\]
In other words, the decrease in $K$ is at least as much as decrease in
$L$. 

Next, we wish to argue that when the fluid trajectory belongs to the set
$\Dscr \setminus \Kscr_\delta$ (i.e., is away from the space of fixed points
$\Iscr$) then $L_\alpha$ is strictly decreasing at some minimal rate. To be
precise, given $(n,q)\in \Dscr \setminus \Kscr_\delta$, suppose that
$\big(n(\cdot),q(\cdot)\big)$ is a fluid model solution and $t$ a regular
point such that $\big(n(t),q(t)\big)=(n,q)$. We would like to show that
\begin{equation}\label{eq:neg-D}
\mathfrak{D}(n,q) \defeq 
\frac{1}{1 + \alpha}\; 
\dt
L_\alpha\big(n(t),q(t)\big)
\leq - \gamma,
\end{equation}
for some $\gamma > 0$.

To this end, for each $f\in\Fscr$, define the function $x_f\colon
\Rp^\Fscr\times\Rp^\Escr\rightarrow [0,C]$ as
\[
x_f(n,q) \defeq 
\begin{cases}  
\rho_f & \text{if $n_f = 0$,} \\
n_{f}(t)/q_{\iota(f)}(t)
& \text{if $0 < n_f < C q_{\iota(f)}$,}
\\
C & \text{otherwise.}
\end{cases}
\]
Examining \FMSeq{enum:flow}, it is clear that this function determines the
rate allocated to $f$ at time $t$, i.e., $x_f(t) = x_f(n,q)$.
Recalling \eqref{eq:dtL}--\eqref{eq:B-bound}, we have that
\begin{equation}\label{eq:dtL-1}
  \begin{split}
    \fD(n,q)
    & =  \sum_{f \in \Fscr} 
    \frac{n_f^{\alpha}}{\mu_f \rho_f^\alpha}
    \dot{n}_f(t) 
    + 
    \sum_{e\in \Escr} 
    q_e^{\alpha} \dot{q}_e(t)  \\
    & = \sum_{f \in \Fscr} 
      \frac{n_f^{\alpha}}{\rho_f^\alpha}
      \big(\rho_f - x_f(n,q)\big)
    + 
    \left[ \Gamma x(n,q) \right]^\top q^\alpha
    -
    \max_{\sigma\in\Sscr}\ \sigma^\top (I - R) q^\alpha
    \\ 
    & =  \sum_{f\in \Fscr} 
    \left(
      \frac{n_{f}^{\alpha}}{\rho_{f}^\alpha}
      \big(\rho_{f} - x_{f}(n,q)\big) 
      + q_{\iota(f)}^\alpha \big(x_{f}(n,q) - \rho_{f}\big) 
    \right) 
    +
    \left[ \Gamma \rho \right]^\top q^\alpha
    -
    \max_{\sigma\in\Sscr}\ \sigma^\top (I - R) q^\alpha
    \\
    & = \sum_{f\in \Fscr}  \left(\frac{n_{f}^{\alpha}}{\rho_f^\alpha} 
      - q_{\iota(f)}^\alpha\right)
    \big(\rho_{f} - x_{f}(n,q)\big) 
    +
    T(q).
  \end{split}
\end{equation}
Here, for convenience, we define the function $T\colon \Rp^\Escr\to\R$ by
\[
T(q) \defeq
\left[ \Gamma \rho \right]^\top q^\alpha 
- \max_{\sigma\in\Sscr}\ \sigma^\top (I - R) q^\alpha. 
\]
Now, recall that $\rho_f < C$ for all $f \in \Fscr$. Therefore, it follows
that, for all $f\in \Fscr$, if $n_f > 0$,
\[ 
x_{f}(n,q) < \rho_{f} \quad \Longleftrightarrow \quad 
n_{f}/\rho_{f} < q_{\iota(f)}(t).
\]
Therefore, for all $f \in \Fscr$,
\begin{equation}
\label{eq:lx1}
\left(\frac{n_{f}^{\alpha}}{\rho_f^\alpha} - q_{\iota(f)}^\alpha\right) 
\big(\rho_{f} - x_{f}(n,q)\big)  \leq 0, 
\end{equation}
with the inequality being strict if $n_f \neq \rho_f q_{\iota(f)}$ and $n_f > 0$.

Since $(n, q) \in \Dscr \setminus \Kscr_\delta$, it can not be
a fixed point. Therefore, by part~(\ref{enum:lm-iv}) of the 
equivalence established in Theorem~\ref{thm:invariant}, one of 
the following two conditions holds:
\begin{itemize}
\item $T(q) < 0$.
\item $T(q) = 0$, and there exists some $f\in\Fscr$ with $n_f \neq \rho_f
  q_{\iota(f)}$ with $n_f > 0$. 
\end{itemize}
Note that the $n_f>0$ requirement of the second case follows from
\FMSeq{enum:nzero}: $n_f=0$ would imply that $q_{\iota(f)}=0$, hence if $n_f
\neq\rho_f q_{\iota(f)}$, it must be that $n_f \neq 0$.  In either of the
above two cases, using \eqref{eq:lx1} it is easy to see that the right hand
side of \eqref{eq:dtL-1} is strictly negative.  However, this does not provide
a uniform, strictly negative, bound on the drift $\fD$. If $\fD$ were
established to be a continuous function over set $\Dscr \setminus
\Kscr_\delta$, then such a uniform negative bound would follow as the set
$\Dscr \setminus \Kscr_\delta$ is closed and bounded.  However, closer
examination \eqref{eq:dtL-1} reveals that $\fD$ depends on the function $x_f$,
thus is not necessarily continuous at the boundary $n_f = 0$, for any $f \in
\Fscr$.

This difficulty is overcome as follows. We will cover $\Dscr \setminus
\Kscr_\delta$ by a finite collection of closed and bounded sets. On each set,
we will obtain a bound on the drift $\fD$ that is continuous on the set as
well as strictly negative. Hence, we will conclude that $\fD$ is uniformly
bounded by a negative quantity on $\Dscr \setminus \Kscr_\delta$, i.e., that
\eqref{eq:neg-D} holds. The details are given next.

To begin, define the function $\fR\colon\Rp\times\Rp\rightarrow [0,C]$ by
\[
\fR\big(n_f,q_{\iota(f)}\big) \defeq
\begin{cases}
n_f/q_{\iota(f)} & \text{if $n_f < Cq_{\iota(f)}$,}
\\
C & \text{otherwise.}
\end{cases}
\]
For given a vector $\bb = [\bb_f] \in \{-1,0,1\}^{|\Fscr|}$, define
$\Sbb$ as to be the set of $(n,q) \in \Dscr \setminus \Kscr_\delta$ such that, for each $f$,
\begin{align*}
  \rho_f - \fR\big(n_f,q_{\iota(f)}\big) & \geq n_f,
  & &
  \text{if $\bb_f=-1$},
  \\
  \left| \fR\big(n_f,q_{\iota(f)}\big) - \rho_f  \right| & \leq n_f,
  & &
  \text{if $\bb_f=0$},
  \\
  \fR\big(n_f,q_{\iota(f)}\big) - \rho_f  & \geq n_f,
  & &
  \text{if $\bb_f=1$}.
\end{align*}
Clearly $\Sbb$ is a closed and bounded set, since $\fR$ is continuous.  Given
$f\in\Fscr$ with $\bb_f\neq 0$, define the function $g_f\colon \Sbb\to\R$
by
\[
g_f\big(n,q) \defeq
\begin{cases}
  \rho_f^\alpha - 
  \left(\rho_f - n_f\right)^\alpha
  & \text{if $\bb_f=-1$,}
  \\
  \left(n_f  + \rho_f\right)^\alpha
  - \rho_f^\alpha
  & \text{if $\bb_f=1$.}
\end{cases}
\]
It is easy to check that for all $(n,q)\in\Sbb$,
\begin{equation}\label{eq:ra-bound}
\left|\fR^\alpha\big(n_f,q_{\iota(f)}\big) - \rho^\alpha_f\right|
\geq g_f(n,q) \geq 0.
\end{equation}
Further, $g_f$ is continuous and $g_f(n,q)=0$ if and only if $n_f=0$.
Finally, define the function $\fbb \colon \Sbb \to \R$ as 
\[
\begin{split}
\fbb(n,q) 
& \defeq T(q) 
+ 
\sum_{f\ :\ \bb_f = 0} 
\left(\frac{n_{f}^{\alpha}}{\rho_f^\alpha} 
  - q_{\iota(f)}^\alpha\right) \big(\rho_{f} - x_{f}(n,q)\big) 
\\
&
\quad
- 
\sum_{f\ :\ \bb_f \neq 0}  
\min\ \left( 
  (C-\rho_f)(\rho_f^{-\alpha}-C^{-\alpha}) n_f^{\alpha},
  \frac{n^{1+\alpha}_f g_f(n,q)}{C^\alpha \rho_f^\alpha}
\right).
\end{split}
\]

We make the following claims:
\begin{enumerate}[(a)]
\item\label{enum:iii-a} $\fbb$ is a continuous function over $\Sbb$.
\item\label{enum:iii-b} For any $(n,q) \in \Sbb$, $\fD(n,q) \leq \fbb(n,q)$.
\item\label{enum:iii-c} For any $(n,q) \in \Sbb$, $\fbb(n,q) < 0$.
\end{enumerate}

\medskip

\noindent
\emph{(\ref{enum:iii-a}):} To establish this claim, it is sufficient to
observe that for all $f$ with $\bb_f = 0$, $x_f$ is a continuous function over
$\Sbb$.  To see this, note that if $\bb_f = 0$, then
$\big|\fR\big(n_f,q_{\iota(f)}\big) - \rho_f\big| \leq n_f$. Now $x_f(n, q) =
\fR\big(n_f,q_{\iota(f)}\big)$ if $n_f > 0$ and $x_f(n, q) = \rho_f$ if $n_f =
0$. Therefore, over $\Sbb$, we have that $x_f(n,q) =
\fR\big(n_f,q_{\iota(f)}\big)$ for all $n_f \geq 0$.  This establishes
continuity of $x_f$ for $f$ with $\bb_f = 0$.

\medskip

\noindent
\emph{(\ref{enum:iii-b}):} Here, we need to show that for any $f$ with $\bb_f
\neq 0$, the term in $\fbb$ is larger than or equal to the corresponding term
on the right hand side of \eqref{eq:dtL-1} in magnitude and preserves the
sign. That is,
\begin{equation}\label{eq:iii-mag}
\left|\frac{n_{f}^{\alpha}}{\rho_f^\alpha} - q_{\iota(f)}^\alpha \right| 
\big|\rho_{f} - x_{f}(n,q)\big| 
\geq 
\min\ \left( 
  (C-\rho_f)(\rho_f^{-\alpha}-C^{-\alpha}) n_f^{\alpha},
  \frac{n^{1+\alpha}_f  g_f(n,q)}{C^\alpha \rho_f^\alpha}
\right),
\end{equation}
and 
\begin{equation}\label{eq:iii-sign}
\left(\frac{n_{f}^{\alpha}}{\rho_f^\alpha} - q_{\iota(f)}^\alpha \right) 
\big(\rho_{f} - x_{f}(n,q)\big) = 0 
\quad \Longleftrightarrow \quad
n_f = 0.
\end{equation}

Now, if $n_f = 0$ then $x_f(n,q) = \rho_f$ and hence the left hand side above
of \eqref{eq:iii-sign} is $0$. If $n_f > 0$, since $\bb_f \neq 0$ and thus
$\big|\fR\big(n_f,q_{\iota(f)}\big) - \rho_f\big| \geq n_f > 0$, we have $x_f
\neq \rho_f$.  Therefore, the left hand side of \eqref{eq:iii-sign} is not
equal to $0$. Thus, \eqref{eq:iii-sign} is established. 

To prove \eqref{eq:iii-mag}, we have the following cases:
\begin{itemize}
\item $n_f=0$.
  Here, both sides of \eqref{eq:iii-mag} are $0$.

\item $0 < n_f \leq C q_{\iota(f)}$. Here, we have
  \[
  \begin{split}
    \left|\frac{n_{f}^{\alpha}}{\rho_f^\alpha} - q_{\iota(f)}^\alpha \right| 
    \big|\rho_{f} - x_{f}(n,q)\big| 
    & 
    =
    \frac{q_{\iota(f)}^\alpha}{\rho_f^\alpha}
    \left|\frac{n_{f}^{\alpha}}{q_{\iota(f)}^\alpha} - \rho_f^\alpha \right| 
    \big|\rho_{f} - x_{f}(n,q)\big| 
    \\
    &
    =
    \frac{q_{\iota(f)}^\alpha}{\rho_f^\alpha}
    \left|\fR^\alpha\big(n_f,q_{\iota(f)}\big) - \rho_f^\alpha \right| 
    \big|\rho_{f} - \fR\big(n_f,q_{\iota(f)}\big)\big| 
    \\
    &
    \geq
    \frac{n_f q_{\iota(f)}^\alpha g_f(n,q)}{\rho_f^\alpha}
    \geq
    \frac{n^{1+\alpha}_f g_f(n,q)}{C^\alpha \rho_f^\alpha},
  \end{split}
  \]
  where we have used the fact that $\bb_f\neq 0$ and \eqref{eq:ra-bound}.
\item $0 \leq C q_{\iota(f)} < n_f$. Here, since $\rho_f < C$, we have that
  \[
  \begin{split}
    \left|\frac{n_{f}^{\alpha}}{\rho_f^\alpha} - q_{\iota(f)}^\alpha \right| 
    \big|\rho_{f} - x_{f}(n,q)\big| 
    & 
    =
    \left(\frac{n_{f}^{\alpha}}{\rho_f^\alpha} - q_{\iota(f)}^\alpha \right)
    (C - \rho_{f})
    \geq
    (C-\rho_f)(\rho_f^{-\alpha}-C^{-\alpha}) n_f^{\alpha}.
  \end{split}
  \]
\end{itemize}

\medskip

\noindent
\emph{(\ref{enum:iii-c}):} Suppose that $(n,q) \in \Sbb$. We wish to establish
that $\fbb(n,q) < 0$. Since $(n,q)$ is not an invariant point, by
part~(\ref{enum:lm-iv}) of the equivalence established in
Theorem~\ref{thm:invariant} and by \FMSeq{enum:nzero}, one of the following
two conditions holds:
\begin{itemize}
\item $T(q) < 0$. In this case, using \eqref{eq:lx1}, clearly $\fbb(n,q) < 0$.
\item $T(q) = 0$, and there exists some $f\in\Fscr$ with $n_f \neq \rho_f
  q_{\iota(f)}$ and $n_f > 0$. Here, if $\bb_f=0$, then since the inequality in
  \eqref{eq:lx1} must be strict, we have $\fbb(n,q) < 0$. On the other hand,
  suppose that $\bb_f\neq 0$.  Since $n_f > 0$, we have that
  \[
\min\ \left( 
  (C-\rho_f)(\rho_f^{-\alpha}-C^{-\alpha}) n_f^{\alpha},
  \frac{n^{1+\alpha}_f  g_f(n,q)}{C^\alpha \rho_f^\alpha}
\right)
  > 0,
  \]
  and it follows that $\fbb(n,q) < 0$. 
\end{itemize}

\medskip

Now, given claims (\ref{enum:iii-a}) and (\ref{enum:iii-c}), it follows that
\[
\sup_{(n,q)\in\Sbb}
\fbb(n,q) \leq -\gamma_{\bb} < 0,
\]
for some $\gamma_\bb > 0$. Using claim (\ref{enum:iii-b}), we have that
\[
\fD(n,q) \leq -\gamma_{\bb} < 0,
\]
for all $(n,q)\in\Sbb$. Since $\Dscr \setminus \Kscr_\delta = \cup_{\bb \in
  \{-1,0,1\}^{\Fscr}} \Sbb$, we have that, for all $(n,q)\in \Dscr\setminus
\Kscr_\delta$,
\[
\fD(n,q) \leq -\gamma < 0,
\]
where
\[
\gamma \defeq \min_{\bb\in \{-1,0,1\}^\Fscr}\ \gamma_\bb.
\]
This completes the proof of Theorem \ref{thm:attractive}.

\end{proof}

\subsection{Fixed Points: Optimality}

The following theorem characterizes the cost associated with an invariant
state, relative to the effective cost. The effective cost represents the
lowest cost achievable under \emph{any} policy (cf.\
Theorem~\ref{thm:eff-cost}). Hence, this result implies that the invariant
states of the \MWUMa policy are cost optimal, as $\alpha\to 0^+$.

\begin{theorem}
  Suppose $(n^*,q^*)$ is an invariant state of a critically loaded system
  under the \MWUMa policy. Then,
  \begin{equation}\label{eq:inv-cost}
    c(n^*,q^*) \leq \big(1 + \beta(\alpha)\big)c^*(n^*,q^*),
  \end{equation}
  where $\beta(\alpha)\to 0$ as $\alpha\to 0^+$.
\end{theorem}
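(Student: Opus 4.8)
The plan is to read off from Theorem~\ref{thm:invariant} that an invariant state $(n^*,q^*)$ is the unique minimizer of $L_\alpha$ over the workload-feasible set, and then to compare this against the effective-cost minimizer over the \emph{same} set, exactly transferring the $L_\alpha$--versus--$c$ comparison used in the proof of Lemma~\ref{lem:ub}. Concretely, by Theorem~\ref{thm:invariant} we have $(n^*,q^*)=\Delta(n^*,q^*)$, so $(n^*,q^*)$ solves $\ALGD(n^*,q^*)$; i.e.\ it minimizes $L_\alpha$ over
\[
\Wscr \defeq \big\{ (n',q')\in\Rp^\Fscr\times\Rp^\Escr\ :\ w_\zeta(n',q')\geq w_\zeta(n^*,q^*)\ \ \forall\ \zeta\in\CRE(\rho)\big\}.
\]
On the other hand, by \eqref{eq:eff-cost}, $c^*(n^*,q^*)=\min_{(n',q')\in\Wscr} c(n',q')$, and this minimum is attained at some $(\hat n,\hat q)\in\Wscr$ (the set $\Wscr$ is nonempty, containing $(n^*,q^*)$, and closed, and $c$ is coercive on the nonnegative orthant). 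Since $(\hat n,\hat q)$ is feasible for $\ALGD(n^*,q^*)$, optimality of $(n^*,q^*)$ yields $L_\alpha(n^*,q^*)\leq L_\alpha(\hat n,\hat q)$, equivalently $\ell(n^*,q^*)\leq \ell(\hat n,\hat q)$ with $\ell\defeq L_\alpha^{1/(1+\alpha)}$.

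Next I would convert this into the claimed inequality between $c$-values, up to a factor tending to $1$. Write $\ell(n,q)=\|v\|_{1+\alpha}$, where $v$ is the vector with entries $a_f n_f$ ($f\in\Fscr$) and $q_e$ ($e\in\Escr$) and $a_f\defeq(\mu_f\rho_f^\alpha)^{-1/(1+\alpha)}=\mu_f^{-1}(\mu_f/\rho_f)^{\alpha/(1+\alpha)}$. Since $\Fscr$ is finite and each $\mu_f,\rho_f>0$, the multipliers $(\mu_f/\rho_f)^{\alpha/(1+\alpha)}$ converge to $1$ uniformly in $f$ as $\alpha\to 0^+$; letting $c_-(\alpha)$ and $c_+(\alpha)$ be the minimum and the maximum of the numbers $1$ and $\{(\mu_f/\rho_f)^{\alpha/(1+\alpha)}\}_{f\in\Fscr}$, one has $c_-(\alpha)\,c(n,q)\leq\|v\|_1\leq c_+(\alpha)\,c(n,q)$ for every state, with $c_\pm(\alpha)\to 1$. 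Using Lemma~\ref{lem:norm} with $p\defeq 1+\alpha$ and $d\defeq|\Escr|+|\Fscr|$, i.e.\ $\|v\|_{1+\alpha}\leq\|v\|_1\leq d^{\alpha/(1+\alpha)}\|v\|_{1+\alpha}$, I would chain
\[
c(n^*,q^*)\leq \frac{1}{c_-(\alpha)}\|v^*\|_1 \leq \frac{d^{\alpha/(1+\alpha)}}{c_-(\alpha)}\,\ell(n^*,q^*) \leq \frac{d^{\alpha/(1+\alpha)}}{c_-(\alpha)}\,\ell(\hat n,\hat q) \leq \frac{d^{\alpha/(1+\alpha)}}{c_-(\alpha)}\,\|\hat v\|_1 \leq \frac{d^{\alpha/(1+\alpha)}\,c_+(\alpha)}{c_-(\alpha)}\,c(\hat n,\hat q),
\]
the middle inequality being the one established above. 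As $c(\hat n,\hat q)=c^*(n^*,q^*)$, setting $1+\beta(\alpha)\defeq d^{\alpha/(1+\alpha)}c_+(\alpha)/c_-(\alpha)$ gives \eqref{eq:inv-cost}, and $\beta(\alpha)\to 0$ as $\alpha\to 0^+$ because $d^{\alpha/(1+\alpha)}\to 1$ and $c_\pm(\alpha)\to 1$.

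The argument is essentially bookkeeping once the competitor is identified; the one conceptual point is the appeal to Theorem~\ref{thm:invariant} to see that an invariant state is precisely the $L_\alpha$-minimizer over the workload-feasible polyhedron, so that the effective-cost minimizer $(\hat n,\hat q)$ is an admissible competitor against which $L_\alpha(n^*,q^*)$ is controlled. The main (mild) obstacle I expect is the \emph{uniformity} of $\beta(\alpha)$: one must check that both the exponent factor $d^{\alpha/(1+\alpha)}$ and the coordinate-reweighting factors $(\mu_f/\rho_f)^{\alpha/(1+\alpha)}$ tend to $1$ as $\alpha\to 0^+$ independently of the state, which holds because $\Fscr$ and $\Escr$ are finite and all $\mu_f,\rho_f$ are strictly positive — the same mechanism that yields $\beta(\alpha)\to 0$ in Lemma~\ref{lem:ub} and Theorem~\ref{thm:balance}. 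Combined with Theorem~\ref{thm:eff-cost} (which identifies $c^*$ as a lower bound on the cost under \emph{any} policy), this will give the stated asymptotic cost optimality of invariant states as $\alpha\to 0^+$.
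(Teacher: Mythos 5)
Your proof is correct and follows essentially the same route as the paper: take the optimizer $(\hat n,\hat q)$ of the effective-cost LP, observe it is feasible for $\ALGD(n^*,q^*)$ so that optimality of the invariant state gives $L_\alpha(n^*,q^*)\leq L_\alpha(\hat n,\hat q)$, and then transfer this to the cost $c$ via the norm-equivalence argument of Lemma~\ref{lem:ub}. The only difference is that you spell out the weighted-norm comparison (with the correct weights $(\mu_f\rho_f^\alpha)^{-1/(1+\alpha)}\to 1/\mu_f$) where the paper simply cites Lemma~\ref{lem:ub}, which is a welcome but not substantive elaboration.
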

\begin{proof}
  Suppose $(n^*,q^*)$ is an invariant state.
  Define $(n',q')$ to be an
  optimal solution to the effective cost LP $c^*(n^*,q^*)$, defined by
  \eqref{eq:eff-cost}.
  Clearly
  \[
  L_\alpha(n^*,q^*) \leq L_\alpha(n',q'),
  \]
  since $(n^*,q^*)$ is optimal
  for $\ALGD(n^*,q^*)$, and $(n',q')$ is feasible for $\ALGD(n^*,q^*)$. Then,
  following the same argument as in Lemma~\ref{lem:ub},
  \[
  c(n^*,q^*) 
  \leq \big(1+\beta(\alpha)\big) c(n',q')
  = \big(1+\beta(\alpha)\big) c^*(n^*,q^*),
  \]
  where $\beta(\alpha)\to 0$ as $\alpha\to 0^+$.
\end{proof}

\section{Discussion and Future Work}\label{sec:conclusion}

We have provided a model of a communications network that operates at the
packet-level with the goal of achieving end-to-end performance at the
flow-level. The proposed \MWUMa control policy achieves this goal by means of
the maximum weight-$\alpha$ packet-level scheduling along with the
$\alpha$-fair rate allocation. We established the positive recurrence of the
system by means of fluid model when the system is underloaded.  For the
critically loaded fluid model, we established path-wise constant factor
optimality; the constant factor depends $\alpha$ and the {\em balance factor}.

There are several interesting directions for future work. To start with, by
characterizing the invariant manifold of the critically loaded fluid model and
establishing its attractiveness, the work here should lead to the
multiplicative state-space collapse property in a relatively straightforward
manner following the method of Bramson \cite{Bramson98}. As the next step,
establishing the strong state-space collapse property would require bounding
the the maximal deviation in the system state over certain time-horizon. We
strongly believe that under \MWUMa control policy for $\alpha \geq 1$, this
should follow from a recently developed Lyapunov function based maximal
inequality by Shah, Tsitsiklis and Zhong \cite{STZ10}. However, further
obtaining a complete characterization of the diffusion (heavy traffic)
approximation seems to be far more non-trivial question. Finally, the results
about path-wise constant factor optimality of critically loaded fluid model
seem to suggest the possibility of such constant factor optimality of \MWUMa
control policy under diffusion approximation.

\bibliographystyle{plain}
\bibliography{biblio}

\clearpage

\appendix

\section{Standard Norm Inequalities}

The following lemma provides some standard norm inequalities that are used
throughout the paper:

\begin{lemma}\label{lem:norm}
  Consider a vector $x\in\R^d$. 
  \begin{enumerate}[(i)]
  \item\label{enum:p-q} 
    If $0 <  p \leq q \leq \infty$, then $\|x\|_q \leq \|x\|_p$.
  \item\label{enum:one-p}
    If $1 \leq p \leq \infty$ and $1/p + 1/q = 1$, then
    \[
    d^{-1/q} \|x\|_1 \leq \|x\|_{p}.
    \]
  \end{enumerate}
\end{lemma}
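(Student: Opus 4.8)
The plan is to prove the two inequalities separately, each by a short homogeneity argument together with an elementary pointwise estimate; both are classical, so the point is just to record a self-contained derivation.

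\medskip
\noindent\emph{Part (\ref{enum:p-q}).} First I would dispose of the trivial cases: if $x=\bzero$ both sides vanish, and if $p=\infty$ then $q=\infty$ and the inequality is an equality, so assume $x\neq\bzero$ and $p<\infty$. Since $\|\cdot\|_p$ and $\|\cdot\|_q$ are positively homogeneous of degree one, after rescaling $x$ I may assume $\|x\|_p=1$, so that $|x_i|\le 1$ for every coordinate $i$. Because $q\ge p$, for each $i$ we have $|x_i|^{q}=|x_i|^{p}\,|x_i|^{q-p}\le|x_i|^{p}$. If $q<\infty$, summing over $i$ gives $\|x\|_q^{q}=\sum_i|x_i|^{q}\le\sum_i|x_i|^{p}=\|x\|_p^{p}=1$, hence $\|x\|_q\le 1=\|x\|_p$. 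If $q=\infty$, then directly $\|x\|_\infty=\max_i|x_i|\le 1=\|x\|_p$.

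\medskip
\noindent\emph{Part (\ref{enum:one-p}).} This I would get from H\"older's inequality applied to $x$ and the all-ones vector $\bone\in\R^d$ with conjugate exponents $p$ and $q$:
\[
\|x\|_1=\sum_{i=1}^d|x_i|\cdot 1\le\|x\|_p\,\|\bone\|_q=d^{1/q}\|x\|_p,
\]
using in the endpoint cases the conventions $\|\bone\|_\infty=1=d^{0}$ (when $p=1$, $q=\infty$) and $\|\bone\|_1=d=d^{1}$ (when $p=\infty$, $q=1$). Dividing by $d^{1/q}$ yields $d^{-1/q}\|x\|_1\le\|x\|_p$. When $p\ge 1$ one may alternatively avoid citing H\"older and argue directly from convexity of $t\mapsto t^p$: Jensen's inequality gives $\bigl(d^{-1}\|x\|_1\bigr)^{p}\le d^{-1}\|x\|_p^{p}$, which rearranges to the same bound.

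\medskip
There is essentially no obstacle here; the only thing requiring a little care is matching the endpoint exponents $p,q\in\{1,\infty\}$ to the conventions $\|\cdot\|_\infty=\max_i|\cdot|$ and $1/\infty=0$. If desired, one could instead simply cite a standard reference for these inequalities.
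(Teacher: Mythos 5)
Your proof is correct. The paper itself offers no proof of this lemma --- it is stated in the appendix as a standard fact --- so there is no argument to compare against; your self-contained derivation (homogeneity plus the pointwise bound $|x_i|^q\le|x_i|^p$ for part (\ref{enum:p-q}), and H\"older or Jensen against the all-ones vector for part (\ref{enum:one-p})) is a perfectly adequate substitute. One small merit worth noting: your normalization argument for part (\ref{enum:p-q}) works verbatim for $0<p<1$, where $\|\cdot\|_p$ is only a quasi-norm; this matters because the paper invokes the inequality with exponent $\alpha$, which may lie in $(0,1)$.
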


\section{Justification of the Fluid Model}\label{ap:fluid}

In this appendix, we will provide a proof for Theorem~\ref{thm:fms}, which
establishes fluid model as the formal functional law of large numbers 
approximation. 

We begin with some technical preliminaries. Fix $T > 0$. Recall from
Section~\ref{sec:formal-fms} that $\bD[0,T]$ is the space of functions from
$[0,T]$ to $\mathfrak{Z}$, as defined in \eqref{eq:fms-path}, that are
RCLL. This space is equipped with the Skorohod metric, defined as
\[
\bd(\bdx, \by) \defeq \inf_{\phi \in \Phi}\ \|\phi\|^o \vee \|\bdx - \phi\circ \by\|,\quad\text{for }
\bdx, \by \in \bD[0,T].
\]
Here, $\Phi$ is the set of all non-decreasing functions $\phi\colon [0,T] \to
[0,T]$ with $\phi(0)=0$ and $\phi(T)=T$.  The norm $\|\cdot\|^o$ over $\Phi$
is defined as follows: for $\phi \in \Phi$,
\[
\|\phi\|^o \defeq\sup_{0\leq s < t \leq T}\ \log \left| \frac{\phi(t)-\phi(s)}{t-s}\right|.
\]
By $\phi \circ \by$ refers to the composition $\by(\phi(t))$, and for 
any $\bdx \in \bD[0,T]$, 
\[
\|\bdx\| \defeq \sup_{t\in [0, T]} \|\bdx(t)\|_1,
\]
with $\|\cdot\|_1$ being the standard $\ell_1$-norm over the product space
$\mathfrak{Z}$.

For any $\bdx \in
\bD[0,T]$ and $0\leq s < t \leq T$, define
\[
 w_\bdx(s,t) \defeq \sup_{s\leq t_1, t_2\leq t}\ \|\bdx(t_1) - \bdx(t_2)\|_1.
\]
Further, for any $\delta > 0$, define
\[
w_\bdx^\prime(\delta) = \inf_{\{t_i\} \in \bS(T,\delta)} \max_i\ w_\bdx(t_{i-1},t_i),
\]
where $\bS(T,\delta)$ is collection of all $\delta$-sparse decompositions
$\{t_i\}$ of $[0,T]$, i.e., $0 = t_0 < t_1 < \dots < t_\ell = T$ with $t_i -
t_{i-1} \geq \delta$ for all $i\geq 1$.

It can be easily checked (see \cite[Chapter~3]{Billingsley99}) that the
$\bD[0,T]$ is Polish space under the metric $\bd$. Let $\Bscr_T$ denote the
Borel $\sigma$-algebra on $\bD[0,T]$ with respect to the topology induced by
$\bd$.  We will be interested in probability measures over space $(\bD[0,T],
\Bscr)$.  We shall utilize the following well-known characterization of
tightness of measures (see \cite[Theorem~13.2]{Billingsley99}):
\begin{theorem}\label{thm:tight}
  The collection of measures $\{P_\theta\ :\ \theta \in \Theta\}$ defined on
  $(\bD[0,T], \Bscr)$ is tight if and only if the following two conditions are
  satisfied:
  \begin{enumerate}[(a)]
  \item\label{enum:tight-a}
    \[
    \lim_{A\to\infty} \limsup_{\theta \in \Theta}\ 
    P_\theta\Big( \bdx \in \bD[0,T]\ :\ \|\bdx\| \geq A \Big) = 0.
    \]
 
  \item\label{enum:tight-b}
    For each $\varepsilon > 0$, 
    \[
    \lim_{\delta \to 0} \limsup_{\theta\in\Theta}\
    P_\theta\Bigl( \bdx \in \bD[0,T]\ :\
    w_\bdx^\prime(\delta) \geq \varepsilon\Bigr) = 0.
    \]
  \end{enumerate}
\end{theorem}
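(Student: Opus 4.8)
This is a classical result --- essentially \cite[Theorem~13.2]{Billingsley99} --- and the plan is to reduce it to the Arzel\`a--Ascoli-type characterization of relatively compact subsets of the Skorohod space $\bD[0,T]$: a set $A \subseteq \bD[0,T]$ has compact closure if and only if $\sup_{\bdx \in A} \|\bdx\| < \infty$ and $\lim_{\delta \to 0}\sup_{\bdx \in A} w_\bdx^\prime(\delta) = 0$ (this is \cite[Theorem~12.3]{Billingsley99}). Granting this, both directions of the equivalence follow from the bare definition of tightness --- namely, that for every $\eta > 0$ there is a compact $K \subseteq \bD[0,T]$ with $\inf_{\theta \in \Theta} P_\theta(K) \geq 1 - \eta$ --- together with elementary set inclusions; in particular no appeal to Prokhorov's theorem is needed.

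For necessity, fix $\eta > 0$ and, using tightness, choose a compact $K$ with $\inf_\theta P_\theta(K) \geq 1-\eta$. By the compactness characterization, $A_0 \defeq \sup_{\bdx \in K}\|\bdx\| < \infty$; since $\{\bdx : \|\bdx\| \geq A_0 + 1\} \subseteq \bD[0,T] \setminus K$, we get $P_\theta(\|\bdx\| \geq A_0+1) \leq \eta$ for all $\theta$, and as $\eta$ was arbitrary and $A \mapsto \limsup_\theta P_\theta(\|\bdx\| \geq A)$ is non-increasing, condition~(a) follows. Likewise, for $\varepsilon > 0$ there is $\delta_0 > 0$ with $\sup_{\bdx \in K} w_\bdx^\prime(\delta) < \varepsilon$ for all $\delta < \delta_0$; hence $\{\bdx : w_\bdx^\prime(\delta) \geq \varepsilon\} \subseteq \bD[0,T] \setminus K$ and $P_\theta(w_\bdx^\prime(\delta) \geq \varepsilon) \leq \eta$ for every such $\delta$ and every $\theta$, which (letting $\eta \downarrow 0$) yields condition~(b).

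For sufficiency, fix $\eta > 0$. By condition~(a) choose $A < \infty$ with $\sup_\theta P_\theta(\|\bdx\| > A) \leq \eta/2$, and by condition~(b), for each $k \in \N$ choose $\delta_k > 0$ with $\sup_\theta P_\theta\big(w_\bdx^\prime(\delta_k) \geq 1/k\big) \leq \eta\, 2^{-(k+1)}$. Let $K$ be the closure in $\bD[0,T]$ of the set $\{\bdx : \|\bdx\| \leq A \text{ and } w_\bdx^\prime(\delta_k) < 1/k \text{ for all } k \in \N\}$. The compactness characterization shows $K$ is compact, while a union bound gives $\sup_\theta P_\theta(\bD[0,T] \setminus K) \leq \eta/2 + \sum_{k\geq 1} \eta\, 2^{-(k+1)} = \eta$. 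Since $\eta > 0$ was arbitrary, $\{P_\theta\}$ is tight by definition.

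The only genuinely substantive ingredient is the compactness characterization of $\bD[0,T]$ invoked above; proving it from scratch requires the careful modulus-of-continuity estimates for the Skorohod topology developed in \cite[Chapter~3]{Billingsley99}, and this --- rather than the bookkeeping in the two preceding paragraphs --- is where the main difficulty lies in a self-contained treatment. Since that result is standard, in the paper I would simply cite \cite[Theorem~13.2]{Billingsley99} for the statement and regard the argument above merely as the translation between the two formulations.
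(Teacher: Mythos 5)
The paper does not prove this statement at all---it is quoted as a standard result with a citation to \cite[Theorem~13.2]{Billingsley99}---and your argument is the standard textbook proof via the Arzel\`a--Ascoli-type characterization of relative compactness in $\bD[0,T]$ (\cite[Theorem~12.3]{Billingsley99}), so your proposal is correct and consistent with what the paper does (indeed you end by proposing exactly the citation the paper uses). The only caveat is that your sufficiency argument reads $\limsup_{\theta\in\Theta}$ as a supremum over $\Theta$; if instead it is a genuine limit superior along an ordered index set (as for the family $\{\mu^{(r)}:r\geq 1\}$ in the paper's application), one must also handle the ``small $\theta$'' measures, e.g.\ via the fact that each individual probability measure on the Polish space $\bD[0,T]$ is tight, exactly as in Billingsley's proof of the sequential case.
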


We state the following well-known `concentration' property of Poisson process
that shall later be useful.  It follows from the application of a standard
Chernoff bound (see, for example, \cite[Theorem~2.2.3]{DZ03}).
\begin{proposition}\label{prop:Poi}
Consider a Poisson process of rate $1$. Let $N(t)$ be
the number of events of this Poisson process in
time interval $[0,t]$. Then, for any $\delta\in [0,t]$,
\[
\PR\Bigl(\left| N(t) - t \right| \geq \delta \Bigr)
\leq 2\exp\left(-\frac{\delta^2}{2t}\right).
\]
\end{proposition}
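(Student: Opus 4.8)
The plan is to prove this by a standard two-sided Chernoff (exponential Markov) bound, exploiting the fact that $N(t)$ is a Poisson random variable with mean $t$, whose moment generating function is available in closed form: $\E\big[e^{\theta N(t)}\big] = \exp\!\big(t(e^{\theta}-1)\big)$ for all $\theta \in \R$. First I would split the symmetric event into its two one-sided tails,
\[
\PR\big(|N(t)-t|\ge\delta\big) \le \PR\big(N(t)\ge t+\delta\big) + \PR\big(N(t)\le t-\delta\big),
\]
so that the factor $2$ in the statement will emerge as the sum of the two resulting estimates, and then bound each tail in turn.

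For the upper tail, for any $\theta>0$, Markov's inequality applied to the nonnegative random variable $e^{\theta N(t)}$ gives $\PR\big(N(t)\ge t+\delta\big)\le\exp\!\big(t(e^{\theta}-1)-\theta(t+\delta)\big)$; minimizing the exponent over $\theta>0$ (the minimizer is $\theta=\log(1+\delta/t)$) yields the Poisson large-deviation rate, $-t\big[(1+u)\log(1+u)-u\big]$ with $u=\delta/t$. For the lower tail, the same argument applied to $e^{-\theta N(t)}$ with $\theta>0$ gives the exponent $-t\big[u+(1-u)\log(1-u)\big]$ with $u=\delta/t$; here the hypothesis $\delta\le t$ is precisely what is needed for this quantity to be well-defined. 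It then remains to bound these two scalar rate functions below by $u^2/2$ on $u\in[0,1]$. For the lower-tail rate this is immediate from the nonnegative power series $u+(1-u)\log(1-u)=\tfrac{u^2}{2}+\tfrac{u^3}{6}+\tfrac{u^4}{12}+\cdots$; the matching sub-Gaussian estimate for the upper-tail rate on $[0,1]$ is the content of the standard Chernoff bound for Poisson variables (cf.\ \cite[Theorem~2.2.3]{DZ03}). Substituting $u=\delta/t$ into the two tail bounds then gives each one-sided probability at most $\exp\!\big(-\delta^2/(2t)\big)$, and adding the two completes the proof.

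This is a textbook estimate, so there is no serious obstacle. The only non-mechanical ingredient is the final step — the elementary calculus check that the Poisson rate functions dominate the Gaussian rate $u^2/2$ over $[0,1]$ — and I would not write it out in full.
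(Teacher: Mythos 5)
Your two-tail decomposition and the optimized Chernoff computation are exactly the intended route (the paper itself offers no proof, only the pointer to \cite[Theorem~2.2.3]{DZ03}), and your lower-tail step is sound: $u+(1-u)\log(1-u)=\tfrac{u^2}{2}+\tfrac{u^3}{6}+\cdots\geq \tfrac{u^2}{2}$ on $[0,1]$. The gap is precisely the step you declined to write out, for the upper tail. The Poisson upper-tail rate $h(u)\defeq(1+u)\log(1+u)-u$ does \emph{not} dominate $u^2/2$; the inequality goes the other way. Indeed $h(0)=0$ and $h'(u)=\log(1+u)\leq u$, so $h(u)\leq u^2/2$ for all $u\geq 0$, strictly for $u>0$ (e.g.\ $h(1)=2\log 2-1\approx 0.386<0.5$). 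Consequently your optimized bound $\PR\big(N(t)\geq t+\delta\big)\leq \exp\big(-t\,h(\delta/t)\big)$ is \emph{weaker} than $\exp(-\delta^2/(2t))$, and the ``standard Chernoff bound for Poisson variables'' you invoke cannot supply the missing estimate: what that route yields is the Bennett/Bernstein-type inequality $h(u)\geq \frac{u^2}{2(1+u/3)}$, which on $u\in[0,1]$ gives only $\exp\big(-3\delta^2/(8t)\big)$ for the upper tail.

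Moreover this is not a defect of your proof alone: by the large-deviations lower bound, for $\delta=ut$ with $u\in(0,1]$ fixed one has $\PR\big(N(t)\geq t+\delta\big)=\exp\big(-t\,h(u)+o(t)\big)$, so at $\delta=t$ the probability is $\exp\big(-(2\log 2-1)t+o(t)\big)$, which exceeds $2\exp(-t/2)$ for all large $t$; the constant $1/2$ in the exponent of the stated proposition cannot be attained uniformly on $\delta\in[0,t]$. The honest conclusions of your argument are $\PR\big(|N(t)-t|\geq\delta\big)\leq 2\exp\big(-\tfrac{3\delta^2}{8t}\big)$, or $2\exp\big(-\tfrac{\delta^2}{2(t+\delta/3)}\big)$, for $\delta\in[0,t]$. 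This weakening is immaterial to the paper: the proposition is used only to obtain \emph{some} exponential concentration in the tightness and fluid-limit arguments (e.g.\ Lemma~\ref{lem:mu-tight} and the proof of \FMSeq{enum:one}--\FMSeq{enum:six}), where any fixed positive constant in the exponent suffices. So either restate the bound with the weaker constant and keep your argument verbatim, or keep the Gaussian form only for the lower tail; but as written, the final ``elementary calculus check'' for the upper tail is false rather than routine.
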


\subsection{Tightness}


The first step in the proof of Theorem~\ref{thm:fms} is the following lemma,
which establishes tightness of the collection of measures associated with the
scaled system processes.

\begin{lemma}\label{lem:mu-tight}
  Under the hypotheses of Theorem~\ref{thm:fms}, for each $r \geq 1$, let
  $\mu^{(r)}$ denote the measure of $Z^{(r)}(\cdot) \in \bD[0,T]$. Then, the
  collection of measures $\{\mu^{(r)}\ :\ r \geq 1\}$ is tight.
\end{lemma}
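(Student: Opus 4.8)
The plan is to verify the two conditions of Theorem~\ref{thm:tight} directly for the measures $\mu^{(r)}$ on $\bD[0,T]$, treating the scaled process $\Zscr^{(r)}(\cdot)$ component by component. (I read the statement's ``$Z^{(r)}(\cdot)$'' as shorthand for the full descriptor $\Zscr^{(r)}(\cdot)$ of \eqref{eq:scaled-state}; tightness of the joint law follows once each coordinate process is tight, since finite products of tight families are tight.) The backbone of the argument is that \emph{all} coordinate processes are, up to asymptotically negligible fluctuations, uniformly Lipschitz in $t$ with a deterministic constant, and a family of uniformly Lipschitz paths with a uniformly bounded starting point is automatically tight: condition~(\ref{enum:tight-a}) holds because $\|\Zscr^{(r)}(t)\|_1 \le \|\Zscr^{(r)}(0)\|_1 + Mt$, and condition~(\ref{enum:tight-b}) holds because $w'_{\bdx}(\delta) \le M\delta \to 0$ uniformly. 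So the real work is to establish the near-Lipschitz bound for each component.

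First I would dispose of the ``easy'' coordinates. By the convergence of initial conditions \eqref{eq:init}, $Q^{(r)}(0)$ and $N^{(r)}(0)$ are bounded w.p.\ 1 for large $r$; and $\bar X^{(r)}_f(t) = r^{-1}\int_0^{rt} X_f(s)\,ds$ is genuinely Lipschitz with constant $C$ since $X_f \le C$, hence so is $A^{(r)}(t)$ (equal to $\bar X^{(r)}$ plus a martingale-type error handled below) and $D^{(r)}(t)$. For $\An^{(r)}(t) = r^{-1}\An(rt)$, the Poisson arrival process of rate $\nu_f$: by Proposition~\ref{prop:Poi} together with a union bound over a fine grid of $[0,T]$ and the triangle inequality, $\sup_{t\le T}|\An^{(r)}(t) - \nu_f t| \to 0$ w.p.\ 1 as $r\to\infty$ (a functional SLLN for the Poisson process), so this coordinate is within $o(1)$ of a Lipschitz path. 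The same Chernoff/union-bound argument controls the packet-generation counts $A_f$ and flow-departure counts $D_f$, which are driven by time-changed Poisson processes whose (random) rates $X_f(t)$, $\mu_f X_f(t)$ are themselves bounded by $C$ and $\mu_f C$; here I would write $A_f(rt) = \Pi^{\mathrm{pkt}}_f\!\big(\bar X_f(rt)\big)$ for a unit-rate Poisson process $\Pi^{\mathrm{pkt}}_f$ and invoke the SLLN for $\Pi^{\mathrm{pkt}}_f$ on $[0,CT]$ to get $\sup_{t\le T}\big|A^{(r)}_f(t) - \bar X^{(r)}_f(t)\big|\to 0$. For $S^{(r)}(\cdot)$, the invariant $\bone^\top S(\tau) = \tau$ (each slot uses exactly one schedule, counting the idle schedule $\bzero$) gives $\bone^\top S^{(r)}(t) = t + O(r^{-1})$ after the linear interpolation, so each $S^{(r)}_\pi$ is coordinatewise bounded by $t + O(r^{-1})$ and monotone — and a monotone family bounded by a Lipschitz envelope with bounded increments is tight; alternatively one shows oscillation control directly. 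Since the policy under consideration (or, in the algorithm-independent part, any policy) produces no idling, $Z^{(r)}(\cdot) \equiv \bzero$, which is trivially tight; in the fully general statement one would instead note $Z_e$ is non-decreasing and dominated using the $S$-bound via the queue equation.

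The main obstacle is the queue-length coordinate $Q^{(r)}(\cdot)$, because $Q(\tau)$ is \emph{not} a priori bounded by anything Lipschitz — it could in principle grow like the schedule counts. The plan is to use the vector dynamics \eqref{eq:prim-q}: $Q(\tau) = Q(0) - (I-R^\top)\Pi S(\tau) + (I-R^\top)Z(\tau) + \Gamma A(\tau)$, so after scaling $Q^{(r)}(t) = Q^{(r)}(0) - (I-R^\top)\Pi S^{(r)}(t) + (I-R^\top)Z^{(r)}(t) + \Gamma A^{(r)}(t)$. The terms $\Gamma A^{(r)}$ and $(I-R^\top)Z^{(r)}$ are already controlled (Lipschitz up to $o(1)$, resp.\ zero). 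For $(I-R^\top)\Pi S^{(r)}(t)$: although an individual $S^{(r)}_\pi$ need only be monotone, the total rate at which packets \emph{leave} each queue is at most $1$ per slot, so $\big(I-R^\top)\Pi S(\tau)\big)_e$ — the net cumulative departures from queue $e$ — increases by at most $1$ per slot plus the incoming transfers, giving $\big\|(I-R^\top)\Pi\big(S^{(r)}(t) - S^{(r)}(t')\big)\big\|_1 \le M(t-t') + O(r^{-1})$ for a constant $M$ depending only on $|\Escr|$ and $R$. Hence $Q^{(r)}$ is itself Lipschitz up to $o(1)$, which simultaneously closes the loop and gives the oscillation bound $w'_{Q^{(r)}}(\delta)\le M\delta + o(1)$. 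Concretely I would fix $\varepsilon>0$, choose the deterministic grid $\{i\delta\}$, bound $\max_i w_{Q^{(r)}}(t_{i-1},t_i)$ by $M\delta$ plus the accumulated Poisson SLLN errors, pick $\delta$ small and then $r$ large, and apply Theorem~\ref{thm:tight}; the same estimates give the uniform bound $\|\Zscr^{(r)}\|\le A$ with probability $\to 1$ for condition~(\ref{enum:tight-a}). I expect the bookkeeping of which constants are deterministic versus which are $o(1)$-random (and the union bound over $O(r)$ or $O(1/\delta)$ grid points, which needs the exponential tail in Proposition~\ref{prop:Poi} to beat the polynomial number of points) to be the only genuinely delicate part.
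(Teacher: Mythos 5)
Your proposal is correct and follows essentially the same route as the paper: verify conditions~(\ref{enum:tight-a}) and~(\ref{enum:tight-b}) of Theorem~\ref{thm:tight} coordinate-by-coordinate, using the deterministic Lipschitz bounds for $S^{(r)}$, $Z^{(r)}$, $\bar{X}^{(r)}$, the Poisson concentration of Proposition~\ref{prop:Poi} (with a union bound over a $\delta$-grid) for $\An^{(r)}$, $A^{(r)}$, $D^{(r)}$, and the balance equations \eqref{eq:prim-q} and \eqref{eq:prim-f} together with the convergent initial conditions \eqref{eq:init} to transfer these bounds to $Q^{(r)}$ and $N^{(r)}$. The only cosmetic difference is that the paper bounds $\sup_t\|Q^{(r)}(t)\|_1$ directly by $\|Q^{(r)}(0)\|_1+\|A^{(r)}(T)\|_1$ rather than through the near-Lipschitz increment estimate, which changes nothing of substance.
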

\begin{proof}
  We will establish tightness by verifying conditions (\ref{enum:tight-a}) and
  (\ref{enum:tight-b}) of Theorem~\ref{thm:tight}.

  First, consider condition (\ref{enum:tight-a}). It is sufficient
  to show that for any $\delta > 0$, there exist $K(\delta)$ and $r(\delta)$
  such that, for $K \geq K(\delta)$ and $r \geq r(\delta)$,
  \begin{equation}\label{eq:t1}
    \mu^{(r)}\Bigl(\bdx \in \bD[0,T]\ :\ \|\bdx\| \geq K\Bigr) \leq \delta.
  \end{equation}
  To establish this, fix $\delta > 0$. By definition,
  \[
  \begin{split}
    \| \Zscr^{(r)}(\cdot)\|
    & = \sup_{t\in [0,T]}\ \Big(\|Q^{(r)}(t)\|_1 + 
    \|Z^{(r)}(t)\|_1 + \|N^{(r)}(t)\|_1 + \|S^{(r)}(t)\|_1 + \|\bar{X}^{(r)}(t)\|_1  \\ 
    & \qquad\qquad\qquad + \|\An^{(r)}(t)\|_1 + \|D^{(r)}(t)\|_1 + \|A^{(r)}(t)\|_1
    \Big).
  \end{split}
  \]
  We will bound each component of the system process $\Zscr^{(r)}(\cdot)$.

  First, observe that for any $t \in [0,T]$, with probability $1$,
  \begin{equation}\label{eq:t3}
    \|Z^{(r)}(t)\|_1 + \|S^{(r)}(t)\|_1 + \|\bar{X}^{(r)}(t)\|_1
    \leq K_1 T,
  \end{equation}
  where the constant $K_1$ depends on system dimensions $|\Escr|$ and
  $|\Fscr|$ and on the maximum rate allocation $C$. This is because at most a
  unit amount of scheduling can be performed per unit time, and maximal rate
  allocated to any flow type is at most $C$. 

  Next, consider the term $\|A^{(r)}(t)\|_1$.  For each flow type $f \in
  \Fscr$, $A^{(r)}_f(T)$ is a Poisson process with a time-varying rate that is
  at most $C$. Therefore, $A^{(r)}_f(T)$ is bounded above by $1/r$ times the
  total number of events of a Poisson process of rate $C$ in time interval
  $[0,rT]$.  This number of events is distributionally equivalent to number of events of
  Poisson process of rate $1$ in interval $[0,rTC]$. Therefore, using
  Proposition~\ref{prop:Poi}, 
  \begin{equation}\label{eq:t4-a}
    \PR\left( A^{(r)}_f(T) \geq 2CT \right)  
    \leq 
    2 e^{-\tfrac{1}{2} r T C}.
  \end{equation}
  It follows that for $r$ sufficiently large, for any $f\in\Fscr$,
  \begin{equation}\label{eq:t4}
    \PR\left( A^{(r)}_f(T) \geq 2CT \right)
    \leq  \frac{\delta}{10 |\Fscr|}.
  \end{equation}
  Then, by the union bound,
  \begin{equation}\label{eq:t5}
    \PR\left( \|A^{(r)}(T)\|_1 \geq 2 |\Fscr|CT \right) 
    \leq  \frac{\delta}{10}.
  \end{equation}
  
  Next, consider term $\|Q^{(r)}(t)\|_1$. Note that
  \begin{equation}\label{eq:t6}
    \sup_{t\in [0,T]}\ \|Q^{(r)}(t)\|_1 
    \leq
    \|Q^{(r)}(0)\|_1 + \|A^{(r)}(T)\|_1.
  \end{equation}
  Now, by hypothesis, we have $Q^{(r)}(0)\to q(0)$ with probability $1$ as
  $r\to\infty$. Therefore, the collection of vectors $\{Q^{(r)}(0)\}$ is
  almost surely bounded, and thus there exists a constant $K_2$ so that, for
  $r$ sufficiently large,
  \begin{equation}\label{eq:t7}
    \PR\left(\|Q^{(r)}(0)\|_1\geq K_2 \right) \leq \frac{\delta}{10}.
  \end{equation}
  It follows that for a large enough constant $K_3$ (dependent on $K_2$, $C$,
  $T$, $|\Escr|$, $|\Fscr|$), and for $r$ sufficiently large,
  \begin{equation} \label{eq:t9}
    \PR\left(\sup_{t\in[0,T]}\ \|Q^{(r)}(t)\|_1 \geq K_3 \right) 
    \leq  \frac{\delta}{5}.
  \end{equation}

  Using very similar arguments (employing Proposition~\ref{prop:Poi}), it is
  possible to bound the Poisson processes $\|D^{(r)}(t)\|_1$ and
  $\|\An^{(r)}(t)\|_1$. This, in turn, will a lead to a bound on
  $\|N^{(r)}(t)\|_1$, since
  \[
  \sum_{t\in [0,T]}\ \|N^{(r)}(t)\|_1 \leq \|N^{(r)}(0)\|_1 + \|\An^{(r)}(T)\|_1.
  \]
  Therefore, there exists a constant $K_4$, so that for $r$ sufficiently large,
  \begin{equation} \label{eq:t10}
    \PR\left(\sup_{t\in[0,T]}\ 
      \|D^{(r)}(t)\|_1 + \|\An^{(r)}(t)\|_1 + \|N^{(r)}(t)\|_1
      \geq K_4 \right) 
    \leq  \frac{\delta}{10}.
  \end{equation}

  From the discussion above, equations \eqref{eq:t3}, \eqref{eq:t5},
  \eqref{eq:t9}, \eqref{eq:t10}, and union bound, it follows that for any
  $\delta > 0$, there exists constants $K(\delta)$ and $r(\delta)$ such that
  for $K \geq K(\delta)$ and $r\geq r(\delta)$, we have that
  \[
  \PR\left(\|\Zscr^{(r)}(\cdot)\| \geq K\right)  \leq \delta.
  \]
  This completes the verification of \eqref{eq:t1} or equivalently, condition
  (\ref{enum:tight-a}) of Theorem~\ref{thm:tight}.

  Next, consider condition (\ref{enum:tight-b}) of
  Theorem~\ref{thm:tight}. For this, it is sufficient to show that for any
  $\varepsilon > 0$, there exist $\delta(\varepsilon)$ and $r(\delta(\varepsilon))$ 
  so that for $r \geq r(\delta(\varepsilon))$,
  \[
  \PR\left(w_{\Zscr^{(r)}}^\prime(\delta(\varepsilon)) \geq \varepsilon \right) \leq \delta(\varepsilon),
  \]
  with $\delta(\varepsilon) \to 0$ as $\varepsilon \to 0$. 
  
  To bound $w_{\Zscr^{(r)}}^\prime(\delta)$, we need to find an appropriate
  $\delta$-sparse decomposition $\{t_0,t_1,\ldots,t_n\}$ of $[0,T]$. For this,
  we consider a natural decomposition: $t_i = i\delta$, for $0 \leq i < n$,
  and $t_n = T$, when $n = \lfloor T/\delta \rfloor$.  Then, it follows that
  $\delta \leq t_i - t_{i-1} \leq 2\delta$ for all $0 < i \leq n$.

  We wish to bound $\|\Zscr^{(r)}(s) - \Zscr^{(r)}(t)\|_1$, for $t_{i-1}
  \leq s, t \leq t_i$, for any $0 < i \leq n$.  To this end, first note that
  \begin{equation}\label{eq:t12}
    \begin{split}
      \| \Zscr^{(r)}(s)-\Zscr^{(r)}(t)\|_1 & =  
      \|Q^{(r)}(t)-Q^{(r)}(s)\|_1 
      + \|Z^{(r)}(t)-Z^{(r)}(s)\|_1 
      + \|N^{(r)}(t)-N^{(r)}(s)\|_1   
      \\
      & \quad
      + \|S^{(r)}(t) - S^{(r)}(s)\|_1
      + \|\bar{X}^{(r)}(t) - \bar{X}^{(r)}(s)\|_1 
      + \|\An^{(r)}(t) - \An^{(r)}(s)\|_1  
      \\
      & \quad
      + \|D^{(r)}(t) - D^{(r)}(s)\|_1 
      + \|A^{(r)}(t) - A^{(r)}(s)\|_1.
    \end{split}
  \end{equation}

  As noted earlier, the terms involving $Z^{(r)}(\cdot)$, $S^{(r)}(\cdot)$ and
  $\bar{X}^{(r)}(\cdot)$ are collectively upper bounded by $K_1 |t-s|$, with a
  system dependent constant $K_1$, since they are all Lipschitz
  continuous. Therefore, for $\delta \leq \varepsilon/(10 K_1)$, the sum of
  these terms in \eqref{eq:t12} is no more than $\varepsilon/10$ with
  probability $1$.  Next, we consider the remaining five terms in
  \eqref{eq:t12}.  As in the justification of condition (\ref{enum:tight-a}),
  we will have similar argument for all these of five terms. We will focus on
  the term corresponding to $Q^{(r)}(\cdot)$. From \eqref{eq:prim-q}, it
  follows that
  \[
  \begin{split}
    \|Q^{(r)}(t)-Q^{(r)}(s)\|_1 & \leq 
    \big\|(I-R^\top) \Pi \big(S^{(r)}(t)-S^{(r)}(s)\big)\big\|_1 
    + \big\|(I-R^\top) \big(Z^{(r)}(t)-Z^{(r)}(s)\big)\big\|_1
    \\
    & \quad + \|A^{(r)}(t)-A^{(r)}(s)\|_1. 
  \end{split}
  \]
  Now the first two terms are bounded by $K_5 | t-s|$, with the constant $K_5$
  dependent on $|\Sscr|$, $|\Escr|$, and $|\Fscr|$.  To see this, note that
  both $S^{(r)}(\cdot)$ and $Z^{(r)}(\cdot)$ are Lipschitz continuous, and the
  matrices $(I-R^\top)\Pi$ and $(I-R^\top)$ are finite dimensional (with
  dimension dependent on $|\Sscr|$, $|\Escr|$, and $|\Fscr|$), and with each
  entry bounded by constants. It follows that by choosing $\delta \leq
  \varepsilon/(10 K_5)$, we have that
  \[
    \big\|(I-R^\top) \Pi \big(S^{(r)}(t)-S^{(r)}(s)\big)\big\|_1 
    + \big\|(I-R^\top) \big(Z^{(r)}(t)-Z^{(r)}(s)\big)\big\|_1
    \leq
    \frac{\varepsilon}{10},
  \]
  with probability $1$.  Now, to bound the contribution of term
  $\|A^{(r)}(t)-A^{(r)}(s)\|_1$, we can utilize arguments used for obtaining
  \eqref{eq:t5} with $T$ replaced by $|t-s|$. Note that here we need to use
  `memory-less' property of the Poisson process crucially. As a conclusion, we
  obtain that there exists a constant $K_6$ so that is $\delta \leq
  \varepsilon/(10 K_6)$, then for sufficiently large $r$,
  \begin{equation}\label{eq:t13}
    \PR\left(\|A^{(r)}(t)-A^{(r)}(s)\|_1
      \leq \varepsilon/10\right) \geq 1-\frac{\delta^2}{10 T}.
  \end{equation}

  From the above discussion, it follows that for any interval $[t_{i-1},t_i]$,
  as long as we choose $\delta$ sufficiently small and $r$ sufficiently large,
  \begin{equation}\label{eq:t14}
    \PR\left(\sup_{t_{i-1}\leq s, t \leq t_i}\ 
      \|Q^{(r)}(t) - Q^{(r)}(s)\|_1 \leq \varepsilon/5\right) \geq  
    1-\frac{\delta^2}{10 T}. 
  \end{equation}
  In a very similar manner (using Proposition~\ref{prop:Poi}), we obtain the
  following: there exists a constant $K_7$, so that if $\delta = 
  \varepsilon/K_7$ and $r$ is sufficiently large, for any $0 < i \leq n$,
  \begin{align}
    \PR\left(
      \sup_{t_{i-1}\leq s, t \leq t_i} \|N^{(r)}(t) - N^{(r)}(s)\|_1 
      \leq \varepsilon/10\right) & \geq 1-\frac{\delta^2}{10 T},
    \label{eq:t15} 
    \\
    \PR\left(
      \sup_{t_{i-1}\leq s, t \leq t_i} \|\An^{(r)}(t) - \An^{(r)}(s)\|_1 
      \leq \varepsilon/10\right) & \geq 1-\frac{\delta^2}{10 T}, 
    \label{eq:t16} 
    \\
    \PR\left(
      \sup_{t_{i-1}\leq s, t \leq t_i} \|A^{(r)}(t) - A^{(r)}(s)\|_1 
      \leq \varepsilon/10\right) & \geq 1-\frac{\delta^2}{10 T}, 
    \label{eq:t17} 
    \\
    \PR\left(
      \sup_{t_{i-1}\leq s, t \leq t_i} \|D^{(r)}(t) - D^{(r)}(s)\|_1 
      \leq \varepsilon/10\right) & \geq 1-\frac{\delta^2}{10 T}. 
    \label{eq:t18}
  \end{align}

  From \eqref{eq:t12}-\eqref{eq:t18} and the discussion above, it follows that
  there exists a constant $K'$ so that is $\varepsilon \leq K$ and $r$ is
  sufficiently large, by a union bound over at most $T/\delta$ intervals in
  the partition, we obtain that
  \[
  \PR\left(
    \max_{i} \sup_{t_{i-1}\leq s, t \leq t_i}\
    \|\Zscr^{(r)}(t) - \Zscr^{(r)}(s)\|_1
    \leq \varepsilon\right)  \geq  1-\delta.
  \]
  The result follows.
\end{proof}

\subsection{Proof of Theorem~\ref{thm:fms}}

We are now ready to prove Theorem~\ref{thm:fms}.

Given that the collection of measures $\{\mu^{(r)}\ :\ r \geq 1\}$ is tight,
for any sequence $\{r_k\ :\ k\in\N \}\subset\R$ with $r_k \to \infty$ as
$k\to\infty$, there exists a further subsequence $\{r_{k_\ell}\}$ and limit
point $\mu^{(\infty)}$ such that $\mu^{(r_{k_\ell})}$ converges weakly to
$\mu^{(\infty)}$ as $\ell \to\infty$.  By restricting to this subsequence,
assume that $\mu^{(r_k)} \Rightarrow \mu^{(\infty)}$. Since these measures are
defined on a Polish space, by the Skorohod representation theorem, there
exists a probability space over which we can define, for all $k$, random
variables $\Zscr^{(r_k)}(\cdot)$ and $\zscr(\cdot)$ that are distributed
according to $\mu^{(r_k)}$ and $\mu^{(\infty)}$, respectively, and where the
$\Zscr^{(r_k)}(\cdot)$ almost surely converges to $\zscr(\cdot)$, in the
Skorohod metric. We will use this setting to argue that the limiting random
variable $\zscr(\cdot)$ satisfies the appropriate fluid model equations with
probability $1$.  Subsequently, we will establish that under an arbitrary
control policy, $\mu^{(\infty)}\big(\FMS(T)\big) = 1$, while under the \MWUMa
policy, $\mu^{(\infty)}\big(\FMS^\alpha(T)\big) = 1$.  Since $\mu^{(r_k)}
\Rightarrow \mu^{(\infty)}$, from definition of weak convergence, it follows
that, under an arbitrary control policy, $\liminf_{k\to \infty}
\mu^{(r_k)}(\FMS_\varepsilon(T)) = 1$, and under the \MWUMa policy,
$\liminf_{k\to \infty} \mu^{(r_k)}(\FMS^\alpha_\varepsilon(T)) = 1$, for any
$\varepsilon > 0$. This will imply the desired result.

To this end, we start by establishing that $\mu^{(\infty)}\big(\FMS(T)\big) =
1$. That is, we need to show that equations
\FMSeq{enum:one}-\FMSeq{enum:eight} are satisfied.

\medskip

\noindent{\em \FMSeq{enum:one}, \FMSeq{enum:two}, \FMSeq{enum:five}:}
We start with \FMSeq{enum:one}.  Among the components of
$\Zscr^{(r_k)}(\cdot)$, $Z^{(r_k)}(\cdot), S^{(r_k)}(\cdot)$ and
$\bar{X}^{(r_k)}(\cdot)$ are Lipschitz continuous by construction over
$[0,T]$. Since $\Zscr^{(r_k)}(\cdot)$ converges almost surely to
$\zscr(\cdot)$ with respect to the Skorohod metric $\bd$, it follows that the
corresponding components of $\zscr(\cdot)$, $z(\cdot), s(\cdot)$ and
$\bar{x}(\cdot)$, are Lipschitz continuous. Equivalently, this follows by
the Arzel\`{a}-Ascoli theorem.

Now, to establish Lipschitz continuity of the other components of $\zscr$ we
will use the following result:

\begin{lemma}\label{lem:PoiLip}
  Given a fixed $T > 0$, consider a Poisson process $\Pscr$ with time-varying
  rate given by $\gamma(t)$, for $t \geq 0$. Assume there exists constant $K >
  0$ such that $\gamma(t) \in [0,K]$ for all $t \geq 0$ and that $\gamma(t)$
  depends only on events that happen up to time $t$. Consider a sequence
  $\{\theta_i\}\subset\Rp$ with $\theta_i \to \infty$ as
  $i\to\infty$.  Define the scaled process
  \[
  \Pscr^i(t) = \frac{1}{\theta_i} \Pscr(\theta_i t),
  \]
  for $t \in [0,T]$. Also, define the processes
  \[
  \gamma^i(t) = \frac{1}{\theta_i} \gamma(\theta_i t),
  \quad \text{and} \quad 
  \bar{\gamma}^i(t) = \int_{0}^t \gamma^i(s)\, ds,
  \]
  for $t\in [0,T]$, where we assume $\bar{\gamma}^i(\cdot)$ is well-defined.
  Assume that $\Pscr^i(\cdot)$ converges weakly to $\Pscr^\infty(\cdot)$.
  Then, the sample paths (over $[0,T]$) of $\Pscr^\infty(\cdot)$ are Lipschitz
  continuous with probability $1$. Further, assume that
  $\bar{\gamma}^i(\cdot)$ converges (u.o.c.) to $\bar{\gamma}(\cdot)$ over
  $[0,T]$. Then, $\Pscr^\infty(t) = \bar{\gamma}(t)$, for all $t\in [0,T]$.
\end{lemma}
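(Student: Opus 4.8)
The plan is to prove both assertions by the standard route of decomposing the (scaled) point process into its compensator plus a martingale, which transfers the problem to the subspace $C[0,T]$ of continuous paths, where the Skorohod and uniform topologies coincide and Arzel\`a--Ascoli is available. Since $\Pscr$ is a point process with predictable intensity $\gamma(\cdot)\in[0,K]$, the compensated process $M(t)\defeq\Pscr(t)-\int_0^t\gamma(u)\,du$ is a martingale for the underlying filtration, with predictable quadratic variation $\int_0^t\gamma(u)\,du$, so $\E[M(t)^2]\le Kt$. Let $\bar{\gamma}^i$ denote the compensator of $\Pscr^i$ (the scaled $\int_0^\cdot\gamma$, i.e.\ exactly the $\bar{\gamma}^i$ in the statement), and set $M^i\defeq\Pscr^i-\bar{\gamma}^i$. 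Time-scaling by $\theta_i$, the process $M^i$ is again a martingale and $\E[M^i(t)^2]\le Kt/\theta_i$, so Doob's $L^2$ maximal inequality gives $\E\big[\sup_{t\in[0,T]}M^i(t)^2\big]\le 4KT/\theta_i\to 0$. Hence $\sup_{t\in[0,T]}|M^i(t)|\to 0$ in probability, and in particular $\bd\big(\Pscr^i(\cdot),\bar{\gamma}^i(\cdot)\big)\le\sup_t|M^i(t)|\to 0$ in probability.

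For the Lipschitz claim, note that $\bar{\gamma}^i$ is, path by path, non-decreasing, vanishes at $0$, and has derivative $\gamma(\theta_i\cdot)\in[0,K]$; thus it is Lipschitz with constant $K$ uniformly in $i$ and in the sample point, and bounded by $KT$. By Arzel\`a--Ascoli the collection of such functions is a compact (and closed) subset of $C[0,T]$ under the uniform norm, so the laws of the $\bar{\gamma}^i(\cdot)$ form a tight family on $\bD[0,T]$ all of whose subsequential weak limits are supported on $K$-Lipschitz functions. Since $\bd(\Pscr^i,\bar{\gamma}^i)\to 0$ in probability and $\Pscr^i\Rightarrow\Pscr^\infty$ by hypothesis, the ``converging together'' lemma yields $\bar{\gamma}^i\Rightarrow\Pscr^\infty$ as well, and therefore the law of $\Pscr^\infty$ is supported on $K$-Lipschitz functions; that is, the sample paths of $\Pscr^\infty(\cdot)$ are Lipschitz continuous with probability $1$.

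For the second assertion, the additional hypothesis $\bar{\gamma}^i(\cdot)\to\bar{\gamma}(\cdot)$ uniformly on $[0,T]$, together with $\sup_t|M^i(t)|\to 0$ in probability, gives $\sup_{t\in[0,T]}|\Pscr^i(t)-\bar{\gamma}(t)|\to 0$ in probability, hence $\Pscr^i(\cdot)\Rightarrow\bar{\gamma}(\cdot)$. Uniqueness of weak limits identifies $\Pscr^\infty(\cdot)$ with $\bar{\gamma}(\cdot)$ in law, and on the probability space supplied by the Skorohod representation theorem used throughout the proof of Theorem~\ref{thm:fms} the relevant convergences are almost sure, so $\Pscr^\infty(t)=\bar{\gamma}(t)$ for all $t\in[0,T]$.

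The point requiring care — and the reason for routing the argument through the compensator rather than working with $\Pscr^i$ directly — is that in the Skorohod topology the evaluation (and increment) functionals are not continuous, so the pointwise near-linear growth of $\Pscr^i$ cannot simply be passed through $\Pscr^i\Rightarrow\Pscr^\infty$; moreover each $\Pscr^i$ has jumps of size $1/\theta_i$ and is itself never Lipschitz. Both obstructions vanish once $\Pscr^i$ is replaced, up to a uniformly negligible martingale error, by the continuous, uniformly $K$-Lipschitz compensators $\bar{\gamma}^i$, to which the uniform topology and Arzel\`a--Ascoli apply. The only external inputs are Doob's maximal inequality and the compensator/predictable-quadratic-variation description of a bounded-intensity point process; alternatively, the estimate $\sup_t|M^i(t)|\to 0$ can be obtained from the concentration bound of Proposition~\ref{prop:Poi} via a thinning coupling of $\Pscr$ with a homogeneous rate-$K$ Poisson process.
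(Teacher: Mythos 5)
Your proof is correct, and it takes a genuinely different route from the one sketched in the paper. The paper argues via the concentration bound of Proposition~\ref{prop:Poi} (to show the scaled sample paths are approximately Lipschitz), a variation of the Arzel\`a--Ascoli theorem to conclude that limit points are Lipschitz, and then the random time-change representation of a Poisson process with bounded time-varying rate together with the functional strong law of large numbers for a unit-rate Poisson process on $[0,KT]$ to identify the limit with $\bar{\gamma}$. You instead decompose $\Pscr^i$ into its compensator $\bar{\gamma}^i$ plus the martingale $M^i$, kill $M^i$ uniformly via $\E[\langle M\rangle(t)]\leq Kt$ and Doob's $L^2$ maximal inequality, and then transfer everything to the uniformly $K$-Lipschitz, continuous compensators, where Arzel\`a--Ascoli and the converging-together lemma give both the Lipschitz support of the limit law and the identification $\Pscr^\infty=\bar{\gamma}$. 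What your approach buys is a single quantitative estimate ($\sup_t|M^i(t)|\to 0$ in $L^2$) that handles both assertions at once and sidesteps the time-change/FSLLN step and the ``approximately Lipschitz'' limit argument, at the cost of invoking the predictable-intensity/compensator machinery (which is where the hypothesis that $\gamma(t)$ depends only on the past is used --- the same role it plays in the paper's coupling); as you note, your martingale bound could equally be replaced by Proposition~\ref{prop:Poi} via a thinning coupling, which would essentially recover the paper's argument. The only point to state slightly more carefully is the final pathwise identification: since $\bar{\gamma}$ is random, equality in law alone does not give $\Pscr^\infty(t)=\bar{\gamma}(t)$ pointwise, but on the common Skorohod-representation space one has $\bar{\gamma}^i\to\bar{\gamma}$ u.o.c.\ and $\Pscr^i\to\Pscr^\infty$ a.s., and $\sup_t|M^i(t)|\to 0$ in probability upgrades to a.s.\ along a subsequence, which yields the a.s.\ pathwise equality exactly as the lemma is used in the proof of Theorem~\ref{thm:fms}.
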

\begin{proof}
  This is a well-known property of Poisson processes. We describe the key
  steps of the proof. First, using the concentration property of
  Proposition~\ref{prop:Poi}, it can be established that the sample paths of
  scaled Poisson process are approximately Lipschitz.  Second, using a
  variation of the Arzel\`{a}-Ascoli theorem (e.g., see
  \cite[Lemma~4.2]{Bramson98} or \cite[Lemma~6.3]{YOY05}), it can be
  established that the limit points of such approximately Lipschitz sample
  paths are in fact Lipschitz. Finally, note the fact that number of events
  under a Poisson process with time-varying rate $\gamma(\cdot)$ over time
  interval $[0,t]$ is distributionally equivalent to number of events under a
  unit rate Poisson process over time interval $[0, \int_0^t \gamma(s)\,
  ds]$. As long as $\gamma(\cdot)$ is uniformly bounded by some constant, say
  $K$, the functional strong law of large numbers for scaled unit rate Poisson
  process over $[0, KT]$ can be used to obtain the final desired claim. An
  interested reader may find the details to this argument in many places in
  literature (e.g., \cite[Appendix~A]{Massoulie07}).
\end{proof}
Now consider components $\An^{(r_k)}(\cdot)$, $A^{(r_k)}(\cdot)$, and
$D^{(r_k)}(\cdot)$.  By their construction, these are Poisson processes with
possibly time-varying rates that are always uniformly bounded. These processes
converge (over $[0,T]$) to the corresponding components $\an(\cdot), a(\cdot)$
and $d(\cdot)$ of $\zscr(\cdot)$. Therefore, by immediate application of
Lemma~\ref{lem:PoiLip}, we obtain that $\an(\cdot), a(\cdot)$ and $d(\cdot)$
are Lipschitz continuous. 

Finally, the Lipschitz continuity of $n(\cdot)$ and $q(\cdot)$ is established
if \FMSeq{enum:two} and \FMSeq{enum:five} hold --- this is because all of the
other components of $\zscr(\cdot)$ are Lipschitz continuous.  To this end,
recall that equations \eqref{eq:prim-f} and \eqref{eq:prim-q} are satisfied by
scaled system $\Zscr^{(r_k)}$ for all $t \in [0,T]$ by definition.  These
equation are preserved under the almost sure convergence $\Zscr^{(r_k)}(\cdot)
\to \zscr(\cdot)$.  Thus, \FMSeq{enum:two} and \FMSeq{enum:five} are
satisfied.

\medskip

\noindent{\em \FMSeq{enum:three}, \FMSeq{enum:four}, \FMSeq{enum:six}:}
These equations follow immediately by applying the later part of
Lemma~\ref{lem:PoiLip} for Poisson process (possibly time-varying)
$\An^{(r_k)}(\cdot), A^{(r_k)}(\cdot)$ and $D^{(r_k)}(\cdot)$.

\medskip

\noindent{\em \FMSeq{enum:seven}, \FMSeq{enum:eight}:}
Among remaining equations, first note that \FMSeq{enum:eight} follows because
$Z^{(r_k)}(\cdot), S^{(r_k)}(\cdot)$ and $\bar{X}^{(r_k)}(\cdot)$ are
non-decreasing and this property is preserved under the almost sure
convergence of $\Zscr^{(r_k)}(\cdot)$ to $\zscr(\cdot)$. A similar argument
establishes \FMSeq{enum:seven}.

\medskip

Now, consider a system that operates under the \MWUMa control policy. We wish
to establish that $\mu^{(\infty)}\big(\FMS^\alpha(T)\big) = 1$. This involves
further demonstrating that \FMSeq{enum:flow}--\FMSeq{enum:idling} are
satisfied.

\medskip

\noindent{\em \FMSeq{enum:flow}:} 
Consider any fixed flow type $f \in \Fscr$ and any regular point $t \in (0,T)$.
Now, if $n_f(t) = 0$, then it must be that $\dot{n}_f(t) = 0$. This is because
of the following argument, utilizing non-negativity of $n_f(\cdot)$: suppose
that either $\dot{n}_f(t) > 0$ or $\dot{n}_f(t) < 0$. Then, there exist times
$t^-<t$ or $t^+>t$ such that $n_f(t^{-}) < 0$ or $n_f(t^{+}) < 0$ --- this is
a contradiction. Given $\dot{n}_f(t) =
0$, by \FMSeq{enum:two} we have $\dot{\an}_f(t) = \dot{d}_f(t)$. By
\FMSeq{enum:three} and \FMSeq{enum:four}, it immediately follows that $x_f(t)
= \dot{\bar{x}}_f(t) = \nu_f/\mu_f$.

Now, suppose $n_f(t) > 0$. Recall that  $N_f^{(r_k)}(\cdot)$ converges
to $n_f(\cdot)$ as $k \to \infty$ under the Skorohod
metric. Therefore, it follows that there exists a $\delta > 0$ such that, for
$k$ sufficiently large, $N_f^{(r_k)}(s) \geq \delta$ for all $s \in [t-\delta,
t+\delta]$. We will consider $k$ to be sufficiently large for this to
hold. Since $t$ is a regular point, $\bar{x}_f(t)$ is differentiable. Consider
any $0<\varepsilon < \delta$.  Then, using the fact that $N_f^{(r_k)}(s) > 0$
for $s \in [t-\delta, t+\delta]$ and the radial invariance property of rate
allocation policy, we obtain
\begin{equation}\label{eq:fleq1}
  \begin{split}
    \bar{X}_f^{(r_k)}(t+\varepsilon) - \bar{X}_f^{(r_k)}(t) 
    & = \frac{1}{r_k} \int_{r_k t}^{r_k t+r_k\varepsilon} X_f(s)\, ds
    \\
    & = \frac{1}{r_k} \int_{r_k t}^{r_k t+r_k \varepsilon} 
    \left(
      \argmax_{x\in [0,C]}\ 
      \frac{x^{1-\alpha} \big(N_f^{(r_k)}\big)^\alpha}{1-\alpha} - 
      \big(Q_{\iota(f)}^{(r_k)}\big)^\alpha \, x\right)\, ds.
  \end{split}
\end{equation}
Define the function $\fR \colon (0,\infty)\times \Rp \to [0,C]$ by 
\[
\fR(n, q) \defeq \argmax_{x \in [0,C]}\ \frac{x^{1-\alpha} n^\alpha}{1-\alpha} - q^\alpha x.
\]
It can be easily checked that 
\[
\fR(n,q) = 
\begin{cases}
n/q & \text{if $n < Cq$,}
\\
C & \text{otherwise.}
\end{cases}
\]
Therefore, it follows
that $\fR$ is a continuous function. Further, $N_f^{(r_k)}(\cdot)$ and
$Q_{\iota(f)}^{(r_k)}(\cdot)$ are continuous as functions of time. Therefore,
treating $\fR\big(N_f^{(r_k)}(\cdot), Q_{\iota(f)}^{(r_k)}(\cdot)\big)$ as a
function of time, it is continuous and takes values in $[0,C]$. Over the
bounded interval $[t, t+\varepsilon]$, it must achieve a minimum and a
maximum, which we will denote by $\fR_{\min}(k, t, \varepsilon)$ and
$\fR_{\max}(k, t, \varepsilon)$, respectively. From \eqref{eq:fleq1}, it follows
that
\begin{equation}\label{eq:fleq2}
\fR_{\min}(k, t, \varepsilon)  \leq \frac{\bar{X}_f^{(r_k)}(t+\varepsilon) - \bar{X}_f^{(r_k)}(t)}{\varepsilon}  
\leq \fR_{\max}(k, t, \varepsilon). 
\end{equation}
Now, since $\big(\bar{X}^{(r_k)}(\cdot), N^{(r_k)}(\cdot), Q^{(r_k)}(\cdot))$
converges to $\big(\bar{x}(\cdot), n(\cdot), q(\cdot)\big)$ as $k \to \infty$,
it follows (due to the appropriate continuity of $\fR_{\min}$, and $\fR_{\max}$)
that
\begin{equation}\label{eq:fleq3}
\fR_{\min}(t, \varepsilon) 
\leq 
\frac{\bar{x}_f(t+\varepsilon) - \bar{x}_f(t)}{\varepsilon}  
\leq 
\fR_{\max}(t, \varepsilon).
\end{equation}
Here, $\fR_{\min}(t, \varepsilon)$ and $\fR_{\max}(t, \varepsilon)$ correspond to
the minima and maxima of $\fR\big(n(\cdot),q(\cdot)\big)$ over
$[t,t+\varepsilon]$.  Now, taking $\varepsilon \to 0$ in \eqref{eq:fleq3},
invoking the continuity of $\big(n(\cdot), q(\cdot)\big)$ and subsequently of
$\fR$, and recalling that $t$ is a regular point, we obtain
\[
x_f(t) = \dot{\bar{x}}_f(t)  =  \argmax_{x\in [0,C]}\ 
\frac{x^{1-\alpha} n_f^\alpha(t)}{1-\alpha} - 
q_{\iota(f)}^\alpha(t) \, x,
\]
when $n_f(t) > 0$.

\medskip

\noindent{\em \FMSeq{enum:packet}:}
Suppose $t\in (0,T)$ is a regular point and consider a schedule $\pi \in
\Sscr$. Assume there exists a schedule $\sigma\in\Sscr$ with
\begin{equation}\label{eq:fleq6}
\pi^\top (I - R) q^\alpha(t) 
< \sigma^\top (I - R) q^\alpha(t).
\end{equation}
We wish to establish that $\dot{s}_\pi(t) = 0$. Since $Q^{(r_k)}(\cdot)$
converges to $q(\cdot)$ as $k \to \infty$ the inequality \eqref{eq:fleq6} is
strict. It follows that there exists $\delta > 0$ such that for all $k$
sufficiently large and for all $s \in [t-\delta, t+\delta]$,
\[
\pi^\top (I - R) \big(Q^{(r_k)}(s)\big)^\alpha 
< 
\sigma^\top (I - R) \big(Q^{(r_k)}(s)\big)^\alpha.
\]
Then, for unscaled system, the weight of schedule $\pi$ is strictly less than
the weight of schedule $\sigma$ throughout the time-interval $[r_k (t-\delta),
r_k (t+\delta)]$.  Therefore, as per the \MWUMa scheduling policy, the
schedule $\pi$ is never chosen in this time period. That is, for the scaled
system, we have
\[
S^{(r_k)}_\pi(t-\delta) = S^{(r_k)}_\pi(t+\delta). 
\]
Then, as $k\to\infty$
\[
s_\pi(t+\delta) - s_\pi(t-\delta) = 0.
\]
Thus, $\dot{s}_\pi(t) = 0$.

\medskip

\noindent{\em \FMSeq{enum:nzero}:} 
Consider a regular point $t\in (0,T)$ with $n_f(t)=0$, for some
$f\in\Fscr$. By \FMSeq{enum:flow}, we have $x_f(t) = \rho_f$. Suppose that
$q_{\iota(f)}(t) > 0$.  For any $\varepsilon_1,\varepsilon_2>0$, there must exist $\delta>0$ so that, if $s
\in (t-\delta, t+\delta)$, 
\[
q_{\iota(f)}(s) \geq \varepsilon_1,
\quad
\text{and}
\quad
n_f(s) \leq \varepsilon_2.
\]
Therefore, if $k$ is sufficiently large, we must have 
\[
Q^{(r_k)}_{\iota(f)}(s) \geq \varepsilon_1/2,
\quad
\text{and}
\quad
N^{(r_k)}_f(s) \leq 2\varepsilon_2.
\]
Equivalently for the unscaled system,  
\[
Q_{\iota(f)}(r_k s) \geq r_k\varepsilon_1/2,
\quad
\text{and}
\quad
N_f(r_k s) \leq 2r_k\varepsilon_2.
\]
This that, for $k$ sufficiently  large and for all $s\in (t-\delta,t+\delta)$, the rate allocation in the unscaled system must satisfy
\[
X_f(r_k s) \leq \frac{4 \varepsilon_2}{\varepsilon_1}.
\]
This can be made smaller that $\rho_f/2$ by the appropriate choice of
$\varepsilon_2$, and this contradicts the fact that $x_f(t) =
\rho_f$. Therefore, it must be that $q_{\iota(f)}(t) = 0$.

\medskip

\noindent{\em \FMSeq{enum:idling}:} This follows in
a straightforward manner from the  invariant (for the unscaled system) that 
$Z(\tau) = \bzero$ for all $\tau\in\Zp$.

\end{document}